\newif\ifhideproofs
\newtheorem{lemma}{Lemma}
\newtheorem{theorem}{Theorem}
\newtheorem{corollary}{Corollary}
\newtheorem{definition}{Definition}
\newtheorem{example}{Example}
\newtheorem{proposition}{Proposition}
\title{Two-Sided Manipulation Games in Stable Matching Markets}
\author{
    \textbf{Hadi Hosseini} \\ 
    Penn State University\\ 
    \texttt{hadi@psu.edu}
    \and
    \textbf{Grzegorz Lisowski}\\
    AGH University of Science\\ and Technology \\ 
    \texttt{glisowski@agh.edu.pl} 
    \and 
    \textbf{Shraddha Pathak}\\
    Penn State University\\ \texttt{ssp5547@psu.edu}
}
\begin{document}

\newcommand{\shraddha}[1]{\textcolor{blue}{\textbf{Shraddha:} #1}}
\newcommand{\shraddhanew}[1]{\textcolor{blue}{#1}}

\newcommand{\grzesiek}[1]{\textcolor{red}{\textbf{Grzesiek:} #1}}
\newcommand{\grzegorznew}[1]{\textcolor{red}{#1}}

\newcommand{\DA}{\textsc{DA}}

\newcommand{\NE}{\textsc{NE}}
\newcommand{\NSP}{\textsc{NSP}}
\newcommand{\PoA}{\textsc{PoA}}
\newcommand{\PoS}{\textsc{PoS}}
\newcommand{\stratprof}{\textbf{p}}
\newcommand{\stratprrof}{\textbf{p'}}
\renewcommand{\>}{\succ}

\date{}
\maketitle

\begin{abstract}

The Deferred Acceptance ($\DA$) algorithm is an elegant procedure for finding a stable matching in two-sided matching markets. It ensures that no pair of agents prefers each other to their matched partners. In this work, we initiate the study of two-sided manipulations in matching markets as non-cooperative games. We introduce the \emph{accomplice manipulation game}, where a man misreports to help a specific woman obtain a better partner,  whenever possible. We provide a polynomial time algorithm for finding a pure strategy Nash equilibrium ($\NE$) and show that our algorithm always yields a stable matching---although not every Nash equilibrium corresponds to a stable matching. Additionally, we show how our analytical techniques for the accomplice manipulation game can be applied to other manipulation games in matching markets, such as one-for-many and the standard self-manipulation games. We complement our theoretical findings with empirical evaluations of different properties of the resulting $\NE$, such as the welfare of the agents.

\end{abstract}

\section{Introduction}
Many real-world markets, from school choice \citep{APR05new,APR+05boston} and resident matching \citep{roth1999redesign} to ride-sharing platforms \citep{banerjee2019ride} and recommendation systems \citep{eskandanian2020using}, are inherently markets with two sides. In these markets, there are two disjoint sets of agents, each with preferences over agents on the other side; for example, drivers and passengers in ride-sharing platforms, or prospective students and schools are in two separate sides of the market.\footnote{For the ease of exposition and consistency with the plethora of work in matching markets, in this paper, we refer to the two disjoint sets of agents as men and women.}
In such a scenario, the primary objective is to find a \emph{stable} matching, i.e., a matching in which no pair of agents prefer one another to their matched partners.

The Deferred Acceptance ($\DA$) algorithm \citep{gale1962college} is a celebrated mechanism that runs in rounds of proposals and rejections to output a stable matching. While $\DA$ is strategyproof for the proposing agents (i.e., men) \citep{dubins1981machiavelli,huang2006cheating}, it is known to be susceptible to strategic misreporting by the proposed-to side (i.e., women) \citep{roth1999truncation,teo2001gale}. In fact, not only $\DA$, but all stable matching mechanisms are prone to  strategic manipulation \citep{roth1982economics}. 

The simplicity of the $\DA$ algorithm has prompted studies on the different possible types of manipulations, with a major focus on \emph{one-sided} manipulations, where coalitions of misreporting agents are from the same side of the market \citep{dubins1981machiavelli,huang2006cheating}. Such manipulations are observed in the real world. For instance, drivers on ride-hailing platforms collude to create an artificial shortage, triggering surge pricing. 
In addition, often the coalition of manipulators contains agents from \textit{both} sides of the market: drivers sometimes collude with passengers by opting for `offline rides' upon arrival. Agents on one side of the market may also indirectly influence the behavior of those on the other side; for instance, drivers influence the behavior of the riders by moving further away \citep{banerjee2019ride}, or schools influence the behavior of `undesirable' students by appearing less attractive by hiking certain fees or mandating uniforms \citep{HKN16improving}.

%\shraddha{Or, the drivers may indirectly influence the behavior of the riders by moving further away \cite{banerjee2019ride}! There is also evidence in school admission showing that schools influence the behavior of `undesirable' students by appearing less attractive by hiking certain fees or mandating uniforms \cite{HKN16improving}}  %\HH{is this an example of two-sided manipulation, or an example for Participation Constraint (individual rationality)?}
%Here, the coalition of manipulators contain agents from both sides of the market. %This is also observed in school choice where schools strategically portray themselves as undesirable to certain classes of students (fee hikes exclude economically marginalized populations). 

Such examples have motivated the study of coalitions consisting of agents from both sides of the market, i.e., \emph{two-sided} manipulations, in the recent past \citep{BH19partners,hosseini2021accomplice,hosseini2022twoforone}. 
Specifically, \citeauthor{BH19partners}~[\citeyear{BH19partners}] and \citeauthor{hosseini2021accomplice}~[\citeyear{hosseini2021accomplice}]
consider \emph{manipulation via an accomplice} in the $\DA$ algorithm where an accomplice man may misreport his preferences to help a specific woman, without harming himself. Or, a man may misreport his preferences to help \emph{all} women on the other side \citep{hosseini2022twoforone}.

% Specifically, \citet{BH19partners,hosseini2021accomplice} consider \emph{manipulation via an accomplice} in the $\DA$ algorithm where an accomplice man may misreport his preferences to help a specific woman without harming himself. Further, \citet{hosseini2022twoforone} consider \emph{one-for-all} manipulations where a man misreports his preferences to help \emph{all} women.%, and \emph{two-for-one} manipulations in which both, a man and a woman, misreport to improve the woman's outcome.

These studies focus on finding the \emph{optimal misreport} and understanding their \emph{effect on stability} of the resulting matching. However, they do not consider the potential subsequent strategic incentives that they create or the effect of uncoordinated simultaneous misreports by multiple agents. 
Such strategic behavior by multiple coalitions of agents gives rise to two-sided manipulations as non-cooperative games in matching markets, raising the questions of whether these games have an equilibrium, and if they do, what their properties are.
%\HH{I am not sure if this example says what it is intended to: }
In the ride-hailing example, although offline rides may initially benefit individual driver-passenger pairs, this advantage encourages more pairs to engage in manipulation, ultimately reducing the ride-hailing company's profit margins and resulting in higher prices for passengers. %\shraddha{Fix}
% We contribute to this thread of research by studying two-sided manipulations in matching markets as non-cooperative games. 
%\shraddha{Add a question}

% These studies focus on finding the \emph{optimal misreport} and understanding their \emph{effect on stability} of the resulting matching. However, they do not consider the potential subsequent strategic incentives that they create or the effect of uncoordinated simultaneous misreports by multiple agents. In the ride-hailing example, while offline rides may initially benefit the driver-passenger pair, they reduce the ride-hailing company's profit margins, eventually leading to higher prices for passengers. 
% We contribute to this thread of research by studying two-sided manipulations in matching markets as non-cooperative games. 
%The main focus of our paper is on accomplice manipulation games, but we also consider the one-for-all manipulation game and self-manipulation game by women to demonstrate that the techniques used for analyzing the accomplice manipulation game are transferable to other games in stable matching markets.

\paragraph{Our Results and Techniques.} 
We initiate the study of two-sided manipulation games by investigating two variants of the problem: \textit{accomplice} manipulation games and \textit{one-for-many} manipulation games.
In these games, strategic men misreport their preferences (through \textit{permutation}) to help a strategic woman (or women) receive a better match, without harming themselves. Such pairs of men and women are called `strategic pairs', and a reported preference profile by all strategic men is said to be a (pure strategy) Nash equilibrium ($\NE$) when no such strategic pair can manipulate. 

Our first main result (\cref{cor:NE_existence}) shows that the accomplice manipulation game always supports a pure strategy $\NE$.
Our constructive proof, based on careful construction of the best-response dynamics in the associated game,  enables us to devise an algorithm to find an $\NE$ in polynomial time.
Although not every best-response dynamic converges (\cref{ex:unstable}), we show that dynamics consisting of optimal (or even sub-optimal) \textit{push-up} strategies always do. These strategies ensure that the resulting matching in every step of the dynamic lies within the original stable lattice, thereby providing a potential function that guarantees (polynomial time) convergence.
This also guarantees that the $\NE$ preference profile found by our algorithm results in a stable matching (with respect to the truthful profile) under $\DA$, even though not all $\NE$s correspond to a stable matching. Interestingly, all of them are approximately stable (\cref{thm:somestable}).

In \cref{sec:transferable}, we show how our technique of adroit construction of best-response dynamics can be utilized to find an $\NE$ in one-for-all manipulation \citep{hosseini2022twoforone} and self-manipulation by a woman \citep{teo2001gale}.
While the existence of $\NE$ was shown previously in woman manipulation games \citep{roth1984misrepresentation,zhou1991stable,gupta2016total}, our technique reveals insights into the structure of Nash equilibria and provides a direct construction of polynomial-time algorithms for computing an $\NE$.

In \cref{sec:experiments}, we complement our theoretical results with an experimental evaluation of more properties of the $\NE$; namely welfare of agents in $\NE$ and length of the best-response dynamic. 

All omitted proofs along with additional results can be found in the supplementary material.

\paragraph{Related Literature.} The incompatibility of stability and strategy-proofness in two-sided matching markets \citep{roth1982economics} has led to extensive studies on manipulations of the $\DA$ algorithm. The distinctions are based on two factors: the type of misreport---truncating preferences \citep{RR99truncation} or permuting preferences \citep{teo2001gale}---and the misreporting agents, whether it is a single agent \citep{dubins1981machiavelli,teo2001gale} or a coalition \citep{huang2006cheating,shen2018coalition,shen2021coalitional}. 

Traditionally, the focus on coalitional manipulations has been on coalitions with agents from the same side of the market. 
\citet{BH19partners} and \citet{hosseini2021accomplice} initiated the study of coalitions consisting of agents from different sides of the market, and showed that optimal manipulation strategies can be computed in polynomial time for accomplice \cite{hosseini2021accomplice} and one-for-all manipulations \cite{hosseini2022twoforone}.

The study of misreports in $\DA$ as non-cooperative games has also predominantly focused on one-sided manipulations. Specifically, \citet{roth1984misrepresentation} showed that for the single-agent manipulation game by women, every $\NE$ results in a stable matching. Furthermore, every $\NE$ can be characterized in the following manner: Whenever a stable matching is supported by some preference profile, there also exists an $\NE$ that attains this matching \citep{zhou1991stable}. Since $\DA$ on the truthful preference profile results in a stable matching, i.e., there exists a stable matching and a preference profile (truthful) that supports it, the \emph{existence} of a Nash equilibrium for this game is known. Its computation was later studied by \citet{GIM16total}. 
Even for games in which the players are a coalition of women, a strong $\NE$ is known to exist, and the resulting matching is unique \citep{shen2018coalition}.

\section{Preliminaries} \label{sec:notation}

\paragraph{Stable Matching.} A \emph{stable matching} instance $\langle M, W, \succ \rangle$ consists of two disjoint sets of agents, colloquially referred to as  \emph{men} ($M$) and \emph{women} ($W$) where $|M|=|W|=n$, as well as a \emph{preference profile} $\succ$ that specifies the \emph{preference lists} of all agents. The preference list of an agent $i$, denoted by $\succ_i$, is a strict total ordering over agents on the other side. So, we write $m_i \succ_w m_j$ if a woman $w$ prefers man $m_i$ to $m_j$ and $m_i \succeq_w m_j$ if she is indifferent between them or (strictly) prefers $m_i$ to $m_j$. %
Similarly, for each man $m\in M$, $w_i \succ_m w_j$ and $w_i \succeq_m w_j$ indicate a strict and weak preference of $m$ for $w_i$ over $w_j$.
Moreover, we write $\succ_{-X}$ to denote the preference profile of all agents apart from the agents in the set $X$, and thus, $\succ = (\succ_{-X}, \succ_{X})$.%\grzesiek{Replacement of a strategy notation not yet introduced. }.

A \emph{matching} is a bijective function $\mu:\, M \cup\ W \to M \cup\ W$, where, for every $m\in M$ and $w\in W$, $\mu(m)\in W$, $\mu(w)\in M$, and $\mu(w)=m$ if and only if $\mu(m)=w$. %\grzesiek{Is this well defined? It seems like we can have men matched with men.}%\grzesiek{This definition allows for $\mu(m_1) = m_2$}
For a matching $\mu$, a pair of agents $(m,w) \in M\times W$ forms a \emph{blocking pair}, if $w \succ_m \mu(m)$ and $m \succ_w \mu(w)$, i.e., when $m$ and $w$ prefer each other to their matched partners in $\mu$. 
A matching is said to be \emph{stable} if it does not contain a blocking pair.
The set of all stable matchings is denoted by  $\mathcal{S}_{\succ}$ and forms a distributive lattice that is possibly exponential in size \citep{K97stable}.

Given a subset of agents $X\subseteq M\cup W$, for any two matchings $\mu,\mu'$ we write $\mu \succeq_X \mu'$ if all its members weakly prefer $\mu$ over $\mu'$, i.e., $\mu \succeq_i \mu'$ for all $i \in X$. We write $\mu \succ_X \mu'$ when $\mu \succeq_X \mu'$ and at least one agent $j\in X$ strictly prefers $\mu$ to $\mu'$, i.e., $\mu \succ_j \mu'$.

\paragraph{The Deferred Acceptance  Algorithm.
}
The \emph{deferred acceptance algorithm} ($\DA$) guarantees to find a stable matching $\mu$ in a two-phase procedure. First, in the \emph{proposal phase}, each currently unmatched man proposes to his favorite woman who has not rejected him yet. 
Subsequently, in the \emph{rejection phase}, each woman tentatively accepts her favorite proposal, rejecting the others. The procedure terminates when no more proposals are possible. We denote the outcome of $\DA$ under the preference profile $\succ$ by $\mu_\succ \coloneqq \DA(\succ)$. 
\citeauthor{gale1962college}~[\citeyear{gale1962college}] showed that given any profile $\succ$ the $\DA$ algorithm always returns a matching that is stable and \emph{men-optimal}, i.e., for each $\mu' \in S_\succ$,  $\mu_\succ \succeq_M \mu'$.  
In contrast, $\DA$ is \emph{women-pessimal}, i.e., for every $\mu' \in S_\succ$,  $\mu' \succeq_W \mu_\succ$ \citep{mcvitie1971stable}.%\grzesiek{We give a formal definition for women-pessimal, but not for men-optimal}.

% We will further say that such a matching is \emph{men-optimal}. However, $\DA$ has been shown to be \emph{women-pessimal}, i.e., for every $\mu\in S_\succ$ we have that $\mu \succeq_W \mu_\succ$, which we formulate in Proposition \ref{prop:GS_lattice}. 
\begin{comment}
\begin{proposition}[\citep{gale1962college,mcvitie1971stable}] \label{prop:GS_lattice}
    Given any preference profile $\succ$, let $\mu_\succ =\DA(\succ)$. Then, for any stable matching $\mu'\in \mathcal{S}_\succ$,  $\mu_\succ \succeq_M {\mu'}$ and ${\mu'} \succeq_W \mu_\succ $. 
\end{proposition}
\end{comment}

\paragraph{One-Sided Strategies.}
For a profile $\>$ and the matching $\mu_\succ\coloneqq \DA(\succ)$, a woman $w$ can \textit{self manipulate} if there is a misreport $\>'_w$ (a \textit{permutation} of $w$'s true list $\>_{w}$) such that $\mu_{\succ'} \>_w \mu_\succ$ where $\mu_{\succ'} \coloneqq \DA{(\>_{-w}, \>_{w'})}$. 
The self manipulation strategies are \textit{one-sided}, i.e., a manipulator woman misreports to her own benefit. These strategies are known to admit tractable algorithms~\citep{teo2001gale,vaish2017manipulating}. %\grzesiek{`misreport' and `strategy' are used in a very similar way, which might be confusing.}

\paragraph{Two-Sided Strategies.}
A manipulation strategy could be \textit{two-sided} involving agents from both sides of the market where an agent from one side (a man) misreports to improve the match of beneficiary agents (a set of women).
Formally, given a profile $\>$, we say that a woman $w$ can manipulate through an \textit{accomplice} $m$ if $\mu_{\succ'} \>_{w} \mu_\succ$ where $\mu_\succ \coloneqq \DA{(\>)}$ and $\mu_{\succ'} \coloneqq \DA{(\>_{-m}, \>'_{m})}$.
A generalization of this two-sided strategy is \textit{one-for-many} manipulation where a misreporting agent improves the match of a subset of women. Formally, a strategy is \textit{one-for-many} if, 
for a set of beneficiary women $P_w \subseteq W$, man $m$'s misreport $\succ'_m$ is such that $\mu_{\succ'} \succ_{P_w} \mu_\succ$, where $\mu_{\succ'} \coloneqq \DA{(\>_{-m}, \>'_{m})}$ and $\mu_\succ \coloneqq \DA(\succ)$.
These two-sided manipulation strategies were introduced by \citeauthor{hosseini2021accomplice}~[\citeyear{hosseini2021accomplice}] and \citeauthor{hosseini2022twoforone}~[\citeyear{hosseini2022twoforone}]. %\grzesiek{These references maybe could be introduced not in a sentence, e.g., at the top of the paragraph}

We say that an accomplice $m$ incurs \textit{regret} if he receives a less preferred match after misreporting, i.e., $\mu_\succ \>_{m} \mu_{\succ'}$. Due to the strategyproofness of $\DA$ for the men \citep{dubins1981machiavelli}, given any misreport $\>'_{m}$, we have $\mu_\succ \succeq_m \mu_{\succ'}$. Thus, a manipulation strategy is \textit{with-regret} if $\mu_{\succ'}(m) \neq \mu_\succ(m)$. Otherwise, man $m$ receives its original outcome and incurs \textit{no-regret}, i.e. $\mu_{\succ'}(m) = \mu_\succ(m)$.

\paragraph{Stability Relaxations.}
Given a preference profile $\>$ and a fixed set of pairs $X\subseteq M\times W$, we say that a matching $\mu$ is $X$-stable with respect to $\>$ if every blocking pair (if one exists) belongs to $X$. %}
Clearly a stable matching is $X$-stable, and every matching is $X$-stable when $X = M\times W$.
We note that this relaxation generalizes the notion of $m$-stability \citep{BH19partners}; an $m$-stable matching is $X$-stable with $X=\{m\}\times W$. Under accomplice manipulation, any matching that is stable with respect to the manipulated profile $\>'$ is $X$-stable with respect to the true profile $\>$ and the set $X$ is exactly $\{m\}\times W$ \citep{hosseini2021accomplice}.

\section{Two-Sided Manipulation Games}

Here, we formally define two variants of two-sided manipulation games, namely \emph{accomplice manipulation game} and \emph{one-for-many manipulation game}. 
Also, we introduce additional operations and theoretical techniques that will be used to analyze these games.

\paragraph{Accomplice Manipulation Game.}
Given an instance $\langle M, W, \succ\rangle$, the players of the game are given by a set of 
\emph{strategic pairs}, denoted by $P \subseteq M \times W$. Let $P_m$ be the set of \emph{strategic men}, i.e., $ P_m \coloneqq \{m\in M \mid (m,w)\in P \text{ for some } w\in W \}$, and $P_w \coloneqq \{w\in W \mid (m,w)\in P \text{ for some } m\in M \}$ represent the set of \emph{strategic women}. 
Note that pairs in $P$ need not be pairwise disjoint. That is, an agent can be in alliance with multiple individuals but not as a coalition, i.e., there is no coordination among accomplices or beneficiaries.

 A \emph{strategy profile} $\stratprof = (\stratprof_1, \ldots, \stratprof_{n})$ is a list of \textit{reported} (not necessarily true) total orderings by the accomplices, i.e., the men.
 We assume that every woman reports her preference list truthfully ($\succ_w)$, and thus, for ease of exposition, we omit their preference list and write $\DA(\stratprof)$ to denote the outcome of \DA{} on the preference list of all agents including those who report truthfully.

For a strategy profile $\stratprof$, an \emph{accomplice manipulation} by a strategic pair $(m,w)\in P$ is a misreported preference $\stratprrof_m$ by a manipulator $m$, such that, for $\mu_\stratprof\coloneqq \DA(\stratprof)$ and $\mu_{\stratprrof} \coloneqq \DA(\stratprof_{-m}, \stratprrof_m)$, %we have that 
(i) the beneficiary woman $w$ benefits from this misreport, i.e., $\mu_{\stratprrof} \succ_{w} \mu_\stratprof$, and
	(ii) the manipulator man $m$ does not become worse-off (with respect to his true preferences) from this misreport, i.e., $\mu_{\stratprrof} \succeq_{m} \mu_\stratprof$. Note that in conditions (i) and (ii), the partners in matchings corresponding to the profiles $\stratprof$ and $\stratprrof$ are compared according to the true profile $\succ$. Such a manipulation is called a \emph{better response} to the profile $\stratprof$, and it is the \emph{best response} if it provides the best partner for $w$ among all better responses.

For an accomplice manipulation game with strategic pairs $P\subseteq M\times W$, a strategy profile $\succ_\NE$ is  a \emph{Nash Equilibrium} ($\NE$) if there does not exist a strategic pair $(m,w)\in P$, such that $(m,w)$ can perform an accomplice manipulation at the profile $\succ_\NE$.

\paragraph{One-for-Many Manipulation Game.}
Given a strategy profile $\stratprof$, a \emph{one-for-many manipulation} by a man $m$ for a subset of beneficiary women $P^m_w \subseteq W$ is a misreported preference $\stratprrof_m$  such that, for $\mu_\stratprof \coloneqq \DA(\stratprof)$ and $\mu_{\stratprrof} \coloneqq \DA(\stratprof_{-m}, \stratprrof_m)$, we have that 
(i) $\mu_{\stratprrof}$ Pareto dominates $\mu_{\stratprof}$ for the set of women $P^m_w$, i.e.,  for every $w\in P^m_w$, $\mu_{\stratprrof} \succeq_{w} \mu_{\stratprof}$ and for some $w'\in P^m_w$, $\mu_{\stratprrof} \>_{w'} \mu_{\stratprof}$, and 
 (ii) the manipulator man $m$ is not worse-off through this misreport, i.e., $\mu_{\stratprrof} \succeq_{m} \mu_{\stratprof}$. Such a manipulation $\stratprrof$ is said to be a better response to $\stratprof$, and it is the best response if no other strategy $\stratprof_m''$ Pareto dominates $\stratprrof_m$.

Then, in the context of a one-for-all manipulation game with $P_m\subseteq M$ as a set of misreporting men and $\{P_w^m\}_{m\in P_m}$ as the set of beneficiary women, a strategy profile $\succ_\NE$ is said to be a {Nash equilibrium} ($\NE$)  if there is no misreporting agent $m\in P_m$ such that $m$ can perform a one-for-many strategy at $\succ_\NE$.

\paragraph{Push-up Operation.}

Let $\stratprof$ be a strategy profile (not necessarily $\succ$) and $\mu_\stratprof \coloneqq \DA(\stratprof)$. %\grzesiek{Was `truthful profile' defined explicitely (does it need to be)?}.
For any man $m$, let $\mathcal{L}(\stratprof_{m},\mu_\stratprof)$ and $\mathcal{R}(\stratprof_{m},\mu_\stratprof)$ denote the set of women $m$ reports above and below his $\DA(\stratprof)$ partner in $\stratprof_m$, i.e., 
$\stratprof_{m} = (\mathcal{L}(\stratprof_{m},\mu_\stratprof), \mu_{\stratprof}(m), \mathcal{R}(\stratprof_{m},\mu_\stratprof))$.
We say that a man $m$ performs a \emph{push-up} operation on a set $X\subseteq W$ if the new list is $\stratprof_{m}^{X\uparrow} = (\mathcal{L}(\stratprof_{m},\mu_\stratprof)\cup X, \mu_\stratprof(m), \mathcal{R}(\stratprof_{m},\mu_\stratprof)\setminus X)$; see \cref{fig:push-up} for a pictorial representation.
 \begin{figure}[t]
    \centering
    \scalebox{.8}{ \begin{tikzpicture}
         % Draw the horizontal line
         \draw (0, 0) -- (7, 0);
        
         % Draw the left label
         \node at (1.75, 0.6) {$\mathcal{L}(\stratprof_m, \mu_\stratprof)$};
        
         % Draw the right label
         \node at (5.25, 0.6) {$\mathcal{R}(\stratprof_m, \mu_\stratprof)$};
        
         % Draw the middle label
         \node at (3.5, -0.5) {$\mu_\stratprof(m)$};
        
         % Draw the left vertical label
         \node at (-0.5, 0) {$\stratprof_m$};

         % Draw the small vertical line at (5,0)
         \draw (3.5, -0.15) -- (3.5, 0.15);
        
         % Draw a box in the middle of the right side
         \draw (4.75, -0.2) rectangle (5.75, 0.2);
         \node at (5.25, -0.5) {$X$};

         % Draw an arrow from below the box to the left side with turns
         \draw[->, >=stealth, line width=0.5pt] (5.25, -0.8) -- (5.25, -1) -- (2, -1) -- (2, -0.6);
     \end{tikzpicture}}
     \caption{Push-up operations by man $m$ in strategy profile $\stratprof$.}
     \label{fig:push-up}
 \end{figure}
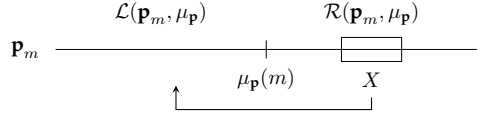
Note that $\stratprof_m^{X\uparrow}$ does not encode a unique preference order, but every permutation of preferences above or below the \DA{} match provides the same matching \citep{huang2006cheating}. For a more detailed discussion, we refer the reader to \cref{sec:appendix_figures}.

Notably, any accomplice manipulation can be performed through push-up manipulations only %\cite{hosseini2021accomplice}
(\cref{prop:accomplice_properties}(i)). Also, as long as the manipulator $m$ incurs no-regret, the stable lattice of the manipulated instance consisting of push-up operations is a subset of the original one (\cref{prop:accomplice_properties}(ii)). 
Thus, each woman gets weakly better-off and each man weakly worse-off from a no-regret accomplice manipulation (\cref{prop:accomplice_properties}(iii)). 

\begin{proposition}[\cite{hosseini2021accomplice}]\label{prop:accomplice_properties}
    Let $\stratprof$ be a strategy profile. For any accomplice manipulation $\stratprrof$ by accomplice man $m$ such that $\mu_{\stratprrof} \coloneqq \DA{(\stratprrof)}$, we have the following:  
    \begin{enumerate}[(i)]
        \item $\stratprrof$ can be performed through an inconspicuous push-up operation on $\stratprof$, i.e., $\stratprrof := (\stratprof_{-m}, \stratprof_m^{X\uparrow})$ where $|X|=1$.
        \item For a $\stratprrof := (\stratprof_{-m}, \stratprof_m^{X\uparrow})$, we have $\mathcal{S}_{\stratprrof} \subseteq \mathcal{S}_{\stratprof}$ as long as $m$ incurs no regret with respect to $\stratprof$, i.e., $\mu_{\stratprrof} \succeq^{\stratprof}_m \mu_{\stratprof}$.
        \item For a $\stratprrof := (\stratprof_{-m}, \stratprof_m^{X\uparrow})$ and its corresponding matching $\mu_{\stratprrof}:=\DA(\stratprrof)$, if $m$ does not incur regret (i.e., $\mu_{\stratprrof} \succeq^{\stratprof}_m \mu_{\stratprof}$), then $\mu_{\stratprrof} \succ^{\stratprof}_W \mu_\stratprof$ and $\mu_\stratprof \succ^\stratprof_M \mu_{\stratprrof}$.
    \end{enumerate}
\end{proposition}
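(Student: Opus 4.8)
The three statements are layered, and the plan is to prove them in that order: (i) is the structural core; (ii) is a short blocking‑pair count once (i) lets us assume the manipulated report is a push‑up; and (iii) is immediate from (ii) together with men‑optimality and women‑pessimality of \DA{}. The fact that makes all the bookkeeping go through is that only $m$'s list differs between $\stratprof$ and $\stratprrof$ while every woman reports truthfully.

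For (i) I would argue in two steps: first \emph{realize the manipulation as a push‑up}, then \emph{shrink the pushed‑up set to a single woman}. For the first step, invoke order‑independence of \DA{} \citep{mcvitie1971stable} (see also \citet{huang2006cheating}): $\mu_{\stratprrof}$ is unchanged if $m$ issues his doomed proposals to the women he ranks above $\mu_{\stratprrof}(m)$ in any order, and any such woman whose tentative partner already beats $m$ at the moment $m$ proposes may be struck from $m$'s list without altering the run; iterating this, $\mu_{\stratprrof}$ is produced by a push‑up report $(\stratprof_{-m},\stratprof_m^{X\uparrow})$ for a suitable $X$ (in the no‑regret case $\mu_{\stratprrof}(m)=\mu_{\stratprof}(m)$; the with‑regret case is symmetric, demoting the women strictly between $m$'s two partners). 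For the second step, list the women of $X$ in the order $m$ proposes to them in $\DA(\stratprof_{-m},\stratprof_m^{X\uparrow})$, choosing — again by order‑independence — an order in which the first member of $X$ whose proposal starts a non‑vacuous rejection chain that reaches the beneficiary $w$ comes first; call it $w'$. Such a $w'$ exists because $w$ strictly improves, so some chain reaches her, and every chain is set off by a proposal $m$ makes under $\stratprrof$ but not under $\stratprof$, which goes to a woman ranked below $\mu_{\stratprof}(m)$ in $\stratprof_m$ (so $w'\in\mathcal{R}(\stratprof_m,\mu_{\stratprof})$, making $\stratprof_m^{\{w'\}\uparrow}$ a genuine single‑woman push‑up). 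Up to the instant $m$ proposes to $w'$, the runs of $\DA(\stratprof)$, $\DA(\stratprof_{-m},\stratprof_m^{\{w'\}\uparrow})$ and $\DA(\stratprof_{-m},\stratprof_m^{X\uparrow})$ coincide, so the proposal to $w'$ triggers exactly the same chain in the singleton run; once it resolves $w$ holds a partner she strictly prefers to $\mu_{\stratprof}(w)$, and since women only trade up in \DA{} she keeps such a partner. Hence $\stratprof_m^{\{w'\}\uparrow}$ is itself an accomplice manipulation for $(m,w)$ with $|X|=1$, and $m$ is not harmed (he ends at $\mu_{\stratprof}(m)$, or weakly better, exactly as under $\stratprrof$). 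I expect the delicate point — the main obstacle — to be making this chain‑isolation argument airtight: one must argue the chain started by $w'$ in the singleton run is undisturbed by the proposals $m$ makes to other members of $X$ in the original run, and ordering $X$ so that $w'$ is first among the chain‑triggering members is the cleanest way to neutralize this.

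For (ii), fix $\nu\in\mathcal{S}_{\stratprrof}$ with $\stratprrof=(\stratprof_{-m},\stratprof_m^{X\uparrow})$ (taking the representative that preserves the internal orders of $\mathcal{L}(\stratprof_m,\mu_{\stratprof})$, $X$ and $\mathcal{R}(\stratprof_m,\mu_{\stratprof})\setminus X$) and $m$ incurring no regret, so $\mu_{\stratprrof}(m)=\mu_{\stratprof}(m)=:w_0$; the goal is $\nu\in\mathcal{S}_{\stratprof}$. A blocking pair of $\nu$ under $\stratprof$ not involving $m$ would also block under $\stratprrof$ (identical lists), contradicting $\nu\in\mathcal{S}_{\stratprrof}$, so it suffices to rule out a pair $(m,b)$ with $b\succ^{\stratprof}_m\nu(m)$, $m\succ_b\nu(b)$ and — since it does not block under $\stratprrof$ — $\nu(m)\succ^{\stratprrof}_m b$. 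Men‑optimality of $\mu_{\stratprrof}$ over $\mathcal{S}_{\stratprrof}$ forces $w_0\succeq^{\stratprrof}_m\nu(m)$, i.e.\ $\nu(m)\in\{w_0\}\cup(\mathcal{R}(\stratprof_m,\mu_{\stratprof})\setminus X)$, so $\nu(m)$ is not moved by the push‑up. Since the push‑up only lifts the women of $X$ from below $w_0$ to above $w_0$ and preserves every other relative order, a pair with $b$ above $\nu(m)$ in $\stratprof_m$ can have $\nu(m)$ above $b$ in $\stratprof_m^{X\uparrow}$ only when $b\in X$; but then $b$ sits above $w_0$ in $\stratprof_m^{X\uparrow}$ while $\nu(m)$ sits weakly below $w_0$, contradicting $\nu(m)\succ^{\stratprrof}_m b$. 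Hence no such $(m,b)$ exists and $\nu$ is $\stratprof$‑stable.

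For (iii), (ii) gives $\mu_{\stratprrof}=\DA(\stratprrof)\in\mathcal{S}_{\stratprrof}\subseteq\mathcal{S}_{\stratprof}$, so $\mu_{\stratprrof}$ is stable with respect to $\stratprof$. Since $\mu_{\stratprof}=\DA(\stratprof)$ is men‑optimal and women‑pessimal over $\mathcal{S}_{\stratprof}$ \citep{gale1962college,mcvitie1971stable}, we get $\mu_{\stratprof}\succeq^{\stratprof}_M\mu_{\stratprrof}$ and $\mu_{\stratprrof}\succeq^{\stratprof}_W\mu_{\stratprof}$. Whenever the push‑up is effective — in particular whenever $\stratprrof$ realizes a genuine accomplice manipulation, so the beneficiary strictly gains — we have $\mu_{\stratprrof}\neq\mu_{\stratprof}$; as a matching is pinned down by its pairs, some woman is then matched strictly better and some man strictly worse, sharpening the two weak relations to $\mu_{\stratprrof}\succ^{\stratprof}_W\mu_{\stratprof}$ and $\mu_{\stratprof}\succ^{\stratprof}_M\mu_{\stratprrof}$, as claimed.
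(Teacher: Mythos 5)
First, a point of comparison: the paper does not prove this statement at all. It is \cref{prop:accomplice_properties}, imported verbatim from \citet{hosseini2021accomplice} (with the inconspicuousness machinery going back to \citet{huang2006cheating}), so there is no in-paper proof to measure you against; what follows is an assessment of your argument on its own terms. Parts (ii) and (iii) are correct and are essentially the standard arguments. For (ii), your case analysis is sound: men-optimality of $\mu_{\stratprrof}$ over $\mathcal{S}_{\stratprrof}$ pins $\nu(m)$ to $\{w_0\}\cup(\mathcal{R}(\stratprof_m,\mu_\stratprof)\setminus X)$, and since the push-up never demotes anything relative to an element that stays put, the required order reversal between $b$ and $\nu(m)$ cannot occur (your phrasing ``only when $b\in X$'' is slightly off --- that case also exhibits no reversal, as your next sentence correctly shows --- but the logic closes). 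One caveat worth stating explicitly: $\stratprof_m^{X\uparrow}$ only determines a preference list up to permutations within blocks, and while the \DA{} outcome is permutation-invariant, the stable \emph{set} need not be; your choice of the order-preserving representative is the right reading and should be flagged as such. Part (iii) follows from (ii) plus men-optimality/women-pessimality exactly as you say, with the correct observation that strictness needs $\mu_{\stratprrof}\neq\mu_\stratprof$.

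The genuine gap is in part (i), at precisely the point you flag. Your selection of $w'$ is circular: you define $w'$ as the first member of $X$ whose proposal triggers a chain reaching $w$, and then invoke order-independence to reschedule $m$'s proposals so that $w'$ is processed before the rest of $X$. But which member's proposal triggers the chain that reaches $w$ depends on the market state at the moment of that proposal, which in turn depends on the chains already set off by earlier members of $X$; after you reorder, $w'$ proposes into a different state and may trigger a different chain that never reaches $w$. Order-independence guarantees the final matching is invariant, not that any particular intermediate chain is. A second unproved assertion is that $m$ ends at $\mu_\stratprof(m)$ (or truly weakly better) in the singleton run: that $m$ does weakly better when he issues fewer extra proposals is itself a comparative-statics fact requiring proof (it can be extracted from your own part (ii) applied to $\{w'\}\subseteq X$ together with men-optimality, but you do not make that connection). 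The published proofs sidestep chain surgery entirely by arguing through the lattice: they identify the single woman to promote from the structure of $\mu_{\stratprrof}$ as a stable matching of the manipulated instance, then use containment of stable sets and women-pessimality to certify both the no-regret condition and the strict gain for $w$. As written, your step two of part (i) does not go through.
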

Here, $\mu_\stratprrof \succ^\stratprof_i \mu_{\stratprof}$ represents that, in the strategy profile $\stratprof$, agent $i$ prefers its partner in $\mu_\stratprrof$ to its partner in $\mu_\stratprof$. Similarly interpret $\succ_X^\stratprof$ for the set of agents $X$.

\section{Accomplice Manipulation Games} \label{sec:accomplice}

Here, we consider accomplice manipulation games that may involve \textit{any} subset of strategic pairs. 
The primary questions are 
(i) the existence of an NE and
(ii) computing an $\NE$ (if one exists). 
Once we address the above questions, we will discuss the properties of an $\NE$ including its  stability with respect to true preferences and the number of potential equilibria.

First, we highlight a relationship between the $\NE$ of two accomplice games. 
Take two subsets of strategic pairs $P, P' \subseteq M\times W$ such that $P_m=P_m'$ and $P_w \subseteq P_w'$.
\cref{lem:NEsubsets} implies that the accomplice game with $P$ as strategic pairs have at least as many $\NE$ as the game with $P'$ as strategic pairs. 
Also, finding a $\NE$ for the game with $P_m\times W$ as the set of strategic agents suffices to find an $\NE$ for any game with $P=(P_m, P_w)$. 

\begin{restatable}{proposition}{lemNEsubsets}
\label{lem:NEsubsets}
Consider two sets of strategic pairs $P=(P_m, P_w)$ and $P'=(P'_m, P_w')$ such that $P_w \subseteq P_w'$ and $P_m=P_m'$. If a strategy profile $\stratprof$ is an $\NE$ for the game with $P'$, then it is also an $\NE$ for the game with $P$.
%\end{proposition}
\end{restatable}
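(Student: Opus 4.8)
The plan is to derive this from a simple monotonicity principle: the set of Nash equilibria of an accomplice manipulation game can only shrink as the set of strategic pairs grows. The starting observation is that whether a strategic pair $(m,w)$ can perform an accomplice manipulation at a profile $\stratprof$ is an intrinsic property of $m$, $w$, $\stratprof$ and the true profile $\succ$: by definition it asserts merely the existence of a misreport $\stratprrof_m$ such that, writing $\mu_\stratprof = \DA(\stratprof)$ and $\mu_{\stratprrof} = \DA(\stratprof_{-m},\stratprrof_m)$, one has $\mu_{\stratprrof} \succ_w \mu_\stratprof$ and $\mu_{\stratprrof} \succeq_m \mu_\stratprof$. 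Nothing in this condition refers to the ambient set of strategic pairs. Consequently, ``$\stratprof$ is an $\NE$ for the game with strategic pairs $Q$'' is equivalent to ``for every $(m,w)\in Q$, the pair $(m,w)$ cannot perform an accomplice manipulation at $\stratprof$,'' which is antitone in $Q$.

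Next I would record the inclusion of strategic pairs. Every strategic pair of the game with parameters $(P_m,P_w)$---that is, every $(m,w)$ with $m\in P_m$ and $w\in P_w$---is also a strategic pair of the game with parameters $(P'_m,P'_w)$, because $P_m=P'_m$ and $P_w\subseteq P'_w$; in set form, $P=P_m\times P_w\subseteq P'_m\times P'_w=P'$. Combining the two facts: if $\stratprof$ is an $\NE$ for the game with $P'$, then no pair in $P'$ can perform an accomplice manipulation at $\stratprof$, hence a fortiori no pair in $P\subseteq P'$ can either, so $\stratprof$ is an $\NE$ for the game with $P$.

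There is no deep obstacle here; the only step warranting care---the ``main obstacle'' such as it is---is the first one, namely articulating explicitly that the predicate ``$(m,w)$ can manipulate at $\stratprof$'' does not depend on which game is being played, so that enlarging the strategic-pair set only adds candidate deviations and never removes any. The remaining claims in the surrounding text fall out of the same monotonicity: that the $P$-game has at least as many equilibria as the $P'$-game is immediate, and that it suffices to solve the game with strategic agents $P_m\times W$ is the special case $P'_w=W$.
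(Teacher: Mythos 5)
Your proof is correct and takes essentially the same route as the paper's: both arguments rest on the observations that whether a pair $(m,w)$ can perform an accomplice manipulation at a profile is independent of the ambient set of strategic pairs, and that every strategic pair of the $P$-game is also one of the $P'$-game, so any deviation refuting equilibrium for $P$ would already refute it for $P'$. The paper merely casts this as a proof by contradiction, whereas you state the monotonicity directly.
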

A common method to prove that a pure strategy $\NE$ always exists for any scenario is to
demonstrate that better or best responses do not cycle. This ensures that best (or better) response dynamics converges to an NE. 
However, for our game, it turns out that such dynamics may not necessarily converge when accomplice agents' strategies consist of \emph{any} misreport; see \cref{ex:unstable} (adapted from \citep{hosseini2021accomplice}). 

\begin{example}[Best-response dynamics may not converge] \label{ex:unstable}
  
    Consider an instance with five men and five women, with preferences as listed below.  Let $P=\{(m_3, w_4), (m_3, w_1)\}$ be the strategic pairs.

    \begin{table}[ht]
\centering %\small
    \begin{tabularx}{0.9\linewidth}{XXXXXXXXXXXXXXXXXXXXXXX}
    
         	$m_1$: & $w_2^*$& \underline{$w_1$} & $w_3$ & $w_4$ & $w_5$ && $w_1$: & \underline{$m_1$} & $m_3$ & $m_2^*$ & $m_4$ & $m_5$\\
            $m_2$: & {$w_1^{*}$} & \underline{$w_2$} & $w_3$ & $w_4$ & $w_5$ && $w_2$: & \underline{$m_2$} & {$m_1^{*}$} & $m_3$ & $m_4$ & $m_5$\\
            $m_3$: &{$w_1$} &  \underline{$w_3^*$} & $w_4$ & $w_2$ & $w_5$ && $w_3$: & \underline{$m_3^*$} & $m_1$ & $m_2$ & $m_4$ & {$m_5$}\\
            $m_4$: & \underline{$w_4$} & $w_5^*$ & $w_1$ & $w_2$ & $w_3$ && $w_4$: & {$m_5^{*}$} & $m_3$ & $m_1$ & $m_2$ & \underline{$m_4$}\\
            $m_5$: & \underline{$w_5$} & $w_4^*$ & $w_1$ & $w_2$ & $w_3$ && $w_5$: & $m_4^*$ & $m_1$ & $m_2$ & {$m_3$} & \underline{$m_5$}
    \end{tabularx}
\vspace{-0.06in}
\end{table}

    The $\DA$ matching on $\succ$ is underlined. Notice that the strategic profile $\succ'$ where $m_3$ reports $w_4\succ' w_3\succ'w_1 \succ' w_2 \succ' w_5$ and everyone else reports truthfully results in the matching $*$ and is an accomplice manipulation for $(m_3, w_4)$. However, at $\succ'$, $(m_3, w_1)$ is an accomplice pair, and $\succ$ is an accomplice manipulation for $(m_3, w_4)$. 
\end{example}

%\shraddha{Add a remark on the non-convergence example being due to push-down operations.}
In the above example, notice that $\succ'$ resulted from an operation which was \emph{not} a push-up operation on $\succ_{m_3}$; woman $w_1$ is pushed-down in the preference list $\succ'_{m_3}$. In fact this is precisely the reason for the cyclic nature of this dynamic. 
Next, we show that best and better %\HH{is "better" response defined?}
response dynamics consistent with push-up manipulation strategies, defined for any set of strategic pairs $P\subseteq M\times W$, converge in polynomial time, thereby proving the existence of a pure strategy $\NE$.

\subsection{Existence and Computation of an NE}\label{sec:NE}

Our constructed best (better) response dynamics starts at the truthful preference profile and each successor profile is a result of an optimal (or sub-optimal) push-up accomplice manipulation by some strategic pair $(m,w)\in P$.

\begin{definition}[Push-up Dynamics]\label{def:best_res}
For strategic pairs $P$, a \emph{push-up dynamic} is a sequence of strategy profiles $\overline{\stratprof} =(\stratprof^{(1)}, \ldots)$ starting at $\stratprof^{(1)} =\, \succ$, where each subsequent profile $\stratprof^{(t+1)}$ results from a no-regret,  \underline{push-up} accomplice manipulation by some strategic pair $(m,w)\in P$ at $\stratprof^{(t)}$, if possible. 
Otherwise, $\stratprof^{(t+1)} = \stratprof^{(t)}$ and we say that the push-up dynamic \emph{converges} with $\stratprof^{(t)}$ as its \emph{fixed-point}.
\end{definition}

Using this best-response dynamic, we show that we can efficiently find an $\NE$, as stated below. %\grzesiek{Actually, why do we use both `dynamic' and `sequence'?}

\begin{theorem}[Existence of an $\NE$] \label{cor:NE_existence}
For every accomplice manipulation game defined by a set of strategic players $P\subseteq M\times W$, there exists a polynomially computable $\NE$. 
\end{theorem}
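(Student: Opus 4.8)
The strategy is to show that the push-up dynamic from \cref{def:best_res} converges, and that its fixed point is an $\NE$. Convergence gives existence; a polynomial bound on the number of steps, together with polynomial-time computability of each step, gives efficient computation.

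First I would argue that \emph{convergence} implies the fixed point is an $\NE$. Suppose the dynamic converges with fixed point $\stratprof$. If some strategic pair $(m,w)\in P$ could perform an accomplice manipulation at $\stratprof$, then by \cref{prop:accomplice_properties}(i) it could be realized as an (inconspicuous) push-up manipulation, and since $\DA$ is strategyproof for men \citep{dubins1981machiavelli} this manipulation is automatically no-regret. Hence a no-regret push-up accomplice manipulation exists at $\stratprof$, contradicting that $\stratprof$ is a fixed point. So the fixed point is an $\NE$.

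Next, the heart of the argument is that \emph{every push-up dynamic is finite}. Here I would use \cref{prop:accomplice_properties}(ii): since every step is a no-regret push-up operation, $\mathcal{S}_{\stratprof^{(t+1)}}\subseteq\mathcal{S}_{\stratprof^{(t)}}$, so all the matchings $\mu_{\stratprof^{(t)}}$ lie inside the original stable lattice $\mathcal{S}_\succ$. By \cref{prop:accomplice_properties}(iii) each step moves strictly up in the lattice for the women (and strictly down for the men): $\mu_{\stratprof^{(t+1)}}\succ^{\stratprof^{(t)}}_W\mu_{\stratprof^{(t)}}$, and these preferences agree with the true preferences $\succ$ on the relevant comparisons. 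So the sequence of matchings $(\mu_{\stratprof^{(t)}})_t$ is strictly monotone in the women-order within the finite lattice $\mathcal{S}_\succ$, hence cannot repeat, and the dynamic terminates. For the polynomial bound I would observe that at each step at least one woman strictly improves her partner while no woman gets worse, so the sum over all women of the rank of their partner strictly decreases; this sum is an integer in $[n, n^2]$, giving at most $O(n^2)$ steps. Finally, each step can be executed in polynomial time: there are at most $|P|\le n^2$ candidate strategic pairs, and for each, checking whether a beneficial push-up exists (and finding the best one) amounts to running $\DA$ on at most $n$ modified profiles — this uses the polynomial-time optimal-accomplice-manipulation machinery of \citet{hosseini2021accomplice}. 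Combining, we find an $\NE$ in polynomial time.

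The main obstacle I anticipate is making the monotonicity argument fully rigorous across the \emph{changing} preference profiles: \cref{prop:accomplice_properties}(ii)--(iii) are stated relative to a single profile $\stratprof$ and its manipulation $\stratprrof$, so I must carefully track that the comparisons $\succ^{\stratprof^{(t)}}_W$ appearing in successive steps are all consistent with the \emph{true} women's preferences $\succ_W$ (the women always report truthfully here, so this should go through, but it needs to be spelled out), and that the lattice inclusions chain correctly so that the potential function is well-defined throughout. A secondary point is confirming that a fixed point of the \emph{push-up} dynamic is genuinely immune to \emph{arbitrary} accomplice manipulations and not merely to push-up ones — this is exactly where \cref{prop:accomplice_properties}(i) is essential, so I would state that reduction explicitly.
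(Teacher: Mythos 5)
Your proposal follows essentially the same route as the paper: the push-up dynamic of \cref{def:best_res}, the lattice inclusion $\mathcal{S}_{\stratprof^{(t+1)}}\subseteq\mathcal{S}_{\stratprof^{(t)}}$ from \cref{prop:accomplice_properties}(ii), the women's monotone improvement from part (iii) as a potential function giving an $O(n^2)$ bound, part (i) to upgrade ``no push-up manipulation at the fixed point'' to ``no accomplice manipulation at all,'' and $O(n)$ runs of $\DA$ per strategic pair for each step.

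The one step you assert too quickly is the claim that an arbitrary accomplice manipulation at the fixed point ``is automatically no-regret since $\DA$ is strategyproof for men.'' Strategyproofness at the profile $\stratprof^{(t)}$ is relative to the \emph{reported} list $\stratprof^{(t)}_m$, whereas the no-regret condition in the definition of an accomplice manipulation is relative to the \emph{true} list $\succ_m$; since $m$'s reported list has been permuted along the dynamic, these two orderings need not agree, and a priori $m$ could receive a partner that is better under $\succ_m$ but worse under $\stratprof^{(t)}_m$ (or vice versa). Closing this gap is exactly the content of the paper's \cref{lem:shrinking set}(ii) and \cref{lem:actualiscurrent}: the nested inclusions $\mathcal{R}(\stratprof^{(t)}_m,\mu_{\stratprof^{(t)}})\subseteq\cdots\subseteq\mathcal{R}(\succ_m,\mu_{\stratprof^{(t)}})$ guarantee that no-regret with respect to $\succ_m$ transfers to no-regret with respect to $\stratprof^{(t)}_m$, which is what licenses applying \cref{prop:accomplice_properties} to the instance $\langle M,W,\stratprof^{(t)}\rangle$ — both in the convergence argument and in converting a non-push-up manipulation at the fixed point into a push-up one. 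You do anticipate a consistency issue in your closing paragraph, but you locate it on the women's side (where it is trivial, since women report truthfully); the real bookkeeping is on the men's side, and it needs its own inductive lemma rather than an appeal to strategyproofness.
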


Notice that in the push-up dynamic every subsequent manipulation is with respect to the current profiles. This ensures that, if the dynamic converges, then no push-up accomplice manipulations can be performed on the fixed point (manipulations via other operations may still be possible). To show the existence and efficient computation of an $\NE$ for the accomplice game we need to show the following two properties of the push-up dynamics: (i) it always converges in polynomial time, and
    (ii) its fixed point does not allow for \emph{any} (not just push-up) accomplice manipulation by the strategic pairs.  
%\end{enumerate}

For showing the convergence of the push-up dynamics, we prove that as we move along the dynamic sequence, (i) the set of stable solutions shrinks, and (ii) for any man $m$, the set of women preferred lower than $m$'s partner in matching $\mu_{\stratprof^{(t+1)}}$ in $\stratprof_m^{(t+1)}$ is a subset of the set of women preferred
lower than $m$’s partner in the same matching but in the preference profile $\stratprof_m^{(t)}$.

\begin{restatable}{lemma}{lemshrinkingset}
\label{lem:shrinking set}
    In the push-up dynamics, the following set inclusions hold for each man $m$:
    \begin{enumerate}[(i)]
        \item $\mathcal{S}_{\stratprof^{(t+1)}} \subseteq \mathcal{S}_{\stratprof^{(t)}}\subseteq \ldots \subseteq \mathcal{S}_{\stratprof^{(1)}} = \mathcal{S}_{\succ}$, and
        \item $\mathcal{R}(\stratprof^{(t+1)}_m, \mu_{\stratprof^{(t+1)}} )\subseteq \mathcal{R}(\stratprof^{(t)}_m, \mu_{\stratprof^{(t+1)}}) \subseteq \ldots \subseteq \mathcal{R}(\succ_m, \mu_{\stratprof^{(t+1)}})$.
    \end{enumerate} 
\end{restatable}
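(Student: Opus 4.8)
The plan is to prove both inclusions together by induction on $t$, since they reinforce each other: the stable-lattice inclusion (i) will follow from \cref{prop:accomplice_properties}(ii) once we know the step is a no-regret push-up, and inclusion (ii) is the bookkeeping that lets us control where the newly pushed-up woman sits relative to the manipulator's (unchanged) $\DA$-partner across consecutive profiles. The base case $t=1$ is trivial for (i) since $\stratprof^{(1)}=\succ$, and for (ii) it amounts to the observation that $\mathcal{R}(\stratprof^{(1)}_m,\mu)\subseteq\mathcal{R}(\succ_m,\mu)$ holds with equality. So the whole content is in the inductive step.

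For the inductive step, fix the transition $\stratprof^{(t)}\to\stratprof^{(t+1)}$, carried out by a no-regret push-up of a set $X$ (with $|X|=1$ by \cref{prop:accomplice_properties}(i), though we need not use that) by some strategic man $m^\star$. First I would dispatch (i): because the manipulation is no-regret, i.e. $\mu_{\stratprof^{(t+1)}}\succeq_{m^\star}^{\stratprof^{(t)}}\mu_{\stratprof^{(t)}}$ — and in fact $\mu_{\stratprof^{(t+1)}}(m^\star)=\mu_{\stratprof^{(t)}}(m^\star)$ since a push-up keeps the $\DA$-partner fixed — \cref{prop:accomplice_properties}(ii) applies verbatim and gives $\mathcal{S}_{\stratprof^{(t+1)}}\subseteq\mathcal{S}_{\stratprof^{(t)}}$; chaining with the inductive hypothesis yields the full descending chain in (i). For (ii) there are two cases. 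For any man $m\neq m^\star$, the reported list is unchanged, $\stratprof^{(t+1)}_m=\stratprof^{(t)}_m$, so the only thing that can change in $\mathcal{R}(\stratprof^{(t)}_m,\cdot)$ is the reference point: we must show $\mathcal{R}(\stratprof^{(t)}_m,\mu_{\stratprof^{(t+1)}})\subseteq\mathcal{R}(\stratprof^{(t)}_m,\mu_{\stratprof^{(t)}})$, which holds because by \cref{prop:accomplice_properties}(iii) every woman weakly improves, hence every man weakly worsens, so $\mu_{\stratprof^{(t+1)}}(m)\preceq_m^{\stratprof^{(t)}}\mu_{\stratprof^{(t)}}(m)$, i.e. $m$'s new partner sits weakly lower in $\stratprof^{(t)}_m$, so the set of women below it only shrinks. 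For $m=m^\star$, the partner is literally the same woman in both matchings, so the reference point is fixed; and the list changed only by moving $X$ from $\mathcal{R}$ to $\mathcal{L}$, so $\mathcal{R}(\stratprof^{(t+1)}_{m^\star},\mu)=\mathcal{R}(\stratprof^{(t)}_{m^\star},\mu)\setminus X\subseteq\mathcal{R}(\stratprof^{(t)}_{m^\star},\mu)$. Composing with the inductive hypothesis (being careful that the inductive hypothesis is stated with the earlier matching $\mu_{\stratprof^{(t)}}$ as the reference, while we want it with $\mu_{\stratprof^{(t+1)}}$ — this is exactly the monotonicity-of-reference-point argument I just used, applied at each earlier stage) closes the chain.

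The one genuine subtlety, and the step I expect to be the main obstacle, is the interplay in (ii) between \emph{which matching} is used as the reference point and \emph{which reported profile} slices it. The statement quantifies $\mathcal{R}(\cdot,\mu_{\stratprof^{(t+1)}})$ uniformly with the \emph{final} matching but with \emph{varying} profiles $\stratprof^{(1)}_m,\dots,\stratprof^{(t+1)}_m$, so a clean induction needs the auxiliary monotonicity fact that for a fixed reported list $\stratprof_m$ and two matchings with $\mu'\preceq_m^{\stratprof}\mu$ (partner in $\mu'$ weakly worse), $\mathcal{R}(\stratprof_m,\mu')\subseteq\mathcal{R}(\stratprof_m,\mu)$ — trivial from the definition of $\mathcal{R}$ as "women listed below the partner" — combined with the fact, from \cref{prop:accomplice_properties}(iii) applied telescopically along the dynamic, that partners are $\succeq_M$-monotone decreasing along the sequence. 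I would state this auxiliary observation explicitly first, then feed it into the induction; once it is isolated, both chains in (i) and (ii) collapse to routine applications of \cref{prop:accomplice_properties}. I would also remark that no-regret (used to invoke (ii) and (iii) of the proposition at \emph{every} step) is guaranteed by \cref{def:best_res}, which only admits no-regret push-ups, so there is no gap there.
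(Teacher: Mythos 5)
Your overall architecture---a joint induction on $t$, with part (i) obtained from \cref{prop:accomplice_properties}(ii) at each step and part (ii) handled by a case split on whether $m$ is the step-$t$ manipulator---matches the paper's proof. However, there is a genuine gap at the first move of your inductive step: you take the no-regret hypothesis in the form $\mu_{\stratprof^{(t+1)}}\succeq^{\stratprof^{(t)}}_{m^\star}\mu_{\stratprof^{(t)}}$, i.e.\ relative to the \emph{current reported} list $\stratprof^{(t)}_{m^\star}$. The game defines no-regret relative to the \emph{true} preferences $\succ_{m^\star}$ (the definition of an accomplice manipulation explicitly compares partners ``according to the true profile $\succ$''). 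Since $\stratprof^{(t)}_{m^\star}$ is itself a misreport, these two conditions are not interchangeable, and \cref{prop:accomplice_properties}(ii)--(iii) require the current-profile version in order to be applied to the instance $\langle M,W,\stratprof^{(t)}\rangle$. Bridging this is precisely the job of part (ii) of the lemma: the inductive hypothesis $\mathcal{R}(\stratprof^{(t)}_{m^\star},\cdot)\subseteq\mathcal{R}(\succ_{m^\star},\cdot)$ converts $\mu_{\stratprof^{(t+1)}}(m^\star)\notin\mathcal{R}(\succ_{m^\star},\mu_{\stratprof^{(t)}})$ (true-preference no-regret) into $\mu_{\stratprof^{(t+1)}}(m^\star)\notin\mathcal{R}(\stratprof^{(t)}_{m^\star},\mu_{\stratprof^{(t)}})$ (current-profile no-regret). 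In your write-up this step is assumed rather than derived, so (i) and (ii) are not merely ``reinforcing each other'': (i) at index $t$ silently depends on (ii) at index $t$, and (ii) is doing the real work rather than bookkeeping. Note also that the statement you are implicitly invoking is \cref{lem:actualiscurrent}, which the paper proves \emph{from} \cref{lem:shrinking set}; using it inside the proof of \cref{lem:shrinking set} without rederiving it would be circular.

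The auxiliary claim that ``a push-up keeps the $\DA$-partner fixed'' suffers from the same problem: moving women from $\mathcal{R}$ to $\mathcal{L}$ is not a within-block permutation, so \cref{prop:permute} does not apply, and the manipulator's $\DA$ partner can in principle change to one of the pushed-up women. That the partner is unchanged is a \emph{consequence} of combining strategyproofness of $\DA$ for men at $\stratprof^{(t)}$ (which gives $\mu_{\stratprof^{(t)}}\succeq^{\stratprof^{(t)}}_{m^\star}\mu_{\stratprof^{(t+1)}}$) with the current-profile no-regret inequality above---again the item you still owe. Once the $\mathcal{R}$-chain bootstrap is inserted, the remainder of your argument (reference-point monotonicity for non-manipulators via \cref{prop:accomplice_properties}(iii), and $\mathcal{R}(\stratprof^{(t+1)}_{m^\star},\cdot)=\mathcal{R}(\stratprof^{(t)}_{m^\star},\cdot)\setminus X$ for the manipulator) goes through essentially as in the paper, which handles your ``which matching is the reference point'' subtlety by strengthening (ii) to hold for every reference woman $w\in\{\mu_{\stratprof^{(t+1)}}(m)\}\cup\mathcal{R}(\stratprof^{(t+1)}_m,\mu_{\stratprof^{(t+1)}})$ rather than by a separate monotonicity observation.
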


Notice that part (ii) of Lemma \ref{lem:shrinking set} implies that if we perform a manipulation at $\stratprof^{(t+1)}$ and the manipulation is such that $\mu_{\stratprof^{(t)}} \succ^{\stratprof^{(t)}}_m \mu_{\stratprof^{(t+1)}}$ for the manipulator $m$, i.e., $\mu_{\stratprof^{(t+1)}}(m) \in \mathcal{R}(\stratprof^{(t+1)}_m, \mu_{\stratprof^{(t)}})$, then $\mu_{\stratprof^{(t)}} \succ_m \mu_{\stratprof^{(t+1)}}$. That is, the manipulation was with-regret since $m$ became worse-off. So, in the push-up dynamics, it never happens that $\mu_{\stratprof^{(t)}} \succ^{\stratprof^{(t)}}_m \mu_{\stratprof^{(t+1)}}$.

\begin{restatable}{lemma}{actualiscurrent}
    \label{lem:actualiscurrent}
In the push-up dynamics, if $\stratprof^{(t)} \neq \stratprof^{(t+1)}$, then $\mu_{\stratprof^{(t+1)}} \succeq^{\stratprof^{(t)}}_m \mu_{\stratprof^{(t)}}$.
\end{restatable}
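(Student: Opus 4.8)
The plan is to unpack what it means for $\stratprof^{(t+1)}$ to differ from $\stratprof^{(t)}$: by \cref{def:best_res}, this happens only when some strategic pair $(m,w)\in P$ performs a no-regret push-up accomplice manipulation at $\stratprof^{(t)}$. So the single manipulator $m$ is known, and $\stratprof^{(t+1)} = (\stratprof^{(t)}_{-m}, (\stratprof^{(t)}_m)^{X\uparrow})$ for some $X\subseteq W$, while all other men keep their lists. Crucially, a push-up operation keeps $\mu_{\stratprof^{(t)}}(m)$ in the same position in $m$'s list. The heart of the argument is that the ``no-regret'' stipulation in the definition of the push-up dynamic is measured \emph{with respect to the profile at which the manipulation is performed}, i.e.\ $\stratprof^{(t)}$; and since all women report truthfully throughout, and $m$'s position of $\mu_{\stratprof^{(t)}}(m)$ is unchanged, the relevant comparison for $m$ is exactly $\mu_{\stratprof^{(t+1)}} \succeq^{\stratprof^{(t)}}_m \mu_{\stratprof^{(t)}}$, which is precisely the no-regret condition that \cref{def:best_res} imposes.

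First I would fix $t$ with $\stratprof^{(t)} \neq \stratprof^{(t+1)}$ and name the manipulating pair $(m,w)$ and the push-up set $X$, so that $\stratprof^{(t+1)} = (\stratprof^{(t)}_{-m}, (\stratprof^{(t)}_m)^{X\uparrow})$. Next I would invoke the definition of an accomplice manipulation being a \emph{better response} at $\stratprof^{(t)}$: condition (ii) there says the manipulator does not become worse off, i.e.\ $\mu_{\stratprrof} \succeq_m \mu_{\stratprof}$ where the comparison is by the true profile $\succ$. Setting $\stratprof = \stratprof^{(t)}$ and $\stratprrof = \stratprof^{(t+1)}$, this reads $\mu_{\stratprof^{(t+1)}} \succeq_m \mu_{\stratprof^{(t)}}$ under $\succ$. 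The only remaining gap is to translate $\succeq_m$ (true preference) into $\succeq^{\stratprof^{(t)}}_m$ (preference as encoded in $\stratprof^{(t)}_m$). But $m$'s reported list $\stratprof^{(t)}_m$ is a permutation of his true list, and by \cref{prop:accomplice_properties}(ii) applied inductively along the dynamic, every manipulation so far has been no-regret, so $\mu_{\stratprof^{(t)}}(m) = \mu_{\succ}(m)$ — wait, that is too strong; what is actually needed is only that the \emph{pair} of matches $\mu_{\stratprof^{(t)}}(m)$ and $\mu_{\stratprof^{(t+1)}}(m)$ are ordered the same way by $\succ_m$ and by $\stratprof^{(t)}_m$. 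I would argue this as follows: a push-up operation never moves a woman from the $\mathcal{R}$-block to the $\mathcal{L}$-block past $\mu_{\stratprof^{(t)}}(m)$ in the wrong direction, so the new partner $\mu_{\stratprof^{(t+1)}}(m)$ sits in $\mathcal{L}(\stratprof^{(t)}_m, \mu_{\stratprof^{(t)}}) \cup \{\mu_{\stratprof^{(t)}}(m)\}$; cleanest is to observe that by strategyproofness of $\DA$ for men (\citeauthor{dubins1981machiavelli}), $m$ cannot strictly gain over the truthful outcome by misreporting, but within the push-up dynamic the invariant from \cref{lem:shrinking set}(ii) together with the no-regret bookkeeping pins down that $m$'s true ranking and reported ranking agree on these two matches.

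I expect the main obstacle to be exactly this last translation between ``true preference'' and ``current-profile preference'' — making rigorous that, along the push-up dynamic, the manipulator's current reported list orders his old and new $\DA$-partners consistently with his true list. The cleanest route is probably to prove (or cite from \cref{prop:accomplice_properties}) the stronger structural fact that in a push-up dynamic every manipulator incurs \emph{no regret in the true sense}, i.e.\ $\mu_{\stratprof^{(t+1)}}(m) = \mu_{\stratprof^{(t)}}(m)$ or is strictly $\succ_m$-better, and that this forces $\mu_{\stratprof^{(t+1)}}(m) \in \mathcal{L}(\stratprof^{(t)}_m, \mu_{\stratprof^{(t)}}) \cup \{\mu_{\stratprof^{(t)}}(m)\}$, from which $\mu_{\stratprof^{(t+1)}} \succeq^{\stratprof^{(t)}}_m \mu_{\stratprof^{(t)}}$ is immediate. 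If that structural fact is not already available, I would establish it here by a short induction using \cref{prop:accomplice_properties}(ii) and (iii): no-regret manipulations keep the stable lattice nested, women weakly improve, men weakly worsen, and in particular the manipulator's position relative to his $\DA$-match in his own list is preserved. Everything else is immediate from the definitions.
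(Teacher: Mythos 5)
Your route is essentially the paper's: the paper also reduces the claim to the no-regret condition of the better response (stated with respect to the true preferences) and then uses the nested inclusions $\mathcal{R}(\stratprof^{(t)}_m,\cdot)\subseteq\mathcal{R}(\succ_m,\cdot)$ from Lemma~\ref{lem:shrinking set}(ii) to transfer the comparison from $\succ_m$ to $\stratprof^{(t)}_m$, phrased as a short contrapositive. The detours in your write-up (strategyproofness, the retracted claim that $\mu_{\stratprof^{(t)}}(m)=\mu_{\succ}(m)$) are unnecessary; the inclusion from Lemma~\ref{lem:shrinking set}(ii) alone closes the gap you identified.
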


This implies that computing an accomplice manipulation at $\stratprof^{(t)}$ reduces to finding a push-up manipulation in the instance $\langle M, W, (\succ_{-S_m}, \stratprof^{(t)}) \rangle$.
With this, we can show that the push-up dynamic converges.

\begin{lemma}[Push-up dynamics always converge] \label{th:brseq converges}
    For every accomplice manipulation game defined by the set of strategic players $P \subseteq M\times W$, the push-up dynamics always converge in a polynomial number of steps. 
\end{lemma}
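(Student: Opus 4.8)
The plan is to exhibit a potential function that strictly decreases along every non-trivial step of the push-up dynamics and that can only take polynomially many values, so that convergence follows immediately, and then to argue that each individual step is itself polynomial-time computable.

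\textbf{Setting up the potential.} First I would use \cref{lem:shrinking set}(i): along the dynamics the stable lattices are nested, $\mathcal{S}_{\stratprof^{(t+1)}} \subseteq \mathcal{S}_{\stratprof^{(t)}}$. I claim that whenever $\stratprof^{(t)} \neq \stratprof^{(t+1)}$ the inclusion is \emph{strict}. Indeed, the step is a no-regret push-up accomplice manipulation by some pair $(m,w) \in P$, so by \cref{prop:accomplice_properties}(iii) (applied with the ``current'' profile $\stratprof^{(t)}$ in the role of the base profile, which is legitimate by \cref{lem:actualiscurrent}, since $m$ incurs no regret) the beneficiary woman $w$ strictly improves: $\mu_{\stratprof^{(t+1)}} \succ^{\stratprof^{(t)}}_w \mu_{\stratprof^{(t)}}$. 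Hence $\mu_{\stratprof^{(t)}}$ is a stable matching of $\stratprof^{(t)}$ that is \emph{not} women-weakly-dominated by $\mu_{\stratprof^{(t+1)}}$ — in particular $\mu_{\stratprof^{(t)}} \notin \mathcal{S}_{\stratprof^{(t+1)}}$ (it cannot be stable there, because $\DA(\stratprof^{(t+1)})$ is women-pessimal in $\mathcal{S}_{\stratprof^{(t+1)}}$, yet $w$ is strictly worse in $\mu_{\stratprof^{(t)}}$ than in the pessimal matching $\mu_{\stratprof^{(t+1)}}$, a contradiction). Therefore $|\mathcal{S}_{\stratprof^{(t+1)}}| < |\mathcal{S}_{\stratprof^{(t)}}|$ at every non-trivial step.

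\textbf{Turning the shrinking lattice into a polynomial bound.} The lattice $\mathcal{S}_{\succ}$ can be exponentially large, so $|\mathcal{S}_{\stratprof^{(t)}}|$ itself is not a polynomial-valued potential. To fix this I would instead track the set of \emph{rotations} (equivalently, the set of man–woman pairs $\{(m,\mu(m)) : \mu \in \mathcal{S}_{\stratprof^{(t)}}\}$, i.e. the stable pairs), which has size at most $n^2$ and is monotonically non-increasing by \cref{lem:shrinking set}(i) together with the fact that a stable pair of a sublattice is a stable pair of the superlattice. Strict shrinkage of the lattice implies strict shrinkage of the set of stable pairs, because two finite distributive lattices of stable matchings with the same set of stable pairs must coincide (the Gale–Sotomayor / rotation-poset structure: stable matchings are determined by their sets of stable pairs). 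So $\Phi(\stratprof^{(t)}) \coloneqq |\{\text{stable pairs of } \stratprof^{(t)}\}| \le n^2$ strictly decreases at every non-trivial step, giving at most $n^2$ non-trivial steps. (Alternatively, and perhaps more cleanly, I would combine the two parts of \cref{lem:shrinking set}: the quantity $\sum_{m \in M} |\mathcal{R}(\stratprof^{(t)}_m, \mu_{\stratprof^{(t+1)}})|$ is bounded by $n^2$, is non-increasing by part (ii), and can be shown to strictly decrease at a non-trivial step because the manipulator $m$'s partner does not change while at least one woman moves from $\mathcal{R}$ to $\mathcal{L}$ in $\stratprof_m$; one must be careful that the reference matching in the $\mathcal{R}(\cdot,\cdot)$ notation is handled consistently, which is the fiddly point.)

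\textbf{Per-step cost and conclusion.} Each step requires detecting whether some strategic pair $(m,w) \in P$ admits a no-regret push-up accomplice manipulation at the current profile and, if so, computing one. By \cref{lem:actualiscurrent} this reduces to finding a push-up manipulation in the instance with the current profile substituted in, which is exactly the accomplice-manipulation computation of \citet{hosseini2021accomplice} and runs in polynomial time; there are $|P| \le n^2$ pairs to check, and a single $\DA$ run is $O(n^2)$, so each step is polynomial. Multiplying by the $O(n^2)$ bound on the number of non-trivial steps yields overall polynomial running time, and since a converged dynamic is by definition at a fixed point, this establishes \cref{th:brseq converges}.

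The main obstacle I anticipate is the middle step: going from ``the lattice strictly shrinks'' (which is essentially immediate from the quoted propositions) to a genuinely \emph{polynomially bounded} monotone potential. One must invoke enough of the rotation/stable-pair structure to know that a strictly smaller lattice has strictly fewer stable pairs, or else carefully verify that the $\mathcal{R}$-based sum in \cref{lem:shrinking set}(ii) strictly drops — taking care that the reference matching inside $\mathcal{R}(\cdot,\mu)$ is the same on both sides of the comparison being exploited. Everything else (polynomial per-step cost, the reduction via \cref{lem:actualiscurrent}) is routine given the cited results.
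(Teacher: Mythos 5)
Your overall architecture --- invoke \cref{lem:actualiscurrent} to treat each step as a no-regret manipulation on the current instance, apply \cref{prop:accomplice_properties}(iii), bound the number of non-trivial steps by a monotone potential, and note that each step is polynomial-time --- matches the paper's, but the potential you actually construct has a gap, and it is exactly the step you flag as the main obstacle. Strict shrinkage of the stable lattice does \emph{not} imply strict shrinkage of the set of stable pairs: a proper sublattice can cover the same union of man--woman pairs. For instance, if the lattice contains the man-optimal matching $\mu_0$, two incomparable matchings $\mu_1,\mu_2$ obtained by eliminating disjoint rotations, and the woman-optimal matching $\mu_z$, then the chain $\{\mu_0,\mu_1,\mu_z\}$ omits $\mu_2$ yet still realizes every stable pair (the pairs produced by each rotation all appear in $\mu_z$). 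So stable matchings are not ``determined by their sets of stable pairs'' in the sense you need, and $\Phi=|\{\text{stable pairs}\}|$ is non-increasing but not provably strictly decreasing; you would have to rule out such configurations arising from push-up manipulations, which you do not. Your fallback potential $\sum_m|\mathcal{R}(\stratprof^{(t)}_m,\mu_{\stratprof^{(t+1)}})|$ is left unverified, with the reference-matching inconsistency you yourself point out unresolved.

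The fix is already contained in your first paragraph and is what the paper does: \cref{prop:accomplice_properties}(iii) gives $\mu_{\stratprof^{(t+1)}} \succ^{\stratprof^{(t)}}_W \mu_{\stratprof^{(t)}}$, i.e., \emph{every} woman weakly improves and at least one strictly improves at each non-trivial step. Since women never change their reports, $\succ_W^{\stratprof^{(t)}}$ coincides with the true $\succ_W$ throughout, so each woman's true rank of her partner is monotone along the dynamics and can strictly improve at most $n$ times; summing over the $n$ women gives the $n^2$ bound on the number of steps directly, with no need for the lattice-cardinality detour or any rotation-poset machinery. Your per-step complexity argument (reduce via \cref{lem:actualiscurrent}, check inconspicuous push-ups with $O(n|P|)$ runs of $\DA$) is fine and matches the paper's.
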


\begin{proof}
If $\stratprof^{(t+1)}\neq\stratprof^{(t)}$, then there is some accomplice pair $(m,w)\in P$ who can perform a push-up manipulation at $\stratprof^{(t)}$. By Lemma \ref{lem:actualiscurrent}, this means that the pair $(m, w)$ can no-regret accomplice manipulate in the instance $\langle M, W, \stratprof^{(t)}\rangle$. Then, by Proposition \ref{prop:accomplice_properties}, 
%\begin{align*}
 $   \mu_{\stratprof^{(t+1)}} \succ_W^{\stratprof^{(t)}} \mu_{\stratprof^{(t)}}$.
%\end{align*} 
As in the push-up dynamics of accomplice manipulations women do not change their preference profiles,  $\succ_W^{\stratprof^{(t)}} =\, \succ_W$ for every $i$ in the sequence.
Hence, by a potential argument, the set of women can be better-off with respect to $\succ$ at most $n$ times. Thus, at some point $\stratprof^{(t+1)}=\stratprof^{(t)}$.
Note that at every step $i$ in which $\stratprof_{i+1} \neq \stratprof^{(t)}$, at least one woman gets strictly better-off. As an individual woman can improve her matching at most $n$ times, %and there are only $n$ women,
the sequence converges by the step $i \leq n^2$. 
\end{proof}

%We are now ready to prove the main result of this section.
As the push-up dynamics always converge, to prove that the fixed point $\stratprof^*$ is an $\NE$, it remains to show that no accomplice manipulation (not just push-up) is possible at $\stratprof^*$. %We show this next.  

\begin{proof}[Proof of Theorem \ref{cor:NE_existence}]
    By Lemma \ref{th:brseq converges}, the push-up dynamic converges, i.e., $\stratprof^{(t)}=\stratprof^{(t+1)}$. When $\stratprof^{(t)}=\stratprof^{(t+1)}$,  $\stratprof^{(t)}$ is a fixed point and there is no strategic pair $(m,w)\in P$ who can perform a push-up accomplice manipulation. So, if the converged point $\stratprof^{(t)}$ is not an $\NE$, then some pair $(m,w) \in P$ can perform an accomplice manipulation at $\stratprof^{(t)}$ via an operation which is not push-up. Let the manipulated strategy profile be $\stratprrof$. As it is a no-regret manipulation, $\mu_{\stratprrof} \succeq_m \mu_{\stratprof^{(t)}} $, i.e., $\mu_{\stratprrof} \notin \mathcal{R}(\succ_m, \mu_{\stratprof^{(t)}})$. From this and Lemma \ref{lem:shrinking set}, we have that $\mu_\stratprrof\notin \mathcal{R}(\stratprof^{(t)}_m, \mu_{\stratprof^{(t)}})$, i.e., $\mu_\stratprrof \succeq^{\stratprof^{(t)}}_m \mu_{\stratprof^{(t)}}$. So, if we consider the instance $\langle M, W, \stratprof^{(t)}
    \rangle$, then the manipulation $\stratprrof$ is an accomplice manipulation on this instance. But, by Proposition \ref{prop:accomplice_properties}, if $m$ can manipulate through an operation which is not push-up, then he can also manipulate through an inconspicuous push-up operation. But, as $\stratprof^{(t)}$ is a fixed point, no (no-regret) push-up manipulations are possible. So, $\stratprof^{(t)}$ is an NE.
    Also, as the sequence converges in at most $n^2$ steps and finding a successor in the sequence is polynomial-time solvable (as shown in \cref{sec:appendix_accomplice}), an $\NE$ can be found in polynomial-time.
\end{proof}

%\shraddha{I moved the remark from here to Section 4.2.2; we can move it back if you think it is more appropriate here.}

\subsection{Properties of $\NE$% of the Accomplice Manipulation Game
} \label{sec:properties_NE}

Here, we discuss several properties of $\NE$ in the accomplice manipulation game.

\paragraph{Stability.} 
Given the existence of an $\NE$ in accomplice manipulation games, a natural question is to ask whether, for every $\NE$ profile $\succ_{\NE}$, the matching $\DA(\succ_\NE)$ is stable with respect to true preferences. 
We address this question by showing that (i) the $\NE$ constructed in \cref{sec:NE} through the push-up dynamics results in a stable matching, and (ii) while all equilibria do not necessarily correspond to a stable matching, every $\NE$ is \emph{$X$-stable}, for $X=(M\times W)\setminus P$.

\begin{restatable}{theorem}{accomplicestable}
    \emph{(Stability of Nash Equilibria)}
\label{thm:somestable}
For an accomplice manipulation game with $P\subseteq M\times W$:%, we have that:
\begin{enumerate}[(i)]
    \item The matching corresponding to the $\NE$ profile constructed from the push-up dynamic is stable.
    \item Some $\NE$ profiles  result in an unstable matching with respect to true preferences.
    However, every matching corresponding to an $\NE$ profile is $\big((M\times W )\setminus P\big)$-stable.  %\HH{why not combine this with (ii)}
\end{enumerate}
\end{restatable}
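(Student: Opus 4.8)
The plan is to prove the three claims of \cref{thm:somestable} in order, leveraging the push-up machinery already established.

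For part (i), I would use \cref{lem:shrinking set}(i): the fixed point $\stratprof^*$ of the push-up dynamic satisfies $\mathcal{S}_{\stratprof^*} \subseteq \mathcal{S}_\succ$. Since $\DA(\stratprof^*)$ is by definition a stable matching with respect to the reported profile $\stratprof^*$, it lies in $\mathcal{S}_{\stratprof^*}$ and hence in $\mathcal{S}_\succ$. The only subtlety is that $\mathcal{S}_{\stratprof^*}$ consists of matchings stable under the reported women-truthful profile $\stratprof^*$, and one must confirm that membership in $\mathcal{S}_\succ$ is genuinely stability with respect to the \emph{true} profile; this is exactly the content of \cref{prop:accomplice_properties}(ii) chained along the dynamic (each step is no-regret by \cref{lem:actualiscurrent}), so the argument is a one-line induction.

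For part (ii), I would exhibit a small explicit instance — likely a modification of \cref{ex:unstable} or a fresh three-or-four-agent example — where some $\NE$ profile $\succ_\NE$ (one not reachable by the push-up dynamic, e.g. involving a push-down) yields $\DA(\succ_\NE)$ with a blocking pair. The key is to arrange that every strategic pair's beneficiary woman is already matched to her favorite achievable partner given the others' reports (so no better response exists), yet the induced matching is blocked by a pair \emph{inside} $P$. Constructing this carefully so that no strategic pair can improve is the fiddly part, but it is routine example-hunting rather than a conceptual obstacle.

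The real work is the $\big((M\times W)\setminus P\big)$-stability claim, which I expect to be the main obstacle. Suppose $\succ_\NE$ is an $\NE$ and suppose for contradiction that $\DA(\succ_\NE)$ has a blocking pair $(m,w) \notin P$ with respect to the true profile $\succ$. I want to derive a contradiction with the $\NE$ property. The idea: since $(m,w)$ blocks, $w \succ_m \mu_{\succ_\NE}(m)$ and $m \succ_w \mu_{\succ_\NE}(w)$. I would consider the strategic pair structure: because $m \notin P_m$ or $w \notin P_w$ is \emph{not} what we're told — we only know the pair $(m,w)$ is not jointly in $P$ — I need to be careful. The cleaner route is to show that if $(m,w)$ blocks $\DA(\succ_\NE)$ under $\succ$, then some strategic pair $(m',w')\in P$ has an available accomplice manipulation, contradicting $\NE$. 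Here I would invoke the characterization from \citet{hosseini2021accomplice} recalled in the Stability Relaxations paragraph: any matching stable with respect to a reported profile obtained by accomplice (push-up) manipulation by man $m'$ is $\{m'\}\times W$-stable with respect to $\succ$ — so blocking pairs can only have their man-coordinate among the manipulators $P_m$. Thus $m \in P_m$. Then since $(m,w)\notin P$, woman $w$ is not one of $m$'s beneficiaries; but $w$ strictly prefers $m$ to her current partner, and $m$ weakly prefers $w$ (blocking), so $m$ could push $w$ up in his list — this is a valid accomplice manipulation for the pair $(m, w'')$ for whichever $w''$ \emph{is} paired with $m$ in $P$, provided $w''$ also weakly benefits or is unaffected. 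Making this last step airtight — ensuring the push-up that helps the blocking woman $w$ does not hurt $m$'s designated beneficiary $w''$, or alternatively that $w$ herself being improvable already suffices — is the delicate point; I would handle it by applying \cref{prop:accomplice_properties}(iii) (every woman weakly improves under a no-regret push-up) so that $w''$ is automatically not harmed, closing the contradiction.
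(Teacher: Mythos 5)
Your treatment of part (i) matches the paper exactly: the fixed point's matching lies in $\mathcal{S}_{\stratprof^*}\subseteq\mathcal{S}_\succ$ by \cref{lem:shrinking set}(i), and chaining \cref{prop:accomplice_properties}(ii) along the no-regret dynamic is precisely how that inclusion is justified. For the existence of an unstable $\NE$, the paper does what you suggest, reusing \cref{ex:unstable} with $P=\{(m_3,w_4)\}$, where $(m_3,w_1)\notin P$ blocks the $\NE$ matching.

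The third claim is where your proof goes wrong, and the error is in the very first step: you negate the wrong condition. A matching is $\big((M\times W)\setminus P\big)$-stable when every blocking pair lies \emph{outside} $P$; the contradiction hypothesis must therefore be a blocking pair $(m,w)\in P$, not $(m,w)\notin P$. You are in effect trying to prove that no blocking pair outside $P$ can exist at an $\NE$ --- but that statement is false, and the paper's own counterexample for the first half of part (ii) refutes it: $(m_3,w_1)\notin P$ is exactly such a blocking pair. All the machinery you then deploy (the $\{m'\}\times W$-stability characterization, the worry about whether pushing up $w$ harms $m$'s designated beneficiary $w''$) is aimed at this unprovable target; moreover, that characterization only applies to profiles reached by no-regret push-up manipulations from $\succ$, not to arbitrary $\NE$ profiles, which may involve push-downs. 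Once the quantifier is fixed, the argument collapses to something much simpler and needs none of this: if $(m,w)\in P$ blocks $\DA(\succ_\NE)$ under $\succ$, let $m$ report $w$ at the top of his list. Then $\DA$ matches $m$ to $w$, so $m$ incurs no regret (he prefers $w$ to his current partner) and $w$, his designated beneficiary, strictly improves --- a valid accomplice manipulation by the strategic pair $(m,w)$ itself, contradicting that $\succ_\NE$ is an $\NE$. There is no ``other beneficiary $w''$'' to worry about, because the blocking woman is the beneficiary.
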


Thus, when $P=M\times W$, every $\NE$ results in a stable matching under $\DA$.

\begin{proof}
From Theorem \ref{cor:NE_existence}, the push-up dynamics converges to an $\NE$. By Lemma \ref{lem:shrinking set}, the set of stable matchings at each step of the dynamics satisfies:
        \begin{align*}
            \mathcal{S}_{\stratprof^{(t)}} \subseteq \mathcal{S}_{\stratprof^{(t-1)}} \ldots \subseteq \mathcal{S}_{\stratprof^{(1)}} = \mathcal{S}_{\succ}.
        \end{align*} 
        
        Let $\stratprof^{(t)}$ be the fixed point of the dynamic. The corresponding matching $\mu_{\stratprof_{i}} = \DA(\stratprof_{i}, \succ_{-P_m})$ belongs to the set $\mathcal{S}_{\stratprof^{(t)}}$. From the containment property of the set of stable solutions, $\mu_{\stratprof^{(t)}} \in \mathcal{S}_{\succ}$. The claim follows.
        
        For the second part, to show that some NE profiles result in an unstable matching, consider the instance in \cref{ex:unstable} with strategic pairs $P=\{(m_3,w_4)\}$. Notice that the profile $\succ' = (\succ_{-m_3}, \succ'_{m_3})$ is an $\NE$, but the resulting matching $\mu_{\succ'}=\DA(\succ')$ (marked by $^*$) is unstable since $(m_3, w_1)$ forms a blocking pair.
        
        Proof of $((M \times W) \setminus P)$-stability: Let $\succ_\NE$ be a $\NE$ profile for the accomplice game with the strategic pairs $P\subseteq M\times W$, and let its corresponding matching be $\mu_{\succ_\NE}= \DA(\succ_\NE)$.
        Assume, for contradiction, that $(m,w)\in P$ is a blocking pair in $\mu_{\succ_\NE}$. Consider the modified preference profile $\succ' = ({\succ_\NE}_{-m}, \succ'_{m})$ where $m$'s preferences are modified to place $w$ at the top, i.e. $\succ'_m = (w, \ldots)$. In $\mu_{\succ'}=\DA(\succ')$, $m$ is matched to $w$; that is, through the misreport $\succ'_m$, $m$ incurs no-regret and $w$ benefits from the misreport, i.e. $\succ'$ is an accomplice manipulation at $\succ_\NE$. This contradicts the assumption that $\mu_{\NE}$ is an $\NE$.
\end{proof}

The loss in stability in $\NE$ can also be measured through the price of anarchy and the price of stability for the accomplice game; details of this can be found in \cref{sec:appendix_accomplice}.

\paragraph{Number of $\NE$.} 

\cref{cor:NE_existence} showed that every accomplice manipulation game admits at least one pure strategy $\NE$. 
This raises the question about whether an accomplice game corresponding to an instance admits a unique NE. 
In fact, we show that the number of $\NE$ may indeed be factorially many. 

\begin{proposition}\label{prop:manyNEs}%\HH{make this a proposition}
Accomplice manipulation games may admit factorially many Nash equilibria.
\end{proposition}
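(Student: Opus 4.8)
The plan is to exhibit, for every $n$, a stable-matching instance with $|M|=|W|=n$ together with a set $P$ of strategic pairs whose accomplice manipulation game has at least $n!$ distinct Nash equilibria. A useful guiding observation is that $P$ must be a \emph{proper} subset of $M\times W$: when $P=M\times W$, \cref{thm:somestable} forces every $\NE$ to induce a stable matching, and the number of stable matchings of any instance is only $2^{O(n)}$, which is sub-factorial. So the construction should keep $P$ small, so that the ``$X$-stable'' (but possibly unstable) equilibria of \cref{thm:somestable}(ii), with $X=(M\times W)\setminus P$ large, become abundant.

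The cleanest realization uses a single strategic pair. Take $P=\{(m_1,w_1)\}$ and fix the (truthful, for everyone but $m_1$) preferences so that $w_1$ ranks $m_2$ first and $m_2$ ranks $w_1$ first, with all remaining preferences arbitrary. The steps are then: (i) since $m_2$ is not strategic and every woman reports truthfully, in \emph{every} strategy profile $(m_2,w_1)$ is a mutually top-ranked pair in the reported profile, so — because $\DA$ always returns a stable matching and a mutually top-ranked pair is matched in every stable matching — $\DA$ matches $w_1$ to $m_2$ at every profile; (ii) hence $w_1$ already receives her true first choice everywhere, so the pair $(m_1,w_1)$ can never carry out an accomplice manipulation, which would require $w_1$ to strictly improve; (iii) therefore \emph{every} report of $m_1$ is a Nash equilibrium, and $m_1$ has $n!$ distinct reports, giving $n!$ equilibria. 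Letting $n$ grow proves the proposition.

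I expect the only real subtlety to be the meaning of ``distinct'': the argument above yields $n!$ distinct strategy \emph{profiles}, which is the natural count since a Nash equilibrium \emph{is} a strategy profile; if one instead wants the equilibria to induce pairwise distinct \emph{matchings}, this inert-pair trick no longer works, and the hard part becomes engineering a ``routing gadget'' consisting of $k$ strategic men and $k$ strategic women plus padding (so $n=\Theta(k)$) in which each strategic man can, via a single inconspicuous push-up (\cref{prop:accomplice_properties}(i)), trigger a rejection chain that lifts exactly one strategic woman to a distinct best-attainable partner, such that after any such step that woman is matched maximally (so her accomplice can no longer help her) while a manipulation still remains available for the others. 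Any maximal push-up dynamic then stops after $k$ steps at a profile encoding a permutation of $[k]$; designing the chains so that distinct permutations give distinct matchings then yields $k!$ equilibria. The obstacle for this stronger version is purely the rejection-chain bookkeeping — guaranteeing that the pushes do not interfere and that no spurious equilibrium arises — and it is not needed for the proposition as stated.
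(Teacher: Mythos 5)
Your construction is correct and is essentially the paper's own proof: the paper likewise uses a single inert strategic pair (it makes $(m_i,w_i)$ mutually top-ranked for all $i<n$ and takes $P=\{(m_n,w_n)\}$, so the matching never depends on $m_n$'s report and all $n!$ reports are equilibria), and it counts distinct strategy profiles just as you do. The only quibble is your opening claim that $P$ \emph{must} be a proper subset of $M\times W$ — factorially many distinct $\NE$ profiles could all induce the same stable matching, so that inference is not valid — but it is only motivation and does not affect your argument.
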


\begin{proof}\label{ex:ManyNE}
Take a preference profile in which, for every $i<n$, $m_i$'s most preferred woman is $w_i$ and $w_i$'s most preferred man is $m_i$. Also, let $P=\{(m_n, w_n )\}$. Here, since the matching does not depend on  $m_n$'s ranking, every strategy profile (corresponding to different rankings submitted by $m_n$) is an $\NE$. So, there are $n!$ many NEs in this game.
\end{proof}

While in the example above all $\NE$ strategy profiles lead to the same matching, it is not always the case. For instance, in the example for the proof of \cref{thm:somestable} (ii), we show that an $\NE$ may correspond to an unstable matching but one resulting in a stable matching always exists. In fact, as we show in the next remark, we can also find $\NE$ profiles that correspond to different \emph{stable matchings}, and which can be obtained through different realizations of our push-up dynamics.

\begin{proposition}\label{prop:differentOutcomes}
For an accomplice manipulation game with the set of strategic pairs $P$, different realizations of the push-up dynamics may result in different $\NE$, each corresponding to a distinct stable matching. 
\end{proposition}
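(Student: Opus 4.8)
The plan is to exhibit a single instance together with a set of strategic pairs $P$ for which two different orderings of the admissible push-up manipulations drive the dynamic to two distinct fixed points, and then to check that these fixed points carry different stable matchings. The natural place to look is an instance whose stable lattice $\mathcal{S}_\succ$ is not a singleton, so that there is ``room'' for the women to be pushed up to incomparable positions depending on which accomplice acts first. Concretely, I would start from the classical small instance (three or four men and women) that has several stable matchings ordered as a non-trivial lattice, pick $P$ to contain two strategic pairs $(m,w)$ and $(m',w')$ whose beneficiary women $w,w'$ lie in ``competing'' rotations, and then trace the two dynamics: in one, $(m,w)$ manipulates first, reaching some intermediate profile from which $(m',w')$ can no longer help $w'$ (the lattice has already shrunk past the matching $w'$ wanted); in the other, $(m',w')$ goes first and symmetrically blocks the first improvement. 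Each branch then stabilises at a profile whose $\DA$-outcome is a distinct element of $\mathcal{S}_\succ$.

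The key steps, in order, are: (1) fix the instance and $P$ and compute $\mu_\succ = \DA(\succ)$, identifying at least two other stable matchings $\mu_1,\mu_2 \in \mathcal{S}_\succ$ with $\mu_1$ and $\mu_2$ incomparable (or at least distinct) and with $\mu_1 \succ_w \mu_\succ$, $\mu_2 \succ_{w'} \mu_\succ$; (2) verify that at $\succ$ both $(m,w)$ and $(m',w')$ have an available no-regret push-up accomplice manipulation — here I would use \cref{prop:accomplice_properties}(i) to restrict attention to inconspicuous single-woman push-ups, which makes the case analysis finite and short; (3) follow branch~A: after $(m,w)$'s push-up, compute the new profile $\stratprof^{(2)}$, show $\mathcal{S}_{\stratprof^{(2)}} \subsetneq \mathcal{S}_\succ$ (via \cref{lem:shrinking set}) in a way that removes every stable matching strictly better for $w'$, so $(m',w')$ can do nothing and the dynamic has converged with outcome $\mu_1$; (4) follow branch~B symmetrically, reaching a fixed point with outcome $\mu_2$; (5) conclude that $\mu_1 \neq \mu_2$, and invoke \cref{thm:somestable}(i) to note both are stable, hence the two $\NE$ correspond to distinct stable matchings.

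I would package the argument as a worked example in the same style as \cref{ex:unstable} and the proof of \cref{prop:manyNEs}, with the preference lists in a \texttt{tabularx} and the $\DA$-matchings and the two intermediate matchings marked by decorations, so that the reader can check steps (2)--(4) by inspection rather than through prose. The main obstacle is step~(3)/(4): I need an instance small enough to present compactly yet structured so that the \emph{first} manipulation genuinely destroys the \emph{other} pair's opportunity — i.e.\ the two beneficiary improvements are mutually exclusive rather than independently attainable. Getting this mutual exclusivity right is the delicate part; if the two improvements turn out to be independent, the dynamic would reach a common fixed point incorporating both, and the proposition's claim of \emph{distinct} outcomes would fail for that instance. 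I expect a four-man/four-woman instance built around two interlocking rotations to suffice, and the lattice-shrinking guarantee of \cref{lem:shrinking set}(i) is exactly the tool that certifies, once branch~A's first step is taken, that no further beneficiary-$w'$ improvement survives.
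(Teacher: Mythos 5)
Your overall strategy matches the paper's: the proposition is proved purely by exhibiting a single instance with two realizations of the push-up dynamics that converge to different $\NE$ profiles whose $\DA$ outcomes are distinct (stability of both outcomes then follows from \cref{thm:somestable}(i), exactly as you say). However, the entire mathematical content of this proposition \emph{is} the witness instance, and your proposal does not supply one --- it is a plan for finding one, together with an honest admission that the ``mutual exclusivity'' of the two improvements is the delicate part and might fail. As written, the proof therefore has a genuine gap: steps (1)--(4) are all conditional on an instance you have not constructed, and you give no argument that such an instance exists. The paper itself does not hand-craft a small example; it presents a computer-generated $n=30$ instance in \cref{sec:appendix_accomplice}, which is at least weak evidence that a clean four-agent ``two interlocking rotations'' construction is not as easy to pin down as your plan assumes.

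A second, more substantive divergence: the mechanism that produces distinct fixed points in the paper's example is not the one you describe. In the paper's Dynamics 1, a man ($m=22$) performs a no-regret push-up early on but \emph{ends up with regret relative to his truthful partner} after subsequent manipulations by others; in Dynamics 2 he does not move first and later cannot manipulate without regret at all, and a woman ($w=21$) only acquires a profitable deviation because of earlier manipulations in one branch. So the divergence is driven by path-dependence of the no-regret condition (which is always evaluated against the \emph{true} preferences $\succ$, not the current profile) rather than by your proposed picture of two incomparable lattice elements whose rotations block one another after one step each. Your picture may still be realizable, but you would need to verify carefully that after branch~A's single step the pair $(m',w')$ truly has \emph{no} no-regret accomplice manipulation left --- \cref{lem:shrinking set}(i) only gives $\mathcal{S}_{\stratprof^{(2)}}\subseteq\mathcal{S}_\succ$, and the lattice need not shrink from the women-optimal end, so matchings improving $w'$ may well survive and the two branches may merge at a common fixed point. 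Until a concrete instance is exhibited and checked, the proposition is not proved.
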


An example of such an instance was computer-generated and is provided in \cref{sec:appendix_accomplice}. This example also shows that, during the sequential misreports, some strategic pairs who did not have an incentive to misreport in the truthful profile may benefit from misreporting down the line.  

\section{Beyond Accomplice Manipulation Games 
} \label{sec:transferable}

We previously established the existence and computational tractability of finding an $\NE$ for the accomplice manipulation game through the construction a best-response dynamic (push-up dynamics) that converges to an $\NE$. In this section we show that our approach can be applied to simplify the analysis of other strategic games, specifically, one-for-many and woman self-manipulation game. 

\subsection{One-for-Many Manipulation Games} 
For every one-for-many manipulation game with strategic agents $P=(P_m, \{P_w^m\}_m)$,  we establish the following relation between the set of Nash equilibria of this game and an accomplice manipulation game with an appropriately defined strategic pairs $P'$ (where $P'\neq P$). 

\begin{restatable}{theorem}{oneforall}
    \label{th:one-for-all}
    For a one-for-many game with $(P_m, \{P_w^m\}_m)$ as the set of strategic players, consider an accomplice manipulation game in which the strategic players are defined as follows: $P' = \{(m,w)\, |\, m\in P_m \text{ and } w\in P_w^m \}$. Then, every $\NE$ profile $\succ_{\NE}$ of the accomplice game is also an $\NE$ profile of the one-for-many manipulation game.
\end{restatable}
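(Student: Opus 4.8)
The plan is to argue by contradiction: suppose $\succ_{\NE}$ is an $\NE$ of the accomplice game with strategic pairs $P' = \{(m,w) \mid m\in P_m,\, w\in P_w^m\}$, but it is \emph{not} an $\NE$ of the one-for-many game. Then some man $m\in P_m$ has a one-for-many better response $\stratprrof_m$ at $\succ_{\NE}$ on behalf of his beneficiary set $P_w^m$. Writing $\mu \coloneqq \DA(\succ_{\NE})$ and $\mu' \coloneqq \DA(({\succ_{\NE}}_{-m},\stratprrof_m))$, the defining conditions give $\mu' \succeq_w \mu$ for every $w\in P_w^m$, $\mu' \succ_{w'} \mu$ for some $w'\in P_w^m$, and $\mu' \succeq_m \mu$ (no regret for $m$). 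The goal is to extract from this a genuine \emph{accomplice} manipulation for some pair $(m,w')\in P'$ at $\succ_{\NE}$, contradicting the assumption that $\succ_{\NE}$ is an accomplice-game $\NE$.

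The key observation is that the very same misreport $\stratprrof_m$ already witnesses an accomplice manipulation for the pair $(m,w')$, where $w'$ is the beneficiary who strictly improves: condition (i) of accomplice manipulation ($\mu' \succ_{w'} \mu$) holds by choice of $w'$, and condition (ii) ($\mu' \succeq_m \mu$) is exactly the no-regret condition that the one-for-many better response already guarantees. Since $m\in P_m$ and $w'\in P_w^m$, the pair $(m,w')$ is indeed in $P'$, so this is an admissible deviation in the accomplice game. Hence $\succ_{\NE}$ admits an accomplice manipulation by $(m,w')$ — contradiction. First I would state this reduction cleanly, then verify each of the two accomplice-manipulation conditions against the one-for-many better-response conditions, noting that partners are compared with respect to the true profile $\succ$ in both definitions so no translation is needed.

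The one subtlety to check carefully — and the step I expect to be the main obstacle, though it is more a matter of bookkeeping than of genuine difficulty — is making sure the comparison frames line up. In particular, the one-for-many definition requires $\mu'$ to Pareto dominate $\mu$ over $P_w^m$ (weakly for all, strictly for at least one), while accomplice manipulation only needs strict improvement for the single beneficiary $w'$; since Pareto domination is strictly stronger, this direction is immediate and we simply discard the "weakly better for all other beneficiaries" clause. One should also confirm that the one-for-many game's definition of "better response" indeed includes the no-regret clause $\mu' \succeq_m \mu$ (it does, as condition (ii)), so that condition (ii) of accomplice manipulation transfers verbatim. I would close by remarking that the converse inclusion fails in general — an accomplice-game $\NE$ need not constrain $m$ from helping a beneficiary $w$ at the expense of another beneficiary $w''\in P_w^m$ — which is consistent with the theorem only asserting one direction of containment, $\NE(P') \subseteq \NE(P)$.
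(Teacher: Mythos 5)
Your proposal is correct and matches the paper's own argument: both proofs proceed by contradiction, take the strictly benefiting woman $w'$ from the one-for-many deviation, and observe that the same misreport is an accomplice manipulation for the pair $(m,w')\in P'$. The additional bookkeeping you flag (Pareto dominance implying the single-beneficiary condition, and the no-regret clause transferring verbatim) is exactly what makes the paper's one-line reduction go through.
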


So, to find an $\NE$ for the one-for-many game, it is enough to find an $\NE$ for its corresponding accomplice game. From Section \ref{sec:accomplice}, since this can be done through its push-up dynamics, the same response dynamics also gives the $\NE$ solution for the one-for-many game.

\begin{corollary}
    Consider a one-for-many manipulation game defined on the strategic men $P_m$ with $\{P^m_w\}_m$ as their corresponding beneficiaries. An $\NE$ profile for this game can be computed in polynomial time. Furthermore, at least one of the $\NE$ profiles results in a stable matching under $\DA$. %and all of them are $P$-stable, where $P= \{(m,w)\, |\, m\in P_m \text{ and } w\in P^m_w\}$. 
\end{corollary}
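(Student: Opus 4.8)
The plan is to establish this corollary as a direct consequence of Theorem~\ref{th:one-for-all} together with the machinery developed for accomplice games in Section~\ref{sec:accomplice}. First I would observe that, given a one-for-many game with strategic men $P_m$ and beneficiary sets $\{P^m_w\}_m$, we form the associated accomplice game on the strategic pairs $P' = \{(m,w)\mid m\in P_m,\ w\in P^m_w\}$ exactly as in Theorem~\ref{th:one-for-all}. By \cref{cor:NE_existence}, this accomplice game admits an $\NE$ profile $\succ_\NE$ that is computable in polynomial time, obtained as the fixed point of a push-up dynamic starting from the truthful profile. By Theorem~\ref{th:one-for-all}, $\succ_\NE$ is then also an $\NE$ profile of the original one-for-many game, which gives the first claim: the same push-up dynamics computes an $\NE$ for the one-for-many game in polynomial time.

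For the stability claim, I would invoke \cref{thm:somestable}(i): the $\NE$ profile constructed from the push-up dynamic of the accomplice game yields a matching $\DA(\succ_\NE)$ that is stable with respect to the true preferences, because the push-up dynamic only ever moves within the original stable lattice (\cref{lem:shrinking set}(i)). Since this same profile $\succ_\NE$ is an $\NE$ of the one-for-many game and induces the same matching under $\DA$ (the mechanism is identical; only the equilibrium condition differs), at least one $\NE$ profile of the one-for-many game corresponds to a stable matching. Thus the corollary follows without any new argument beyond combining these two ingredients.

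The main subtlety — and the only place where care is needed — is to confirm that the $\NE$ profile guaranteed by Theorem~\ref{th:one-for-all} is specifically the one arising from the push-up dynamic (so that \cref{thm:somestable}(i) applies), rather than merely \emph{some} $\NE$ of the accomplice game. This is immediate since Theorem~\ref{th:one-for-all} asserts that \emph{every} $\NE$ of the accomplice game is an $\NE$ of the one-for-many game, so in particular the push-up fixed point qualifies. I do not anticipate any genuine obstacle here; the work is entirely in the lemmas already proved, and this corollary is essentially a bookkeeping statement that packages \cref{cor:NE_existence}, \cref{thm:somestable}(i), and Theorem~\ref{th:one-for-all} together.
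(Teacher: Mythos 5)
Your proof is correct and follows exactly the paper's intended argument: the paper treats this corollary as an immediate consequence of Theorem~\ref{th:one-for-all} combined with \cref{cor:NE_existence} and \cref{thm:somestable}(i), which is precisely the packaging you describe. Your observation that the ``every $\NE$'' quantifier in Theorem~\ref{th:one-for-all} is what lets the push-up fixed point carry over is the right (and only) subtlety to check.
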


\subsection{Woman Self-Manipulation Games}

As before, we construct a best-response dynamic which, when it converges, finds an $\NE$ of the woman self-manipulation game (see Appendix \ref{sec:appendix_sec5} for the definition). %\grzesiek{Would be nice to be consistent throughout wrt saying/not saying in which part of the appendix is the content} %\shraddha{Add def in Sec 3?} \HH{there is no room, add all that to the Appendix and refer here.}

\paragraph{Inconspicuous Dynamics.} For strategic women $P_w$, an inconspicuous dynamic starting at $\stratprof^{(1)} =\, \succ$ is a sequence of strategy profiles $\overline{\stratprof}= (\stratprof^{(1)}, \ldots)$, where each subsequent profile $\stratprof^{(t+1)}$ {results from an optimal inconspicuous\footnote{A strategy $\stratprrof_w$ for a misreporting woman $w$ is said to be \textit{inconspicuous} to $\stratprof_w$ if it can be derived from $\stratprof_w$ by promoting at most one man and making no other changes to it.} self-manipulation by some strategic woman $w\in P_w$ in the matching instance $\langle M, W, \stratprof^{(t)}\rangle$, if possible.}  Else, $\stratprof^{(t+1)} = \stratprof^{(t)}$.

Notice the difference between this dynamics and the push-up dynamics for the accomplice game: The push-up dynamics consisted of no-regret, push-up manipulations, and we showed that finding such a strategy at each step $\stratprof^{(t)}$ was the same as finding a push-up manipulation in the instance $\langle M, W, \stratprof^{(t)} \rangle$. In contrast, the inconspicuous dynamic directly assumes that $\stratprof^{(t+1)}$ is obtained by finding an inconspicuous manipulation in $\langle M, W, \stratprof^{(t)} \rangle$. We show that the inconspicuous dynamic converges to an $\NE$. 

\begin{restatable}{theorem}{womanNE}
    \emph{($\NE$ of woman self-manipulation game)}\label{cor:NE_existence_woman}
For a woman manipulation game with strategic women $P_w\subseteq W$, the inconspicuous dynamics converges in polynomial time. Moreover, the converged point is an $\NE$.
\end{restatable}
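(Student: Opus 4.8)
The plan is to follow the same two–step template used for the accomplice game in \cref{sec:accomplice}: show (a) that the inconspicuous dynamics reaches a fixed point $\stratprof^{\ast}$ after polynomially many steps, and (b) that every fixed point is a Nash equilibrium. The single external ingredient is the classical fact \citep{teo2001gale,vaish2017manipulating} that, against any fixed profile of the other agents, a woman's \emph{optimal} self-manipulation can always be realised by an inconspicuous report (promoting a single man in her true list) and can be computed in polynomial time. This is exactly what makes each step of the dynamics well defined and polynomial-time implementable, and it essentially gives direction (b): at a fixed point no strategic woman has a beneficial optimal inconspicuous deviation, hence --- by optimality --- no beneficial inconspicuous deviation at all, hence --- by the cited fact --- no beneficial self-manipulation whatsoever, so $\stratprof^{\ast}$ is an $\NE$.

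To make this last argument airtight one must reconcile ``inconspicuous relative to the current report'' with ``inconspicuous relative to the true list''. I would do this by maintaining, along the dynamics, the invariant that --- up to permutations above or below the $\DA$-partner, which do not affect $\DA$ \citep{huang2006cheating} --- every strategic woman $w$'s current report equals ``$\succ_w$ with her current $\DA$-partner $\mu_{\stratprof^{(t)}}(w)$ promoted to the top''. This is preserved by each step: after a move giving $w$ the new partner $m^{\ast}$, her report can be replaced by ``$\succ_w$ with $m^{\ast}$ on top'' without changing the outcome. Under this invariant, a top-promotion from the current report coincides (as a $\DA$-input) with a top-promotion from $\succ_w$, so the classical characterisation of the optimal manipulation applies verbatim at the fixed point, completing (b).

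For direction (a), the intended potential is $\Phi(\stratprof^{(t)}) = \sum_{w\in W}\bigl(\text{position of }\mu_{\stratprof^{(t)}}(w)\text{ in }\succ_w\bigr)$, which lies in $[n,n^{2}]$. To force it to strictly decrease at every nontrivial step I would establish the structural counterpart of \cref{lem:shrinking set}: along the inconspicuous dynamics the outcome stays inside the true stable lattice, $\mu_{\stratprof^{(t)}}\in\mathcal{S}_{\succ}$, and consecutive outcomes are comparable in it, with $\mu_{\stratprof^{(t+1)}}\succeq_W\mu_{\stratprof^{(t)}}$ and the moving woman strictly better off. Granting this, no woman's true rank ever rises, each nontrivial step drops $\Phi$ by at least one, and so the dynamics stops within $n^{2}$ steps; together with the per-step polynomial computation this yields polynomial-time convergence, and then (b) gives the $\NE$ claim.

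The hard part --- and where I expect the real work to lie --- is precisely this structural lattice claim for inconspicuous moves. It does not reduce to the man-side push-up version: the stable lattice need not shrink under a single inconspicuous report, and the naive blocking-pair argument breaks both when the promoted man is not the one the woman ends up matched with and when her reported and true orders disagree below her current partner. The route I would take is to use the canonical-form invariant to read a beneficial optimal inconspicuous move as woman $w$ becoming ``more demanding'' in men-proposing $\DA$ --- committing her new, strictly better (true) partner to the top --- and then run the two $\DA$ executions, before and after the move, side by side, tracking the induced rejection chain. The coupling must show simultaneously that the resulting matching has no blocking pair with respect to $\succ$ and that the chain moves every woman weakly up and every man weakly down in true preference, so that, in particular, no woman is pushed strictly below her previous partner. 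Getting this coupling exactly right and ruling out the apparent failure cases is the crux; once it is in place, the remaining steps are bookkeeping parallel to \cref{sec:accomplice}.
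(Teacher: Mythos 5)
Your skeleton matches the paper's: convergence via a monotone potential plus the argument that a fixed point admits no inconspicuous manipulation in the \emph{current} instance and hence, by optimality of inconspicuous manipulations, no self-manipulation at all. Your potential (sum of women's true ranks) is the mirror image of the paper's (men can only get strictly worse, each at most $n$ times, giving the same $n^2$ bound); both are fine once the lattice containment is in hand. However, there are two concrete gaps. First, the step you yourself flag as the crux --- that an optimal inconspicuous self-manipulation keeps the outcome inside a shrinking stable lattice, $\mathcal{S}_{\stratprrof}\subseteq\mathcal{S}_{\stratprof}$ --- is not proved in your proposal; you only sketch a ``side-by-side \DA{} coupling'' plan. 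The paper does not prove this from scratch either: it is Theorems 4 and 5 of \citet{vaish2017manipulating} (\cref{prop:vaish} in the appendix), applied at each step to the instance $\langle M,W,\stratprof^{(t)}\rangle$. Without either the citation or a completed coupling argument, your convergence proof and your claim that no woman's true rank ever rises are unsupported, so the proposal is incomplete exactly where you predicted the work would lie.

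Second, your ``canonical-form invariant'' --- replacing each woman's report by ``$\succ_w$ with her current \DA{}-partner promoted to the top'' --- is not licensed by the permutation result of \citet{huang2006cheating} that you invoke. That result (\cref{prop:permute}) only permits reordering \emph{within} $\mathcal{L}$ and \emph{within} $\mathcal{R}$; promoting the partner to the top moves every man of $\mathcal{L}$ across the partner into $\mathcal{R}$, which can alter whom the woman tentatively holds and rejects during \DA{} and hence change the rejection chains and the outcome. The paper bridges ``inconspicuous relative to the current report'' and ``relative to the true list'' differently and more economically, via the set \emph{equalities} of \cref{lem:woman_properties}(2): the set of men ranked below the current partner is the same in $\stratprof^{(t)}_w$ as in $\succ_w$ (the promoted man always lands below the new partner, else \DA{} would have matched her to him). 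That equality is exactly what converts ``no manipulation in $\langle M,W,\stratprof^{(t)}\rangle$'' into ``no manipulation with respect to $\succ$'' at the fixed point. You would need to either prove your stronger invariant directly or replace it with this weaker set-equality argument.
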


The proof is analogous to the one for accomplice manipulation game, and is deferred to \cref{sec:appendix_sec5}.

\section{Experimental Results}
\label{sec:experiments}

\begin{figure*}[t] % [t] places it at the top of the page
    \centering
    \begin{subfigure}{0.28\textwidth} % Adjust width to fit 4 in a row
        \centering
        \includegraphics[width=\linewidth]{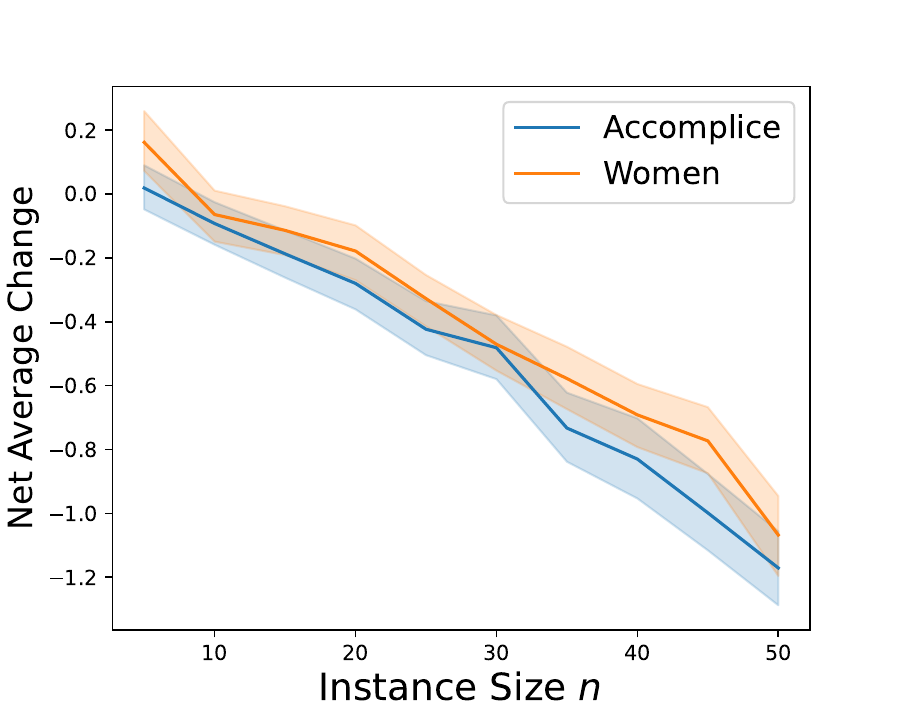} % replace with your image file
        \caption{Net welfare: Accomplice and woman self-manipulation games}
        \label{fig:welfare_acc_w}
    \end{subfigure}
    \hfill
  %\begin{subfigure}{0.28\textwidth}
  %      \centering
   %     \includegraphics[width=\linewidth]{net_average_filtered.pdf}
   %     \caption{Net welfare for agents in accomplice and woman self-manipulation games}
   %     \label{fig:2}
   % \end{subfigure}
    \begin{subfigure}{0.28\textwidth}
        \centering
        \includegraphics[width=1.2\linewidth]{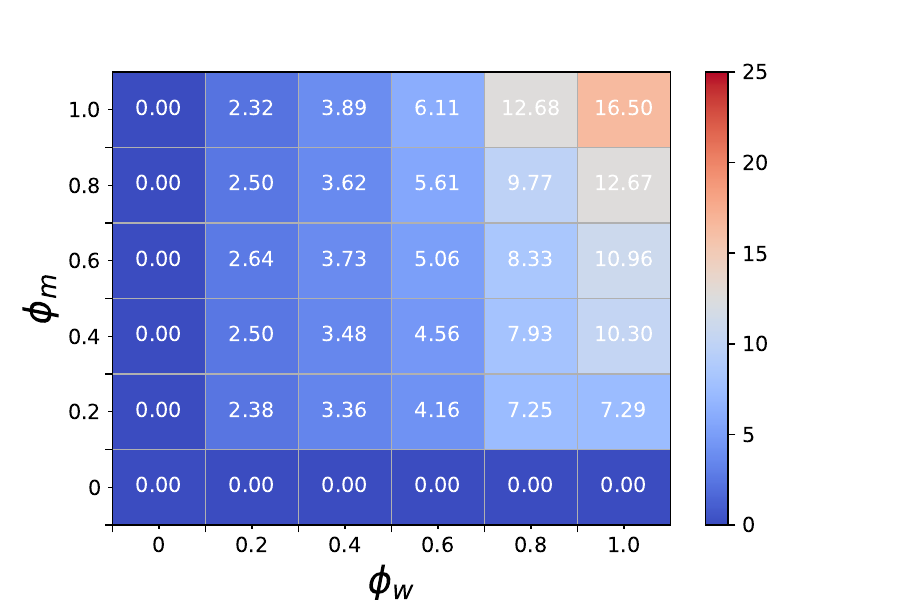}
        \caption{Accomplice game: Gain for best-off woman.}
        \label{fig:acc_best_off_phi}
    \end{subfigure}
    \hfill
   %\begin{subfigure}{0.28\textwidth}
    %     \centering
         %\includegraphics[width=.9\linewidth]{avg_benefit_mallow -- woman.pdf}
     %\caption{Woman self-manipulation game: Average benefit of women}
      %\label{fig:3}
     %\end{subfigure}
%\hfill
  \begin{subfigure}{0.28\textwidth}
        \centering
        \includegraphics[width=1.2\linewidth]{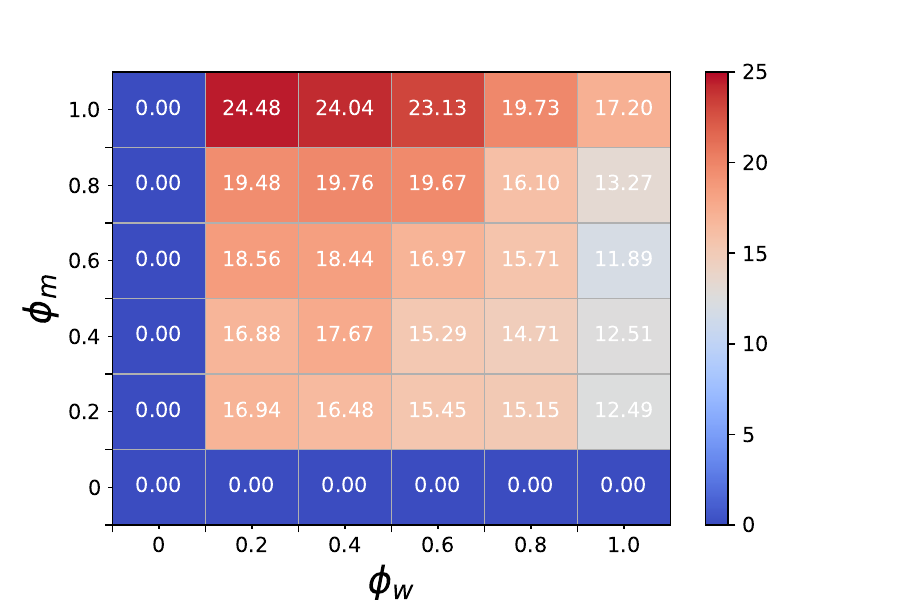}
        \caption{Self-manipulation game: Gain for best-off woman.}
        \label{fig:w_best_off_phi}
   \end{subfigure}
    \caption{Welfare of agents in $\NE$ of accomplice manipulation and woman self-manipulation games as a function of the instance size $n$ and the dispersion parameters $(\phi_w, \phi_m)$ in the Mallows' model.}
    \label{fig:benefit_phi}
\end{figure*}

%\HH{Explicitly mention the questions we are addressing here, and why.} \grzesiek{Transfer from arxiv.}
While we theoretically showed the existence of a pure strategy $\NE$ and some of its properties, several questions remain open about its convergence in large-scale applications.
%For instance, how long (as a function of number of players) would it take for a best-response dynamic to converge to an $\NE$?
%Or, how does strategic manipulation affect the welfare of the players?
We investigate two properties:

\begin{enumerate}[(i)]
    \item \emph{Length of dynamics.} This is interesting since the length of these sequences can be viewed as a proxy for the complexity of the games, especially in practice. So, in reality, can agents reach an $\NE$ `naturally' through best-response actions? %\shraddha{State as a question. Cite where papers look at length as proxy of complexity.} 
    %\shraddha{Change to a question}
    \item \emph{Welfare of agents in an $\NE$.} Since  the constructed $\NE$ is stable and the set of stable matchings forms a lattice, women receive better partners in $\NE$ as compared to the truthful matching and men receive worse partners. But how large is their respective gain or loss?  %\shraddha{Change to welfare}
\end{enumerate}
%By studying We show that the games are not very complex since the length of the best-response dynamics is rather small. This also indicates that the method of finding equilibria that we propose can be efficiently applied to large data. Furthermore, the gain for the best-off woman is significant even though the average gain is not as large.
For each of the above two properties, we study the effect of (a) the instance size, and (b) population as defined by correlation between preferences.
%For this, we focus on accomplice and woman self-manipulation games. Since an $\NE$ for the one-for-many game is computed using the push-up dynamics for accomplice manipulation games, their properties can also be observed through these experiments. \HH{I think this paragraph can be shortened significantly; particularly the first part of this paragragh is too long. For instance, the two sentences before this comment do not add much and can be removed.} % Recall from Theorem \ref{th:one-for-all}, that an equilibrium for the one-for-all game is computed using the push-up dynamics defined for accomplice manipulation games. Hence, the properties of this game can also be observed through our experiments. 
%First, as a proxy for the complexity of these manipulation games, we check the \emph{length} of our push-up dynamic sequences in particular instances; as the length of the sequence increases, so does the complexity of finding equilibria in a game. Next, we focus on \emph{women's benefit} in $\NE$ as compared to the truthful profile.
Instance sizes are measured by $n$, i.e., the number of men/women. We measure the correlation between preferences using \emph{Mallows' model} which is characterized by \emph{central ranking} $u$, and a \emph{dispersion parameter} $\phi \in [0,1]$ \cite{mallows1957non}. Here, the probability of selecting a ranking $v$ is proportional to $\phi^{\text{dist}(u,v)}$, where $\text{dist}(u,v)$ is the minimum number of pairwise swaps required to transform $u$ to $v$. The parameter $\phi$ determines the correlation between the preferences; $\phi=0$ corresponds to unanimity to $u$ and $\phi=1$ to impartial culture. Mallows' model has been previously studied in the context of matchings \citep{brilliantova2022fair,levy2017novel}. %See also \cite{boehmer2024guide} for a further discussion of applications of Mallows's model in social choice theory. \HH{I don't think this last citation is necessary.}

\begin{comment}
%In this section, we complement our theoretical findings with an empirical analysis of the push-up dynamics in the games we consider. 
With a focus on accomplice and woman self-manipulation games, we experimentally evaluate the impact of instance size and correlated preferences on the length of the dynamics (a proxy for the complexity of the games) \grzesiek{A line of motivation for length being a proxy?} and the gain for the women under $\NE$. Since an $\NE$ for the one-for-many game is computed using the push-up dynamics for accomplice manipulation games, their properties can also be observed through these experiments. % Recall from Theorem \ref{th:one-for-all}, that an equilibrium for the one-for-all game is computed using the push-up dynamics defined for accomplice manipulation games. Hence, the properties of this game can also be observed through our experiments. 

%First, as a proxy for the complexity of these manipulation games, we check the \emph{length} of our push-up dynamic sequences in particular instances; as the length of the sequence increases, so does the complexity of finding equilibria in a game. Next, we focus on \emph{women's benefit} in $\NE$ as compared to the truthful profile.

Instance sizes are measured by $n$ and the correlation between preferences through \emph{Mallows' model} \cite{mallows1957non} with a \emph{central ranking} $u$ and a \emph{dispersion parameter} $\phi \in [0,1]$. Then, the probability of selecting a ranking $v$ is proportional to $\phi^{\text{dist}(u,v)}$, where $\text{dist}(u,v)$ is the minimum number of pairwise swaps required to transform $u$ to $v$.  $\phi$ determines the correlation between the preferences; $\phi=0$ corresponds to unanimity to $u$ and $\phi=1$ to impartial culture. Preferences generated from Mallows' model have been studied previously in the context of matchings \cite{brilliantova2022fair,levy2017novel}. \grzegorznew{See also \cite{boehmer2024guide} for a further discussion of applications of Mallows's model in social choice theory.} %\grzesiek{Cite guide to experiments as a pointer to the use of Mallow's in broader social choice}.

%\shraddha{For now, all the line plots are just placeholders. I need to re-run the experiments for the confidence intervals around it.}
\end{comment}
%\subsection{Length of Dynamics}
%Here, we present our result concerning the length of push-up dynamic sequences.

\paragraph{Length of Dynamics.}

%First, we focus on how the lengths of push-up dynamic sequences change as the size of an instance (depicted by $|M|=|W|=n$) grows. 
To measure the impact of instance size, we generate 500 instances corresponding to each $n$ in $\{5, 10 , 15, \ldots, 50 \}$ using impartial culture.\footnote{Several works in stable matchings use IC to generate preferences \cite{ashlagi2017unbalanced,pittel1989average}.}
We observe that the average and the maximum length (across samples) (i) increase with $n$, (ii) are larger for the woman self-manipulation game than the accomplice game, and (iii) are relatively small even for large $n$ for both games, indicating that these games are not very complex. %The plots for these experiments are presented in \cref{sec:appendix_sec6}.

To evaluate the effect of correlated preferences, we generate our instances of size $n=30$,
and dispersion parameters $\phi_m, \phi_w$ (for men and women respectively) ranging $\{0, 0.2, \ldots, 1 \}$. For each pair $(\phi_m, \phi_w)$, we generate 500 instances by: (i) uniformly sampling central rankings $u_m$ and $u_w$ for men and women, and (ii) generating $n$ samples each from $u_m,u_w$ and their corresponding $\phi_m,\phi_w$ using Mallows' model. This generates a preference profile for both groups.

For both games, we see that (i) for $\phi_m=\phi_w$, the average length increases with $\phi_m$, and (ii) instances in which $\phi_m$ and $\phi_w$ are close to each other have larger average lengths. The plots of these experiments, and a similar evaluation of the  maximum length are provided in \cref{sec:appendix_sec6}.

\paragraph{Welfare of Agents in $\NE$.}

The instances for the two experiments (impact of instance size and impact of correlated preferences) are generated as before. The net gain (loss) for a woman (man) is measured as the increase (decrease) in the ranking of her (his) partner in the $\NE$ matching as compared to the truthful matching. %Similarly, the net loss of a max is the decrease in the ranking of his partner in the $\NE$ matching as compared to the truthful matching. 

We observe that, for both the games, the average gain for women and the average loss for men in $\NE$, as well as the gain for the best-off woman and the loss for the worst-off man, increases with $n$. Surprisingly, however, the loss of the men is larger than the gain for the women, i.e. the net welfare (for men and women together) is negative and decreases with $n$; see \cref{fig:welfare_acc_w}. Note that this observation addresses an important gap of how strategic behavior impacts welfare of agents in stable matching \citep{hosseini2024strategic}.%\shraddha{Not sure if we should refer to the OP. \HH{don't mention the exact Open Problem; otherwise it is OK.}} %\shraddha{Should we refer to the survey Q7?} %While  the average gain is more for the accomplice game as compared to the woman self-manipulation game, there is no significant difference between the gain for the best-off woman across the two games. 

In our experiments for the impact of correlated preferences, we observe that the average gain for women and the gain for the best-off woman increase with the increase in either of the dispersion parameters for the accomplice manipulation game. Surprisingly, however, while the average gain in the woman self-manipulation game increases with $\phi_m$, it is the largest for when $\phi_w \sim \phi_m$. Moreover, we observe that the largest gain for the best-off woman occurs in instances with a high correlation in men's rankings but low correlation in women's preferences. Also, for smaller values of $(\phi_m,\phi_w)$, the average gain as well as the gain of the best-off woman is significantly higher for the woman self-manipulation game as compared to the accomplice manipulation game. 
These results are presented in Figure \ref{fig:benefit_phi} and in  \cref{sec:appendix_sec6}. %\grzesiek{Emphasise imbalance between the impact of $\phi$ for men and for women.}\shraddha{We didn't run the experiments for the men loss against $\phi$.}

%\shraddha{Plots for average  accomplice and net average accomplice+women.}

\section{Conclusion and Future Work}

This work develops a technique that establishes the existence and polynomial time computation of accomplice manipulation and one-for-many manipulation games in two-sided matching markets. Specifically, we carefully construct a best-response dynamic---starting at the truthful profile---that converges to a Nash equilibrium. An interesting future direction would be to study the convergence of such dynamics starting from non-truthful profiles. 
%We also show that this technique can be extended to analyze the previously studied woman self-manipulation game. 
%
Other future directions include understanding properties of mixed strategy $\NE$ in these games, as well as studying scenarios in which agents not only choose their misreports but also whether to engage in self-manipulation or manipulation through an accomplice.

%\shraddha{Analyzing mixed strategy $\NE$}

%\HH{check the references: some of them are not formatted properly (or are not complete). For instance, In IJCAI is not a correct format.}

% \newpage
% \bibliographystyle{named}
% \bibliography{ref}

% \clearpage

\section*{Acknowledgments}
This project has received partial funding from the European Research Council
(ERC) under the European Union’s Horizon 2020 research and innovation
programme (grant agreement No 101002854), and was supported in part by NSF Awards IIS-2144413 and IIS-2107173.

\begin{center}
  \includegraphics[width=4cm]{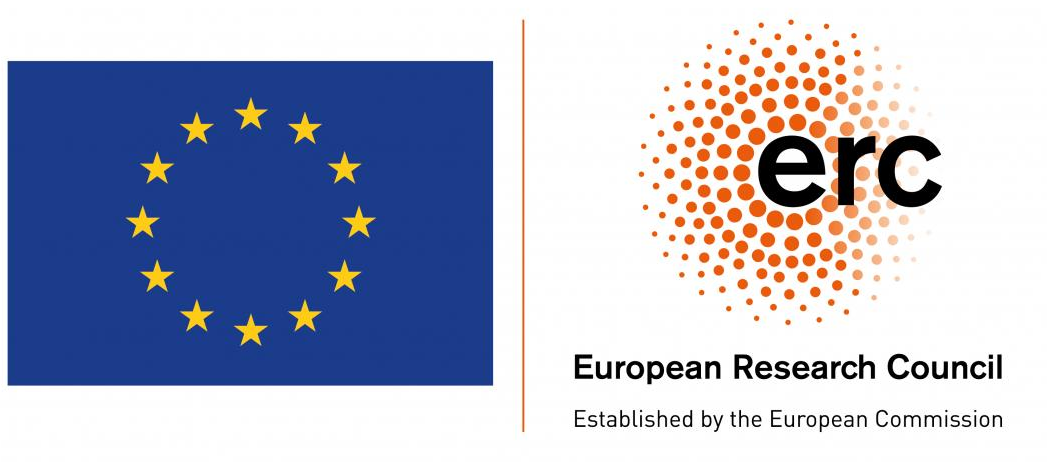}
  \hspace{3em}
  \includegraphics[width=3.5cm]{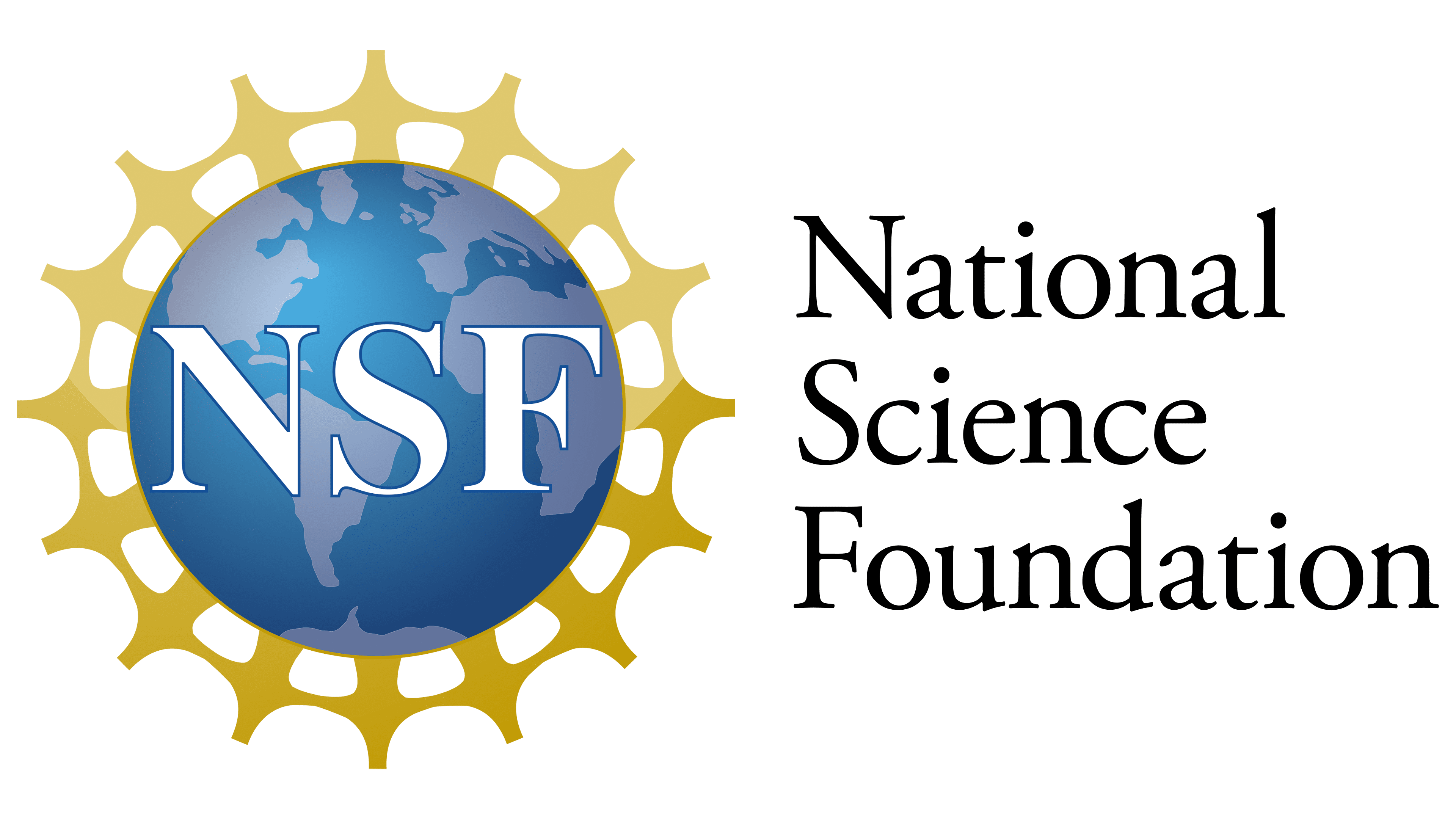}
\end{center}

\bibliographystyle{named}
\bibliography{ref}

\begin{thebibliography}{}

\bibitem[\protect\citeauthoryear{Abdulkadiro{\u{g}}lu \bgroup \em et al.\egroup }{2005a}]{APR05new}
Atila Abdulkadiro{\u{g}}lu, Parag~A Pathak, and Alvin~E Roth.
\newblock {The New York City High School Match}.
\newblock {\em American Economic Review}, 95(2):364--367, 2005.

\bibitem[\protect\citeauthoryear{Abdulkadiro{\u{g}}lu \bgroup \em et al.\egroup }{2005b}]{APR+05boston}
Atila Abdulkadiro{\u{g}}lu, Parag~A Pathak, Alvin~E Roth, and Tayfun S{\"o}nmez.
\newblock {The Boston Public School Match}.
\newblock {\em American Economic Review}, 95(2):368--371, 2005.

\bibitem[\protect\citeauthoryear{Ashlagi \bgroup \em et al.\egroup }{2017}]{ashlagi2017unbalanced}
Itai Ashlagi, Yash Kanoria, and Jacob~D Leshno.
\newblock Unbalanced random matching markets: The stark effect of competition.
\newblock {\em Journal of Political Economy}, 125(1):69--98, 2017.

\bibitem[\protect\citeauthoryear{Banerjee and Johari}{2019}]{banerjee2019ride}
Siddhartha Banerjee and Ramesh Johari.
\newblock {Ride Sharing}.
\newblock In {\em Sharing Economy}, pages 73--97. Springer, 2019.

\bibitem[\protect\citeauthoryear{Bendlin and Hosseini}{2019}]{BH19partners}
Theodora Bendlin and Hadi Hosseini.
\newblock {Partners in Crime: Manipulating the Deferred Acceptance Algorithm through an Accomplice}.
\newblock In {\em Proceedings of the 33rd AAAI Conference on Artificial Intelligence}, pages 9917--9918, 2019.

\bibitem[\protect\citeauthoryear{Brilliantova and Hosseini}{2022}]{brilliantova2022fair}
Angelina Brilliantova and Hadi Hosseini.
\newblock Fair stable matching meets correlated preferences.
\newblock In {\em Proceedings of the 21st International Conference on Autonomous Agents and Multiagent Systems}, pages 190--198, 2022.

\bibitem[\protect\citeauthoryear{Dubins and Freedman}{1981}]{dubins1981machiavelli}
Lester~E Dubins and David~A Freedman.
\newblock Machiavelli and the {G}ale-{S}hapley algorithm.
\newblock {\em The American Mathematical Monthly}, 88(7):485--494, 1981.

\bibitem[\protect\citeauthoryear{Eskandanian and Mobasher}{2020}]{eskandanian2020using}
Farzad Eskandanian and Bamshad Mobasher.
\newblock Using stable matching to optimize the balance between accuracy and diversity in recommendation.
\newblock In {\em Proceedings of the 28th ACM Conference on User Modeling, Adaptation and Personalization}, pages 71--79, 2020.

\bibitem[\protect\citeauthoryear{Gale and Shapley}{1962}]{gale1962college}
David Gale and Lloyd~S Shapley.
\newblock College admissions and the stability of marriage.
\newblock {\em The American Mathematical Monthly}, 69(1):9--15, 1962.

\bibitem[\protect\citeauthoryear{Gupta \bgroup \em et al.\egroup }{2016a}]{gupta2016total}
Sushmita Gupta, Kazuo Iwama, and Shuichi Miyazaki.
\newblock Total stability in stable matching games.
\newblock In {\em 15th Scandinavian Symposium and Workshops on Algorithm Theory (SWAT 2016)}. Schloss Dagstuhl-Leibniz-Zentrum fuer Informatik, 2016.

\bibitem[\protect\citeauthoryear{Gupta \bgroup \em et al.\egroup }{2016b}]{GIM16total}
Sushmita Gupta, Kazuo Iwama, and Shuichi Miyazaki.
\newblock {Total Stability in Stable Matching Games}.
\newblock In {\em Proceedings of the 15th Scandinavian Symposium and Workshops on Algorithm Theory}, pages 23:1--23:12, 2016.

\bibitem[\protect\citeauthoryear{Hatfield \bgroup \em et al.\egroup }{2016}]{HKN16improving}
John~William Hatfield, Fuhito Kojima, and Yusuke Narita.
\newblock {Improving Schools through School Choice: A Market Design Approach}.
\newblock {\em Journal of Economic Theory}, 166:186--211, 2016.

\bibitem[\protect\citeauthoryear{Hosseini and Pathak}{2024}]{hosseini2024strategic}
Hadi Hosseini and Shraddha Pathak.
\newblock Strategic aspects of stable matching markets: A survey.
\newblock In {\em Proceedings of the Thirty-Third International Joint Conference on Artificial Intelligence (IJCAI-24)}, 2024.

\bibitem[\protect\citeauthoryear{Hosseini \bgroup \em et al.\egroup }{2021}]{hosseini2021accomplice}
Hadi Hosseini, Fatima Umar, and Rohit Vaish.
\newblock {Accomplice Manipulation of the Deferred Acceptance Algorithm}.
\newblock In {\em Proceedings of the 30th International Joint Conference on Artificial Intelligence}, pages 231--237, 2021.

\bibitem[\protect\citeauthoryear{Hosseini \bgroup \em et al.\egroup }{2022}]{hosseini2022twoforone}
Hadi Hosseini, Fatima Umar, and Rohit Vaish.
\newblock Two for one \& one for all: Two-sided manipulation in matching markets.
\newblock In {\em Proceedings of the Thirty-First International Joint Conference on Artificial Intelligence}, pages 321--327, 7 2022.

\bibitem[\protect\citeauthoryear{Huang}{2006}]{huang2006cheating}
Chien-Chung Huang.
\newblock Cheating by men in the gale-shapley stable matching algorithm.
\newblock In {\em European Symposium on Algorithms}, pages 418--431. Springer, 2006.

\bibitem[\protect\citeauthoryear{Knuth}{1997}]{K97stable}
Donald Knuth.
\newblock {\em Stable Marriage and its Relation to Other Combinatorial Problems: An Introduction to the Mathematical Analysis of Algorithms}, volume~10.
\newblock American Mathematical Society, 1997.

\bibitem[\protect\citeauthoryear{Levy}{2017}]{levy2017novel}
Avi~William Levy.
\newblock {\em Novel uses of the Mallows model in coloring and matching}.
\newblock PhD thesis, 2017.

\bibitem[\protect\citeauthoryear{Mallows}{1957}]{mallows1957non}
Colin~L Mallows.
\newblock Non-null ranking models. i.
\newblock {\em Biometrika}, 44(1/2):114--130, 1957.

\bibitem[\protect\citeauthoryear{McVitie and Wilson}{1971}]{mcvitie1971stable}
David~G McVitie and Leslie~B Wilson.
\newblock The stable marriage problem.
\newblock {\em Communications of the ACM}, 14(7):486--490, 1971.

\bibitem[\protect\citeauthoryear{Pittel}{1989}]{pittel1989average}
Boris Pittel.
\newblock The average number of stable matchings.
\newblock {\em SIAM Journal on Discrete Mathematics}, 2(4):530--549, 1989.

\bibitem[\protect\citeauthoryear{Roth and Peranson}{1999}]{roth1999redesign}
Alvin~E Roth and Elliott Peranson.
\newblock The redesign of the matching market for american physicians: Some engineering aspects of economic design.
\newblock {\em American economic review}, 89(4):748--780, 1999.

\bibitem[\protect\citeauthoryear{Roth and Rothblum}{1999a}]{RR99truncation}
Alvin~E Roth and Uriel~G Rothblum.
\newblock {Truncation Strategies in Matching Markets---In Search of Advice for Participants}.
\newblock {\em Econometrica}, 67(1):21--43, 1999.

\bibitem[\protect\citeauthoryear{Roth and Rothblum}{1999b}]{roth1999truncation}
Alvin~E Roth and Uriel~G Rothblum.
\newblock Truncation strategies in matching markets—in search of advice for participants.
\newblock {\em Econometrica}, 67(1):21--43, 1999.

\bibitem[\protect\citeauthoryear{Roth}{1982}]{roth1982economics}
Alvin~E Roth.
\newblock The economics of matching: Stability and incentives.
\newblock {\em Mathematics of operations research}, 7(4):617--628, 1982.

\bibitem[\protect\citeauthoryear{Roth}{1984}]{roth1984misrepresentation}
Alvin~E Roth.
\newblock Misrepresentation and stability in the marriage problem.
\newblock {\em Journal of Economic theory}, 34(2):383--387, 1984.

\bibitem[\protect\citeauthoryear{Shen \bgroup \em et al.\egroup }{2018}]{shen2018coalition}
Weiran Shen, Pingzhong Tang, and Yuan Deng.
\newblock Coalition manipulation of gale-shapley algorithm.
\newblock In {\em Proceedings of the AAAI Conference on Artificial Intelligence}, volume~32, 2018.

\bibitem[\protect\citeauthoryear{Shen \bgroup \em et al.\egroup }{2021}]{shen2021coalitional}
Weiran Shen, Yuan Deng, and Pingzhong Tang.
\newblock Coalitional permutation manipulations in the gale-shapley algorithm.
\newblock {\em Artificial Intelligence}, 301:103577, 2021.

\bibitem[\protect\citeauthoryear{Teo \bgroup \em et al.\egroup }{2001}]{teo2001gale}
Chung-Piaw Teo, Jay Sethuraman, and Wee-Peng Tan.
\newblock Gale-shapley stable marriage problem revisited: Strategic issues and applications.
\newblock {\em Management Science}, 47(9):1252--1267, 2001.

\bibitem[\protect\citeauthoryear{Vaish and Garg}{2017}]{vaish2017manipulating}
Rohit Vaish and Dinesh Garg.
\newblock Manipulating gale-shapley algorithm: Preserving stability and remaining inconspicuous.
\newblock In {\em Proceedings of the Twenty-Sixth International Joint Conference on Artificial Intelligence}, pages 437--443, 2017.

\bibitem[\protect\citeauthoryear{Zhou}{1991}]{zhou1991stable}
Lin Zhou.
\newblock Stable matchings and equilibrium outcomes of the gale-shapley's algorithm for the marriage problem.
\newblock {\em Economics Letters}, 36(1):25--29, 1991.

\end{thebibliography}
\newpage

\appendix

\section{Omitted Material from Section \ref{sec:notation}} \label{sec:appendix_figures}

% \HH{
% A few items to take care of:

% \begin{itemize}
%    % \item Unbounded or factorial?
%     %\item remove `valid` push ups; this seems unnecessary.
%     %\item non push up: this is vague. Say, push down or anything else.
%     %\item diversity: this is correlated preferences, we can mention it.
%     %\item Inconspicous: only used in Section 5.2. Is this correct? let's double check. \shraddha{Yes, only here.} \HH{then why having defition here? or even at all?} \shraddha{Added it as a footnote in Section 5.2.}

%     %\item dont forget to remove the hyperref and backref before submission.
%     \item dont forget to update the abstract on Openreview
%     \item dont forget Rep. checklist. It goes after references, I believe in the main paper (double check).
%     %\item with-reget is never used except in preliminaries. I think we should mention this for Nash cycles, and other types of manipulations.
% \end{itemize}

% }

Here, we illustrate the notion of a push-up operation. The axis in the figure represents the preference list of the man $m$ in the profile $\stratprof$, while $\mu_\stratprof(m)$ is his match under profile $\stratprof$. Hence, $\mathcal{R}(\stratprof_m, \mu_\stratprof)$ is the set of women that are less preferred than $\mu_\stratprof(m)$ in $\stratprof_m$, and symmetrically $\mathcal{L}(\stratprof_m, \mu_\stratprof)$ is the set of those women who are more preferred than $\mu_\stratprof(m)$ in $\stratprof_m$. Then, the man $m$ revises his preference list by shifting a set of women $X$ above his match $\mu_\stratprof(m)$ in the profile $\stratprof$.

 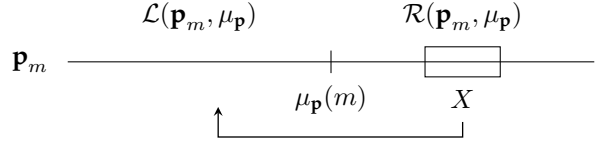
\begin{figure}[h!]
    \centering
     \begin{tikzpicture}
         % Draw the horizontal line
         \draw (0, 0) -- (7, 0);
        
         % Draw the left label
         \node at (1.75, 0.6) {$\mathcal{L}(\stratprof_m, \mu_\stratprof)$};
        
         % Draw the right label
         \node at (5.25, 0.6) {$\mathcal{R}(\stratprof_m, \mu_\stratprof)$};
        
         % Draw the middle label
         \node at (3.5, -0.5) {$\mu_\stratprof(m)$};
        
         % Draw the left vertical label
         \node at (-0.5, 0) {$\stratprof_m$};

         % Draw the small vertical line at (5,0)
         \draw (3.5, -0.15) -- (3.5, 0.15);
        
         % Draw a box in the middle of the right side
         \draw (4.75, -0.2) rectangle (5.75, 0.2);
         \node at (5.25, -0.5) {$X$};

         % Draw an arrow from below the box to the left side with turns
         \draw[->, >=stealth, line width=0.5pt] (5.25, -0.8) -- (5.25, -1) -- (2, -1) -- (2, -0.6);
     \end{tikzpicture}
     \caption{Illustrating push-up operations by man $m$ in strategy profile $\stratprof$.}
     \label{fig:push-up_1}
 \end{figure}

In fact, $\mathcal{L}(\stratprof_m, \mu_\stratprof)$ and $\mathcal{R}(\stratprof_m, \mu_\stratprof)$ can generalized to $\mathcal{L}(\stratprrof_m, \mu_\stratprof)$ and $\mathcal{R}(\stratprrof_m, \mu_\stratprof)$. Here, the axis represents the preference list of $m$ in profile $\stratprrof$. As before, $\mu_\stratprof(m)$ is $m$'s match under profile $\stratprof$, and $\mathcal{R}(\stratprrof_m, \mu_\stratprof)$ is the set of less-preferred women than  $\mu_\stratprof(m)$ in the profile $\stratprrof_m$ and $\mathcal{L}(\stratprrof_m, \mu_\stratprof)$ is the set of those women who are more preferred than $\mu_\stratprof(m)$ in $\stratprrof_m$.
\begin{figure}[h!]
    \centering
     \begin{tikzpicture}
         % Draw the horizontal line
         \draw (0, 0) -- (7, 0);
        
         % Draw the left curly bracket
         \draw [decorate,decoration={brace,amplitude=10pt,mirror}] (0.25, -0.2) -- (3.25, -0.2);
         \node at (1.75, 0.5) {$\mathcal{L}(\stratprrof_m, \mu_\stratprof)$};
         \node at (1.75, -1) {$\text{\scriptsize Preferred more than $\mu_\stratprof(m)$}$};
        
         % Draw the right curly bracket
         \draw [decorate,decoration={brace,amplitude=10pt,mirror}] (3.75, -0.2) -- (6.75, -0.2);
         \node at (5.25, 0.5) {$\mathcal{R}(\stratprrof_m, \mu_\stratprof)$};
         \node at (5.25, -1) {$\text{\scriptsize Preferred less than $\mu_\stratprof(m)$}$};
        
         % Draw the middle label
         \node at (3.4, -0.5) {$\mu_\stratprof(m)$};
        
         % Draw the left vertical label
         \node at (-0.5, 0) {$\stratprrof_m$};

         % Draw the small vertical line at (5,0)
         \draw (3.5, -0.15) -- (3.5, 0.15);
     \end{tikzpicture}
     \caption{Illustrating the notation $\mathcal{L}(\stratprrof_m, \mu_\stratprof)$ and $\mathcal{R}(\stratprrof_m, \mu_\stratprof)$.}
     \label{fig:L_R}
 \end{figure}

\section{Omitted Material from Section \ref{sec:accomplice}} \label{sec:appendix_accomplice}

\subsection{From Section \ref{sec:NE}}

%We start with stating a known result needed for our proofs in this section.
\begin{proposition}\label{prop:GS_lattice}[\cite{gale1962college,mcvitie1971stable}] 
    Given any preference profile $\succ$, let $\mu_\succ \coloneqq\DA(\succ)$. Then, for any stable matching $\mu'\in \mathcal{S}_\succ$,  $\mu_\succ \succeq_M {\mu'}$ and ${\mu'} \succeq_W \mu_\succ $. 
\end{proposition}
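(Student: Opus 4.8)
The plan is to prove the two halves in sequence: first men-optimality, $\mu_\succ \succeq_M \mu'$ for every $\mu' \in \mathcal{S}_\succ$, and then derive women-pessimality, $\mu' \succeq_W \mu_\succ$, as a quick corollary. Call a woman $w$ \emph{achievable} for a man $m$ if some stable matching pairs them. The core of the whole argument is the claim that during the execution of $\DA(\succ)$ no man is ever rejected by a woman who is achievable for him.

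To establish that claim I would argue by induction along the sequence of proposals and rejections, looking at the \emph{first} moment at which some man $m$ is rejected by an achievable woman $w$; this rejection is caused by a man $m'$ whom $w$ holds with $m' \succ_w m$. Fix a stable matching $\mu'$ with $\mu'(m) = w$. Since this is the first rejection of anyone by an achievable woman, $m'$ has so far been rejected only by women not achievable for him, so in particular $\mu'(m')$ has not rejected him; hence at the moment $m'$ proposed to $w$ we had $w \succeq_{m'} \mu'(m')$, and since $w = \mu'(m) \neq \mu'(m')$ (as $m \neq m'$), in fact $w \succ_{m'} \mu'(m')$. Together with $m' \succ_w m = \mu'(w)$ this makes $(m', w)$ a blocking pair for $\mu'$, contradicting its stability. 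This proves the claim, and therefore when $\DA(\succ)$ terminates every man $m$ is held by the most preferred woman on his list who never rejected him; since all achievable women are among those, and $\mu_\succ$ is itself stable so $\mu_\succ(m)$ is achievable, $\mu_\succ(m)$ is exactly $m$'s most preferred achievable woman. Thus $\mu_\succ(m) \succeq_m \mu'(m)$ for all $\mu' \in \mathcal{S}_\succ$, i.e. $\mu_\succ \succeq_M \mu'$.

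For women-pessimality, suppose toward a contradiction that $\mu_\succ(w) \succ_w \mu'(w)$ for some $w$ and some $\mu' \in \mathcal{S}_\succ$, and set $m := \mu_\succ(w)$. By the men-optimality just shown, $w = \mu_\succ(m) \succeq_m \mu'(m)$, and since $\mu'(w) \neq m$ we get $w \succ_m \mu'(m)$; combined with $m \succ_w \mu'(w)$ this exhibits $(m,w)$ as a blocking pair for $\mu'$, a contradiction. Hence $\mu' \succeq_W \mu_\succ$.

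I expect the only delicate point to be the bookkeeping in the inductive claim: one must use the minimality of the "first rejection by an achievable woman" carefully to guarantee that, at the instant $m'$ proposed to $w$, his $\mu'$-partner had not yet rejected him, and one must dispatch the edge case $w = \mu'(m')$ (ruled out because $m \neq m'$). Everything after the claim is routine.
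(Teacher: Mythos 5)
Your proof is correct and is exactly the classical argument: the paper does not prove this proposition itself but cites \cite{gale1962college,mcvitie1971stable}, and your "no man is ever rejected by an achievable woman" induction for men-optimality followed by the blocking-pair derivation of women-pessimality is precisely the standard proof given in those references. No gaps; the edge case $w=\mu'(m')$ and the use of the first-rejection minimality are handled correctly.
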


%\HH{use restatable for all proofs to restate the main theorem. See below:}

\lemNEsubsets*

\begin{proof}
%Take any such instance $\mathcal{I}=\langle M, W, \succ \rangle$, as well as sets of strategic pairs $P$ and $P'$.
First, notice that since the set of manipulating men in $P$ and $P'$ is the same, every strategy profile in $P$ is also a strategy profile in $P'$, and vice versa.

Now, towards a contradiction, suppose that there is a strategy profile $\stratprof$ that is a Nash equilibrium in the accomplice game with strategic pairs $P'$ but is not a Nash equilibrium in the game with strategic pairs $P$. This means that there exists a pair $(m,w)\in P$ that can perform a valid accomplice manipulation (let the manipulated profile be $\stratprrof$) at $\stratprof$. Since $P_w\subseteq P'_w$, the pair $(m,w)$ is also a strategic pair in the game defined by the pairs $P'$, i.e., $(m,w) \in P'$. However, this means that $\stratprrof$ is a valid accomplice manipulation in the game defined by the strategic pairs $P'$ as well; this contradicts our assumption and completes the proof.

\end{proof}

\lemshrinkingset*

\begin{proof}
    The proof is via induction; in fact we will prove a stronger containment for (ii): $\mathcal{R}(\stratprof^{(t+1)}_m, w)\subseteq \mathcal{R}(\stratprof^{(t)}_m, w) \subseteq \ldots \subseteq \mathcal{R}(\succ_m, w)$ for every $w\in \{\mu_{\stratprof^(t+1)}(m)\}\cup \mathcal{R}(\stratprof^{(t+1)}_m, \mu_{\stratprof^(t+1)})$. Consider the base case ($i=1$) in which the consecutive strategy profiles are $\stratprof^{(1)} =\, \succ$ and $\stratprof^{(2)}$ which results from a push-up accomplice manipulation by a pair $(m',w')$ on $\succ$. Since it is a no-regret, push-up accomplice manipulation, $m'$ does not become worse-off, i.e., $\mu_{\stratprof^{(2)}} \succeq_{m'} \mu_{\stratprof^{(1)}}$. Thus, for the manipulator $m'$, we have $\mathcal{R}(\stratprof^{(2)}_{m'}, \mu_{\stratprof^{(2)}}) \subseteq \mathcal{R}(\stratprof^{(1)}_{m'}, \mu_{\stratprof^{(2)}})$. For any woman $w\in \mu_{\stratprof^{(2)}}$, we also know that $\mathcal{R}(\stratprof^{(2)}_{m'}, w) \subseteq \mathcal{R}(\stratprof^{(1)}_{m'}, w)$ for every woman $w\in \mathcal{R}(\stratprof^{(2)}_{m'}, \mu_{\stratprof^{(2)}})$ since the manipulation was through a push-up operation. Then, from Proposition \ref{prop:accomplice_properties}, we have $\mathcal{S}_{\stratprof^{(2)}} \subseteq \mathcal{S}_{\stratprof^{(1)}}$.  %Thus, from Proposition \ref{prop:GS_lattice}, we have $\mu_{\stratprof^{(2)}}\in \mathcal{S}_{\stratprof^{(2)}}, \mathcal{S}_{\stratprof^{(1)}}$, and we also have that $\mu_{\stratprof^{(2)}} \succ_M \mu_{\stratprof^{(1)}}$, i.e., for every man $m$, $\mu_{\stratprof^{(1)}} \succeq_{m} \mu_{\stratprof^{(2)}}$. 
    Since the preference list of any man $m\neq m'$ has not changed, we have $\mathcal{R}(\stratprof^{(2)}_m, w) = \mathcal{R}(\stratprof^{(1)}_m, w)$. This shows that the set inclusions hold for the base case.

    Next, assume that 
    \begin{enumerate}[(i)]
        \item $\mathcal{S}_{\stratprof^{(t)}} \subseteq \mathcal{S}_{\stratprof^{(t-1)}}\subseteq \ldots \subseteq \mathcal{S}_{\stratprof^{(1)}} = \mathcal{S}_{\succ}$, and
        \item $\mathcal{R}(\stratprof^{(t)}_m, w)\subseteq \mathcal{R}(\stratprof^{(t-1)}_m, w) \subseteq \ldots \subseteq \mathcal{R}(\succ_m, w)$ for every woman $w\in \{\mu_{\stratprof^{(t)}}\}\cup \mathcal{R}(\stratprof^{(t)}_m, \mu_{\stratprof^{(t)}})$
    \end{enumerate} and consider a push-up accomplice manipulation by $(m'', w'')$ at $\stratprof^{(t)}$, which leads to the profile $\stratprof^{(t+1)}$.
    %For a man $m\neq m''$, we have that $\stratprof^{(t+1)}(m) = \stratprof^{(t)}(m)$ and thus $\mathcal{R}(\stratprof^{(t+1)}, \stratprof^{(t+1)}, m) = \mathcal{R}(\stratprof^{(t)}, \stratprof^{(t+1)}, m)$. 
    For the manipulator man $m''$, since the manipulation is no-regret and via push-up operations at $\stratprof^{(t)}$, we have that $\mu_{\stratprof^{(t+1)}} \succeq_{m''} \mu_{\stratprof^{(t)}}$, i.e., $\mu_{\stratprof^{(t+1)}} \notin \mathcal{R}(\stratprof^{(1)}_{m''}, \mu_{\stratprof^{(t)}})$. By the set inclusions (ii) in the inductive argument, we know that $\mu_{\stratprof^{(t+1)}} \notin \mathcal{R}(\stratprof^{(t)}_{m''}, \mu_{\stratprof^{(t)}})$, i.e., $\mu_{\stratprof^{(t+1)}} \succeq^{\stratprof^{(t)}}_{m''} \mu_{\stratprof^{(t)}}$. %Since the operation was a push-up operation on $\stratprof^{(t)}$, we have $\mathcal{R}(\stratprof^{(t+1)}, \stratprof^{(t+1)}, m) \subset \mathcal{R}(\stratprof^{(t)}, \stratprof^{(t+1)}, m)$. 
    Since this was a no-regret push-up manipulation at $\stratprof^{(t)}$, by applying Proposition \ref{prop:accomplice_properties} to the stable matching instance $\langle M, W, \stratprof^{(t)} \rangle$, we get that $\mathcal{S}_{\stratprof^{(t+1)}} \subseteq \mathcal{S}_{\stratprof^{(t)}}$, and from the inductive argument, we have $\mathcal{S}_{\stratprof^{(t+1)}} \subseteq \mathcal{S}_{\stratprof^{(t)}} \subseteq \ldots \subseteq \mathcal{S}_{\succ}$; this proves part (i).

    For part (ii), from \Cref{prop:accomplice_properties}, we know $\mu_{\stratprof^{(t+1)}}\in \mathcal{S}_{\stratprof^{(t)}}$; thus $\mu_{\stratprof^{(t)}} \succ_M^{\stratprof^{(t)}} \mu_{\stratprof^{(t+1)}}$, i.e., $\mu_{\stratprof^{(t+1)}}\in \mathcal{R}(\stratprof_m^{(t)} , \mu_{p^{(t)}})$ [whereby $\mathcal{R}(\stratprof^{(t+1)}_m, \mu_{\stratprof^{(t+1)}})\subseteq\mathcal{R}(\stratprof^{(t)}_m, \mu_{\stratprof^{(t+1)}})$ and $\mathcal{R}(\stratprof^{(t+1)}_m, w)\subseteq\mathcal{R}(\stratprof^{(t)}_m, w\in\mathcal{R}(\stratprof^{(t+1)}_m, w))$ for every woman $w\in $]. Thus, from the induction hypothesis, $\mathcal{R}(\stratprof^{(t)}_m, \mu_{\stratprof^{(t+1)}})\subseteq \mathcal{R}(\stratprof^{(t-1)}_m, \mu_{\stratprof^{(t+1)}}) \subseteq \ldots \subseteq \mathcal{R}(\succ_m, \mu_{\stratprof^{(t+1)}})$. Moreover, every woman $w\in \mathcal{R}(\stratprof^{(t+1)}_m, \mu_{\stratprof^{(t+1)}})$ also belongs to $\mathcal{R}(\stratprof^{(t)}_m, \mu_{\stratprof^{(t+1)}})$. This completes the proof.

\end{proof}

For the manipulator $m''$, no regret so $\mu_{\stratprof^{t+1}} = \mu_{\stratprof^{t}}$. Thus, since push-up, we know $\mathcal{R}(\stratprof_m^{(t+1)}, \mu_{\stratprof^{t+1}}) \subseteq ...$

\actualiscurrent*

\begin{proof}
Towards a contradiction, assume that $\mu_{\stratprof^{(t)}} \succ^{\stratprof^{(t)}}_m \mu_{\stratprof^{(t+1)}}$ for some man $m$. This means that $\mu_{\stratprof^{(t+1)}}(m) \in \mathcal{R}(\stratprof^{(t+1)}_m, \mu_{\stratprof^{(t)}})$. By the set inclusions in Lemma \ref{lem:shrinking set} (ii), we get that $\mu_{\stratprof^{(t+1)}}(m) \in\, \mathcal{R}(\succ_m, \mu_{\stratprof^{(t)}})$, i.e., $\mu_{\stratprof^{(t)}} \succ_m \mu_{\stratprof^{(t+1)}}$. But this contradicts the construction of the push-up dynamics, which ensures that the manipulation is no-regret. Specifically, it contradicts condition (ii) which states that $\mu_{\stratprof^{(t+1)}} \succeq_{m} \mu_{\stratprof^{(t)}}$. %\shraddha{We've shown this already in the proof of Lemma 6; see highlighted part.}
\end{proof}

For completing the proof of Theorem \ref{cor:NE_existence}, we need the following final proposition due to \citet{huang2006cheating}.

\begin{proposition}\label{prop:permute}\citep{huang2006cheating}. For a preference profile $\succ$ and its corresponding \DA{} matching $\mu_\succ$, consider $\succ'_m := (\pi(\mathcal{L}(\succ_m,\mu_\succ)) ,\mu_\succ(m), \pi(\mathcal{R}(\succ_m, \mu_m)))$ where $\pi(\mathcal{L}(\succ_m,\mu_\succ))$ and $\pi(\mathcal{R}(\succ_m,\mu_\succ))$ are arbitrary permutations of $\mathcal{L}(\succ_m,\mu_\succ)$ and $\mathcal{R}(\succ_m,\mu_\succ))$ respectively. Then for $\succ' := (\succ_{-m}, \succ'_m)$ and $\mu_{\succ'}:= \DA(\succ')$, we have that $\mu_\succ = \mu_{\succ'}$.
\end{proposition}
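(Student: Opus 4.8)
The plan is to pin down $\mu_{\succ'}(m)$ first and then bootstrap to the whole matching using the men‑optimality/women‑pessimality facts for $\DA$ on \emph{both} profiles. Write $\bar w \coloneqq \mu_\succ(m)$, and record the one structural observation everything hinges on: since $\succ'_m$ permutes $\mathcal{L}(\succ_m,\mu_\succ)$ and $\mathcal{R}(\succ_m,\mu_\succ)$ only \emph{within} themselves, the partition of the women into ``ranked above $\bar w$'', ``equal to $\bar w$'', ``ranked below $\bar w$'' is the same under $\succ_m$ and under $\succ'_m$; hence for every woman $w$ the comparison of $w$ against $\bar w$ is identical in $\succ_m$ and in $\succ'_m$.

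First I would show $\mu_\succ(m) \succeq_m \mu_{\succ'}(m)$ under $m$'s true list. Since $\succ' = (\succ_{-m},\succ'_m)$ differs from $\succ$ only in $m$'s report, this is exactly strategyproofness of $\DA$ for the proposing side \citep{dubins1981machiavelli} (already recorded in the preliminaries). For the reverse inequality I would first argue $\mu_\succ \in \mathcal{S}_{\succ'}$: a pair $(m',w)$ blocks $\mu_\succ$ under $\succ'$ iff it blocks under $\succ$, because women's lists are unchanged and, for $m'=m$, the only relevant comparison is $w$ against $\mu_\succ(m)=\bar w$, which by the structural observation is the same under $\succ_m$ and $\succ'_m$; as $\mu_\succ\in\mathcal{S}_\succ$ has no blocking pair, $\mu_\succ\in\mathcal{S}_{\succ'}$. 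Then men‑optimality of $\mu_{\succ'}=\DA(\succ')$ within $\mathcal{S}_{\succ'}$ (Proposition~\ref{prop:GS_lattice}) gives $\mu_{\succ'}(m) \succeq_m^{\succ'} \bar w$, and the structural observation upgrades this to $\mu_{\succ'}(m)\succeq_m \bar w$. Combining the two directions, $\mu_{\succ'}(m)=\mu_\succ(m)=\bar w$.

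Once $\mu_{\succ'}(m)=\bar w$ is known, I would close the loop symmetrically. For every $w$, ``$w$ ranked above $\mu_{\succ'}(m)$'' holds under $\succ_m$ iff under $\succ'_m$, so $\mu_{\succ'}$ has exactly the same blocking pairs under $\succ$ and under $\succ'$; since $\mu_{\succ'}\in\mathcal{S}_{\succ'}$ it follows that $\mu_{\succ'}\in\mathcal{S}_\succ$. Women‑pessimality of $\DA$ under $\succ$ then gives $\mu_{\succ'}\succeq_W \mu_\succ$, while $\mu_\succ\in\mathcal{S}_{\succ'}$ together with women‑pessimality of $\DA$ under $\succ'$ gives $\mu_\succ\succeq_W^{\succ'}\mu_{\succ'}$, i.e.\ (women's lists unchanged) $\mu_\succ\succeq_W \mu_{\succ'}$. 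Hence every woman has the same partner in $\mu_\succ$ and $\mu_{\succ'}$, so $\mu_\succ=\mu_{\succ'}$, which is the claim.

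I expect the only genuinely load‑bearing point to be the structural observation that $\mu_\succ(m)$ keeps its rank in $\succ'_m$, which is what lets all the $\succ$‑versus‑$\succ'$ comparisons collapse; the rest is a routine combination of strategyproofness with the lattice/optimality statements of Proposition~\ref{prop:GS_lattice}. A purely operational alternative — coupling the two $\DA$ runs and arguing inductively that each woman of $\mathcal{L}(\succ_m,\mu_\succ)$ still rejects $m$ no matter in which order $m$ reaches her — would also work, but it is more delicate, since reordering one agent's list is not covered by the order‑independence (McVitie--Wilson) property of $\DA$; I would therefore favour the stability‑based argument above.
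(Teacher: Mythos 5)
Your proposal is correct. Note, however, that the paper does not prove this proposition at all: it is imported verbatim from \citet{huang2006cheating} and used as a black box, so there is no in-paper argument to compare against. What you have written is a clean, self-contained derivation from exactly the ingredients the paper already records: strategyproofness for the proposing side gives $\mu_\succ(m)\succeq_m\mu_{\succ'}(m)$; the observation that $\succ'_m$ preserves every woman's comparison against $\bar w=\mu_\succ(m)$ shows $\mu_\succ\in\mathcal{S}_{\succ'}$, so men-optimality of $\DA(\succ')$ forces $\mu_{\succ'}(m)=\bar w$; and then the two women-pessimality bounds sandwich every woman's partner, which pins down the whole matching because preferences are strict. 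Each step checks out, including the only delicate one (upgrading $\mu_{\succ'}(m)\succeq^{\succ'}_m\bar w$ to the true list via the invariance of the above/below-$\bar w$ partition). Your closing remark is also well taken: the operational coupling argument is what one would naively reach for, but it is not covered by McVitie--Wilson order-independence (which concerns the order of proposals, not a reordering of a list), so the stability-based route you chose is both safer and shorter.
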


We are now ready to prove Theorem \ref{cor:NE_existence}.

\paragraph{Proof of Theorem \ref{cor:NE_existence}}

\begin{proof}

%Notice first that in an accomplice manipulation game with strategic pairs $S\subseteq M\times W$, the strategy profile $\stratprof$ would be such that for every $i\notin S_m$, $\stratprof_i = \succ_i$, where $S_m$ is the set of strategic men in $S$. In particular, for every woman $w$, $\stratprof_w =\, \succ_w$. 

Here, we describe the remaining part of the proof, i.e., a polynomial-time procedure for checking if $\stratprof$ is an $\NE$. Towards this, consider the stable matching instance $\langle M, W, \stratprof \rangle$, and recall from Proposition \ref{prop:accomplice_properties}, if there exists an accomplice pair $(m,w)$ in this instance, then $(m,w)$ can manipulate inconspicuously with respect to $\stratprof$. For each pair $(m,w) \in P$, consider every possible inconspicuous operation, i.e.,  all scenarios in which different women from $\mathcal{R}(\stratprof_m, \mu_\stratprof)$ are shifted to the first place in $m$'s ranking (only pushing-up to first place suffices due to Proposition \ref{prop:permute}). There are atmost $O(n)$ such modified instances, and by running $\DA$ on every such modified instance, we can check in polynomial time if $w$ benefits from such an operation and if $m$ incurs any regret with respect to the true preference $\succ_m$, i.e, check if the modified instance is an accomplice manipulation. If such an inconspicuous operation exists for some $(m,w)\in P$, then we conclude that $\stratprof$ is not an $\NE$; otherwise, since $\stratprof_w =\, \succ_w$, by definition $\stratprof$ is an $\NE$. Note that our procedure only requires $O(n|P_m|)$ runs of the $\DA$ algorithm.
Thus, the verification procedure runs in polynomial time. 
\end{proof}

\subsection{From Section \ref{sec:properties_NE}}

The loss-in-stability due to strategic behavior can be measured in terms of the \emph{Price of Anarchy} and \emph{Price of Stability}, which are defined in terms of the number of stable pairs in a matching $\mu$, denoted $\NSP(\mu)$. A pair $(m,w)$ is said to be stable in a matching $\mu$ if it does not block $\mu$.

\begin{definition}[Price of Anarchy and Price of Stability]
    In an accomplice manipulation game with $P \subseteq M\times W$ as the strategic pairs, the Price of Anarchy ($\PoA$) and the Price of Stability ($\PoS$) are given by 
    \begin{align*}
        \PoA = \frac{\max_{\text{any $\mu$}}\NSP(\mu)}{\min_{\text{$\mu\in\NE$}}\NSP(\mu)} = \frac{n^2}{\min_{\text{$\mu\in\NE$}}\NSP(\mu)},\\
        \PoS = \frac{\max_{\text{any $\mu$}}\NSP(\mu)}{\max_{\text{$\mu\in\NE$}}\NSP(\mu)} = \frac{n^2}{\max_{\text{$\mu\in\NE$}}\NSP(\mu)}.
    \end{align*} Here, $\NSP(\mu)$ denotes the number of stable (or non-blocking) pairs in matching $\mu$.
\end{definition}

Next, we determine the Price of Anarchy and Price of Stability of the accomplice manipulation games.

\begin{theorem}[$\PoA$ and $\PoS$ of accomplice manipulation games] \label{cor:PoA_PoS}
    For the accomplice manipulation game defined by the strategic pairs $P\subseteq M\times W$, the Price of Anarchy and Price of stability are $\PoA \leq \frac{n^2}{|P|}$ and $ \PoS = 1$ respectively.
\end{theorem}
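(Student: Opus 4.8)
The plan is to reduce both quantities to counting non-blocking pairs and to lean entirely on the two stability facts already established in \cref{thm:somestable}. Recall that $\NSP(\mu)$ counts the non-blocking pairs of $\mu$, that $M\times W$ contains exactly $n^2$ pairs, and that the common numerator $\max_{\text{any }\mu}\NSP(\mu)=n^2$ is attained by any stable matching: for instance $\mu_\succ=\DA(\succ)$ has no blocking pair, so all $n^2$ pairs are non-blocking. I would open by recording this observation, since both the $\PoA$ and $\PoS$ computations substitute $n^2$ for the numerator.

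For $\PoS=1$, I would invoke \cref{thm:somestable}(i): the $\NE$ produced by the push-up dynamics corresponds to a stable matching, say $\mu^*$. Being stable, $\mu^*$ has no blocking pair, hence $\NSP(\mu^*)=n^2$. Since $n^2$ is also the global maximum, this forces $\max_{\mu\in\NE}\NSP(\mu)=n^2$, and therefore $\PoS=n^2/n^2=1$.

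For the bound $\PoA\le n^2/|P|$, it suffices to show $\min_{\mu\in\NE}\NSP(\mu)\ge|P|$. Here I would use \cref{thm:somestable}(ii): every $\NE$ matching is $\big((M\times W)\setminus P\big)$-stable. By the definition of $X$-stability, this means every blocking pair (if any) lies in $(M\times W)\setminus P$; equivalently, no pair of $P$ ever blocks an $\NE$ matching. Consequently each of the $|P|$ pairs in $P$ is non-blocking in every $\NE$ matching $\mu$, so $\NSP(\mu)\ge|P|$. Taking the minimum over all $\NE$ matchings and substituting into the definition gives $\PoA=n^2/\min_{\mu\in\NE}\NSP(\mu)\le n^2/|P|$, as claimed.

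The argument is essentially a bookkeeping reduction, so the only real care needed is in reading off the consequence of $X$-stability in the correct direction: one must confirm that ``$X$-stable with $X=(M\times W)\setminus P$'' says precisely that \emph{every pair of $P$ is non-blocking}, not the reverse. I expect this step—together with verifying that $\max_{\text{any }\mu}\NSP(\mu)$ equals $n^2$ and is witnessed by a stable matching—to be the only subtle point; everything else is direct substitution into the definitions of $\PoA$ and $\PoS$.
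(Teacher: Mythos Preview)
Your proposal is correct and follows essentially the same approach as the paper: both parts are immediate corollaries of \cref{thm:somestable}, with part~(i) giving $\PoS=1$ via the stable $\NE$ from the push-up dynamics, and part~(ii) giving the lower bound $\NSP(\mu)\ge|P|$ via $((M\times W)\setminus P)$-stability. The paper additionally supplies a concrete instance witnessing $\PoA=n^2/|P|$ to show the bound is tight, which goes beyond the inequality stated in the theorem; you may wish to note that your argument establishes only the upper bound.
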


\begin{proof}
    As a corollary of Theorem \ref{thm:somestable} we observe that the price of stability is $ \PoS = 1$. %\grzesiek{This seems to assume that $S=M \times W$, am I right? Otherwise how do we know that there is always an NE with everyone stable?}. \shraddha{I don't think so. The constructed \NE\, through the sequence is always stable (see Theorem 11, part 1)}\grzesiek{All good.}
    This is because Part 1 of Theorem \ref{thm:somestable} shows that a Nash equilibrium which corresponds to a stable matching can always be found, even though all Nash equilibria may not correspond to stable matchings. Thus, the best-case guarantee, i.e., $\PoS=1$.

    For the price of anarchy, Theorem \ref{thm:somestable}(ii) provides the following upper bound: $\PoA \leq \frac{n^2}{|P|}$. Through the following example, we show that in some cases $\PoA = \frac{n^2}{|P|}$.

    Consider a stable matching instance consisting of five men and five women, with preferences as listed below. The $\DA$ matching is underlined in the instance.

    \begin{table}[ht]
\centering %\small
    \begin{tabularx}{0.9\linewidth}{XXXXXXXXXXXXXXXXXXXXXXX}
    
         	$m_1$: & $w_2^*$& \underline{$w_1$} & $w_3$ & $w_4$ & $w_5$ && $w_1$: & \underline{$m_1$} & $m_3$ & $m_2^*$ & $m_4$ & $m_5$\\
            $m_2$: & {$w_1^{*}$} & \underline{$w_2$} & $w_3$ & $w_4$ & $w_5$ && $w_2$: & \underline{$m_2$} & $m_3$ &{$m_1^{*}$} & $m_4$ & $m_5$\\
            $m_3$: &{$w_1$} & $w_2$ & \underline{$w_3^*$} & $w_4$ & $w_5$ && $w_3$: & \underline{$m_3^*$} & $m_1$ & $m_2$ & $m_4$ & {$m_5$}\\
            $m_4$: & \underline{$w_4$} & $w_5^*$ & $w_1$ & $w_2$ & $w_3$ && $w_4$: & {$m_5^{*}$} & $m_3$ & $m_1$ & $m_2$ & \underline{$m_4$}\\
            $m_5$: & \underline{$w_5$} & $w_4^*$ & $w_1$ & $w_2$ & $w_3$ && $w_5$: & $m_4^*$ & $m_1$ & $m_2$ & {$m_3$} & \underline{$m_5$}
    \end{tabularx}
\vspace{-0.06in}
\end{table}

Let the set of strategic pairs be given by $P= M\times W \backslash\{(m_3, w_1), (m_3, w_2)\}$, i.e., all possible pairs apart from $(m_3,w_1)$ and $(m_3, w_2)$ are strategic pairs. Consider the strategic profile $\succ'$ where $m_3$ reports $w_4\succ' w_3\succ'w_1 \succ' w_2 \succ' w_5$ and everyone else reports truthfully. One can verify that this is a Nash equilibrium profile and its corresponding matching $\mu_\succ' = \DA(\succ')$ is the one marked by $^*$. However, both $(m_3, w_1)$ and $(m_3, w_2)$ are blocking pairs. Thus, the number of stable pairs in the matching $^*$ is $\NSP(^*) = n^2-2 = |P|$; thereby, $\PoA = \frac{n^2}{|P|}$. 

%\shraddha{Is such an example enough to show tightness? Or do we need an example which is generalizable to any $n$?} \grzesiek{I think that the claim is valid now, but of course it would be stronger if we had a scheme for all values.} \shraddha{I'm still not sure. Notice that $\frac{n^2}{|P|}$ is a function of $n$. Through the above example we have shown that at $n=3$, it is tight. But the $\PoA$ function of $n$ might be below $\frac{n^2}{|P|}$ but might be touching this function at $n=3$. I guess we need to generalize the instances for all $n$ to show tightness.}
\end{proof}

%\subsection{From Section \ref{sec:accomplice_number}}

\paragraph{Example of an instance for which different realizations of the push-up dynamics leads to different $\NE$ matchings.} 

Here's an instance with $n=30$ and the set of strategic pairs as $P=M\times W$
Both, men and women, are named as integers in $[1,30]$. The preference lists of individual agents are represented as a list of agents' from the other side, ordered from most to least preferred. Below, are the true preferences of the agents.

\vspace{0.5em}

Men's preference profile: [[22, 18, 15, 28, 25, 3, 19, 1, 23, 20, 16, 8, 27, 2, 24, 30, 21, 9, 10, 13, 11, 14, 26, 17, 29, 12, 6, 7, 5, 4], [25, 1, 4, 23, 3, 15, 22, 12, 26, 8, 7, 17, 9, 11, 16, 28, 19, 10, 2, 14, 5, 18, 21, 29, 30, 27, 24, 6, 20, 13], [18, 16, 12, 13, 22, 27, 15, 30, 4, 2, 11, 23, 1, 28, 3, 24, 5, 26, 19, 29, 17, 8, 20, 10, 21, 6, 25, 9, 7, 14], [11, 30, 19, 23, 16, 14, 17, 21, 26, 10, 28, 27, 15, 13, 8, 5, 29, 4, 22, 6, 20, 7, 1, 25, 3, 18, 12, 24, 9, 2], [30, 16, 19, 7, 2, 27, 24, 13, 8, 6, 17, 20, 12, 9, 5, 23, 3, 1, 22, 25, 26, 4, 15, 11, 10, 18, 21, 29, 14, 28], [18, 17, 24, 7, 25, 29, 27, 28, 14, 10, 5, 19, 9, 4, 6, 20, 23, 3, 13, 16, 2, 12, 1, 26, 30, 15, 11, 22, 8, 21], [28, 6, 7, 18, 3, 10, 17, 21, 19, 22, 5, 12, 30, 9, 14, 13, 20, 1, 23, 24, 27, 11, 29, 26, 25, 4, 15, 2, 8, 16], [15, 7, 14, 5, 28, 27, 20, 2, 22, 1, 6, 18, 24, 29, 21, 3, 9, 17, 23, 19, 4, 10, 30, 25, 26, 13, 8, 12, 11, 16], [14, 4, 19, 17, 29, 24, 30, 8, 12, 27, 2, 6, 9, 1, 20, 15, 3, 16, 18, 11, 13, 22, 21, 25, 23, 5, 26, 7, 28, 10], [7, 28, 2, 19, 22, 4, 29, 24, 15, 17, 25, 23, 5, 6, 1, 21, 13, 11, 10, 8, 30, 9, 16, 20, 12, 3, 18, 14, 26, 27], [8, 22, 28, 10, 1, 9, 16, 4, 14, 6, 17, 19, 24, 25, 5, 18, 13, 11, 21, 3, 30, 27, 26, 2, 23, 29, 15, 20, 7, 12], [7, 29, 26, 9, 5, 2, 1, 15, 12, 8, 3, 25, 16, 24, 17, 28, 14, 19, 23, 10, 4, 13, 11, 6, 18, 30, 27, 22, 21, 20], [21, 13, 15, 8, 2, 24, 6, 12, 30, 4, 23, 7, 17, 10, 27, 20, 26, 14, 29, 19, 9, 11, 25, 18, 1, 22, 16, 5, 3, 28], [20, 17, 23, 30, 27, 10, 3, 16, 24, 5, 25, 13, 12, 4, 1, 28, 29, 21, 22, 2, 7, 14, 8, 6, 11, 18, 19, 15, 9, 26], [15, 25, 9, 12, 26, 20, 8, 23, 28, 11, 14, 18, 22, 5, 17, 30, 27, 3, 29, 7, 1, 13, 10, 21, 2, 24, 16, 6, 19, 4], [16, 19, 12, 24, 5, 1, 3, 10, 22, 2, 26, 23, 14, 25, 11, 9, 15, 28, 4, 30, 7, 6, 29, 27, 8, 13, 17, 21, 18, 20], [15, 9, 1, 29, 24, 12, 8, 7, 25, 4, 23, 6, 5, 22, 13, 17, 26, 21, 30, 11, 16, 28, 18, 19, 10, 2, 3, 14, 27, 20], [21, 17, 9, 20, 1, 15, 6, 23, 11, 22, 29, 3, 12, 8, 24, 5, 7, 4, 2, 26, 16, 19, 18, 27, 28, 14, 13, 30, 25, 10], [24, 30, 16, 29, 15, 22, 23, 7, 28, 26, 14, 19, 1, 9, 11, 18, 27, 10, 17, 25, 6, 12, 2, 8, 3, 4, 20, 5, 13, 21], [25, 29, 1, 23, 12, 27, 7, 2, 19, 30, 18, 22, 5, 13, 3, 26, 16, 4, 28, 20, 9, 17, 15, 21, 10, 11, 8, 6, 14, 24], [21, 2, 25, 24, 26, 14, 22, 28, 20, 18, 9, 12, 8, 5, 17, 16, 23, 11, 30, 4, 15, 13, 19, 10, 7, 3, 29, 6, 1, 27], [17, 7, 10, 29, 4, 26, 9, 5, 14, 22, 1, 18, 3, 25, 20, 28, 2, 30, 11, 13, 6, 19, 23, 12, 8, 15, 24, 21, 16, 27], [27, 28, 29, 25, 20, 8, 7, 24, 30, 12, 17, 11, 13, 5, 22, 2, 3, 23, 26, 19, 4, 18, 10, 16, 1, 15, 21, 9, 14, 6], [21, 29, 6, 25, 4, 12, 23, 20, 7, 22, 11, 19, 5, 2, 24, 9, 18, 26, 16, 3, 28, 8, 27, 17, 10, 30, 1, 14, 15, 13], [2, 30, 22, 3, 13, 1, 11, 19, 21, 17, 29, 8, 23, 26, 14, 10, 28, 6, 4, 15, 7, 16, 25, 12, 9, 24, 18, 5, 20, 27], [20, 21, 7, 15, 2, 19, 25, 16, 29, 17, 13, 23, 10, 22, 30, 12, 18, 14, 27, 26, 3, 4, 6, 28, 11, 5, 8, 24, 9, 1], [6, 7, 22, 13, 5, 18, 16, 17, 4, 3, 11, 12, 9, 27, 24, 30, 25, 1, 15, 10, 26, 28, 23, 20, 29, 14, 19, 2, 8, 21], [6, 13, 28, 8, 26, 10, 9, 25, 23, 15, 2, 29, 16, 3, 17, 5, 24, 7, 1, 11, 14, 20, 27, 19, 21, 18, 22, 4, 12, 30], [29, 17, 11, 30, 10, 13, 24, 21, 19, 25, 23, 1, 12, 27, 6, 3, 9, 4, 8, 14, 26, 18, 22, 2, 20, 28, 5, 15, 7, 16], [2, 18, 16, 28, 5, 24, 9, 15, 30, 13, 29, 17, 22, 11, 7, 10, 8, 19, 23, 27, 21, 20, 14, 3, 25, 1, 26, 6, 4, 12]]

\vspace{0.5em}

Women's preference profile: [[21, 5, 14, 23, 4, 1, 27, 24, 13, 17, 11, 22, 2, 10, 19, 8, 3, 15, 16, 18, 6, 26, 20, 12, 29, 7, 9, 28, 25, 30], [14, 11, 23, 26, 4, 21, 10, 28, 16, 3, 25, 5, 17, 20, 1, 27, 24, 9, 29, 12, 6, 30, 7, 13, 22, 15, 19, 8, 18, 2], [4, 29, 27, 23, 28, 17, 8, 26, 5, 25, 6, 16, 14, 15, 1, 24, 10, 7, 13, 11, 3, 18, 22, 30, 9, 12, 2, 21, 19, 20], [10, 13, 12, 19, 4, 11, 24, 28, 6, 23, 3, 25, 14, 17, 20, 5, 27, 15, 26, 8, 16, 18, 2, 1, 21, 7, 29, 30, 9, 22], [17, 29, 20, 9, 10, 13, 26, 23, 22, 28, 4, 5, 21, 30, 3, 19, 27, 12, 24, 2, 7, 18, 16, 25, 1, 8, 14, 6, 11, 15], [10, 22, 5, 1, 7, 20, 2, 30, 4, 14, 8, 26, 16, 12, 3, 6, 9, 18, 21, 13, 29, 28, 23, 19, 11, 24, 27, 15, 25, 17], [26, 13, 16, 8, 24, 25, 20, 4, 7, 5, 2, 14, 9, 29, 27, 18, 15, 19, 11, 1, 22, 30, 17, 12, 23, 3, 10, 6, 21, 28], [1, 28, 15, 11, 10, 25, 8, 30, 21, 26, 16, 19, 27, 17, 23, 12, 24, 5, 20, 2, 18, 7, 6, 14, 13, 3, 22, 9, 4, 29], [2, 27, 4, 11, 20, 6, 22, 7, 16, 5, 17, 24, 14, 26, 15, 10, 30, 13, 19, 23, 3, 21, 1, 25, 18, 28, 8, 9, 29, 12], [30, 4, 7, 28, 3, 10, 13, 27, 20, 16, 14, 26, 5, 2, 11, 9, 24, 23, 6, 17, 19, 1, 15, 22, 25, 8, 29, 12, 18, 21], [29, 7, 28, 23, 26, 10, 20, 21, 24, 6, 1, 18, 15, 19, 5, 11, 13, 16, 4, 17, 3, 8, 12, 14, 25, 9, 27, 22, 30, 2], [4, 28, 9, 27, 8, 30, 1, 25, 24, 6, 2, 16, 19, 29, 15, 13, 5, 12, 21, 3, 17, 7, 26, 11, 10, 18, 20, 23, 22, 14], [18, 28, 5, 7, 23, 2, 24, 22, 16, 13, 10, 9, 21, 1, 14, 29, 3, 19, 6, 25, 4, 27, 17, 20, 30, 12, 8, 26, 15, 11], [11, 14, 7, 16, 26, 19, 5, 21, 10, 12, 15, 4, 1, 22, 28, 30, 13, 17, 8, 6, 27, 2, 24, 20, 25, 3, 18, 9, 29, 23], [23, 9, 8, 21, 22, 29, 24, 15, 14, 27, 4, 13, 10, 25, 11, 7, 28, 26, 1, 5, 16, 12, 2, 20, 6, 3, 18, 30, 17, 19], [16, 29, 6, 5, 26, 2, 7, 3, 24, 22, 19, 8, 13, 18, 12, 15, 17, 9, 10, 25, 20, 11, 30, 28, 27, 4, 21, 1, 23, 14], [21, 23, 2, 13, 28, 22, 27, 10, 7, 29, 8, 9, 25, 11, 24, 6, 15, 30, 17, 19, 26, 20, 18, 16, 3, 12, 4, 14, 1, 5], [18, 24, 30, 10, 1, 14, 26, 16, 4, 7, 29, 25, 8, 5, 19, 13, 9, 3, 23, 2, 21, 27, 17, 20, 12, 15, 6, 11, 28, 22], [4, 19, 12, 29, 28, 30, 9, 3, 24, 7, 13, 18, 20, 15, 5, 23, 17, 16, 25, 6, 11, 21, 26, 2, 22, 1, 10, 8, 14, 27], [14, 11, 10, 28, 25, 18, 26, 15, 13, 21, 1, 30, 17, 6, 2, 23, 12, 5, 16, 20, 4, 19, 22, 8, 7, 29, 24, 9, 3, 27], [25, 18, 19, 9, 30, 28, 27, 17, 26, 7, 11, 13, 24, 29, 22, 21, 6, 12, 1, 5, 2, 23, 15, 4, 20, 10, 16, 3, 14, 8], [7, 13, 24, 19, 9, 3, 15, 17, 10, 2, 12, 8, 14, 22, 18, 27, 23, 21, 29, 25, 28, 4, 26, 16, 6, 11, 1, 30, 5, 20], [7, 1, 27, 14, 29, 4, 13, 10, 11, 12, 9, 8, 21, 24, 23, 5, 19, 17, 22, 26, 28, 16, 25, 3, 15, 18, 20, 30, 2, 6], [4, 25, 21, 10, 29, 16, 30, 14, 19, 20, 24, 27, 9, 6, 12, 5, 1, 8, 7, 3, 13, 18, 26, 28, 23, 2, 11, 17, 22, 15], [7, 1, 18, 10, 2, 14, 21, 17, 3, 29, 27, 4, 6, 12, 20, 8, 22, 15, 16, 25, 11, 23, 9, 30, 26, 19, 5, 13, 24, 28], [26, 7, 18, 1, 21, 14, 10, 12, 17, 25, 29, 3, 22, 8, 16, 13, 15, 20, 4, 23, 27, 2, 19, 28, 11, 30, 24, 6, 5, 9], [6, 5, 30, 18, 14, 27, 7, 29, 13, 4, 26, 23, 3, 21, 8, 11, 25, 16, 22, 20, 10, 28, 2, 9, 15, 12, 1, 17, 24, 19], [4, 3, 7, 6, 15, 24, 27, 19, 20, 9, 1, 30, 23, 2, 5, 13, 21, 18, 28, 29, 25, 22, 17, 10, 26, 14, 16, 12, 8, 11], [16, 18, 28, 26, 12, 10, 21, 20, 3, 22, 5, 6, 13, 9, 25, 24, 23, 4, 19, 8, 14, 2, 11, 17, 7, 30, 27, 1, 29, 15], [15, 10, 14, 4, 11, 5, 3, 13, 24, 9, 2, 27, 6, 8, 12, 28, 18, 1, 7, 26, 25, 17, 29, 21, 30, 20, 22, 16, 19, 23]].

\vspace{1em}

Next, we show two different realizations of the push-up dynamics, each resulting from different pairs of manipulators at each time step. The resulting $\NE$ profiles correspond to different matchings under $\DA$. At each time step in the dynamics, we indicate the manipulating pair and the manipulated list of the accomplice man.

\vspace{0.5em}

\textbf{Dynamics 1:}

Accomplice man is  22 and the manipulator women is  1 and the changed preference list is  [25, 9, 22, 30, 19, 27, 28, 15, 14, 4, 8, 29, 13, 20, 26, 16, 1, 21, 24, 23, 2, 10, 17, 12, 3, 7, 5, 6, 11, 18]

Accomplice man is  1 and the manipulator women is  6 and the changed preference list is  [25, 9, 22, 30, 6, 27, 28, 15, 14, 4, 8, 29, 23, 20, 26, 16, 1, 21, 24, 19, 2, 10, 17, 12, 3, 7, 5, 13, 11, 18]

Accomplice man is  9 and the manipulator women is  8 and the changed preference list is [25, 9, 22, 30, 6, 27, 28, 15, 14, 4, 10, 29, 23, 20, 8, 16, 1, 21, 24, 19, 2, 26, 17, 12, 3, 7, 5, 13, 11, 18]

Accomplice man is  8 and the manipulator women is  21 and the changed preference list is  [25, 9, 22, 30, 6, 27, 28, 15, 14, 4, 10, 26, 23, 20, 8, 16, 1, 29, 24, 19, 2, 5, 17, 12, 21, 7, 3, 13, 11, 18]

\vspace{1em} 

Interestingly, woman 21 did not have an incentive to manipulate in the truthful profile and the possibility of obtaining a better partner only arose after the manipulations by the first three pairs.

\vspace{1em}

\textbf{Dynamics 2:} Here, woman 1 does not use man 22 as her accomplice, instead uses man 27.

Accomplice man is  27 and the manipulator women is  1 and the changed preference list is  [25, 9, 22, 30, 19, 27, 28, 15, 14, 4, 8, 29, 13, 20, 26, 16, 1, 21, 24, 23, 2, 10, 17, 12, 3, 7, 5, 6, 11, 18]

Accomplice man is  1 and the manipulator women is  6 and the changed preference list is  [25, 9, 22, 30, 6, 27, 28, 15, 14, 4, 8, 29, 23, 20, 26, 16, 1, 21, 24, 19, 2, 10, 17, 12, 3, 7, 5, 13, 11, 18]

Accomplice man is  1 and the manipulator women is  8 and the changed preference list is  [25, 9, 22, 30, 6, 27, 28, 15, 14, 4, 10, 29, 23, 20, 8, 16, 1, 21, 24, 19, 2, 26, 17, 12, 3, 7, 5, 13, 11, 18]

\vspace{1em}

In the $\DA$ matching for these two $\NE$, women 3, 5, 7, 21, 26, 29 are matched to different partners.
In Dynamic 1, the first accomplice man 22 eventually incurred regret as compared to his truthful partner even though he did not incur regret when he manipulated. Thus, in Dynamic 2 where man 22 does not manipulate first, after the manipulations by other men, he is not able to no-regret manipulate. But the manipulation by man 22 was crucial for woman 21 to receive a better partner in Dynamic 1. %she did not have an incentive to manipulate beforehand, but after the manipulation by the other men in the sequence, she had an incentive to manipulate later in the sequence.

\section{Omitted Material from Section \ref{sec:transferable} }\label{sec:appendix_sec5}

\subsection{One-for-Many Manipulation Games}
\oneforall*
\begin{proof}
    Towards a contradiction, let $\succ_{\NE}$ be a strategy profile that is a Nash equilibrium in the accomplice manipulation game but not a Nash equilibrium in the one-for-all game. This means that there exists a man $m\in P_m$ who can manipulate and obtain a Pareto dominant outcome for everyone in $P_w$; let $w$ be a woman who strictly benefits from $m$'s misreport. This means that $(m,w)\in P$ is an accomplice pair at $\succ_\NE$, contradicting the fact that $\succ_\NE$ is a Nash equilibrium for the accomplice game. 
\end{proof}

\subsection{Woman Self-Manipulation Games}

Here, we provide the proof of Theorem \ref{cor:NE_existence_woman} and of the required lemmata. 
Given an instance $\langle M, W, \succ \rangle$, the players are the set of manipulating women $P_w \subseteq W$ and a strategy profile $\stratprrof$ is a list of strict total orderings over agents in $M$ submitted by all women in $P_w$. %With a slight abuse of notation we call $\stratprrof=(\stratprrof, \succ_{-P_w})$. 
%
%Here, to define a Nash equilibrium we introduce the notion of a \emph{valid self manipulation}. 
Given a strategy profile $\stratprof$, a \emph{self manipulation} by $w$ is a misreport $\stratprrof_w$ such that, for $\mu_\stratprof \coloneqq \DA(\succ_{-P_w}, \stratprof) $ and $\mu_{\stratprrof} \coloneqq \DA(\succ_{-P_w}, \stratprof_{-w},\stratprrof_w)$, we have $\mu_{\stratprrof} \succ_w \mu_{\stratprof}$. Then, a strategy profile $\succ_\NE$ is said to be a Nash equilibrium ($\NE$) if there does not exist a woman $w\in P_w$ such that $w$ can perform a self manipulation at the profile $\succ_\NE$.

%The objective of a strategic woman $w\in P_w$ in a woman manipulation game is to perform a \emph{valid self manipulation}, whenever possible. Given a strategy profile $\stratprrof$, a valid self manipulation by $w$ is a misreport $\stratprrof'_w$ such that, for $\mu_\stratprrof= \DA(\stratprrof) $ and $\mu_{\stratprrof'}= \DA(\stratprrof_{-w},\stratprrof'_w)$, we have $\mu_{\stratprrof'} \succ_w \mu_{\stratprrof}$.In this game, a strategy profile $\succ_\NE$ is said to be a $\NE$ if there does not exist a woman $w\in P_w$ such that $w$ can perform a valid self manipulation at the profile $\succ_\NE$.

%\subsection{Woman Self Manipulation Games}

%Here, we show that the technique of computing an $\NE$ through carefully constructing a best-response dynamics sequence starting at the truthful profile can also be applied to the woman self-manipulation game. 
%
Given any two strategy profiles $\stratprof, \stratprrof$, for every woman $w$ we can represent the preference list of $w$ in the strategy profile $\stratprrof$ based on the $\DA$ matching corresponding to the profile $\stratprof$ (i.e., $\mu_\stratprof$) as $\stratprrof_w = (\mathcal{L}(\stratprrof_w, \mu_\stratprof), \mu_{\stratprof}(w), \mathcal{R}(\stratprrof_w, \mu_\stratprof))$, where $\mu_\stratprof(w)$ is $w$'s partner under the matching $\mu_\stratprof$. Here, $\mathcal{L}(\stratprrof_w, \mu_\stratprof)$ is the set of men that $w$ reports as better than $\mu_{\stratprof}(m)$ in $\stratprrof_m$ and $ \mathcal{R}(\stratprrof_w, \mu_\stratprof)$ is the set of men who are ranked lower than $\mu_{\stratprof}(w)$ in $\stratprrof_w$.
The notion of a push-up operation $\stratprof_w^{X \uparrow}$ by lifting a set $X$ of men in the preference list of $w$ above her match in $\stratprof$ is also lifted from the case of men manipulation, and we say that it is inconspicuous if $|X|=1$.
%
%is defined as follows.
%
%\begin{definition}
%\shraddha{Add definition}    
%\end{definition}
%
Let us recall several facts regarding optimal manipulations by a self-manipulating woman. 

\begin{proposition}\label{prop:vaish}[Theorems 4 and 5, \cite{vaish2017manipulating}] 
    Given an instance $\langle M, W, \succ \rangle$, any optimal manipulation $\succ'$ by a woman can be performed in an inconspicuous manner. Furthermore, $\mathcal{S}_{\succ'} \subseteq \mathcal{S}_\succ$.
\end{proposition}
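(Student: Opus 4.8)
The plan is to reduce the statement to the structure of a single woman's optimal \emph{permutation} manipulation, as analyzed by \citet{teo2001gale}. Fix the manipulating woman $w$ and let $m^*$ denote her $\succ_w$-best attainable partner, i.e.\ the best man she can be matched to over all reorderings of her list (all other agents truthful). The first thing to establish is that $m^*$ is well defined and, crucially, that it is attained by a report $\succ'_w$ that merely lifts a single man to the top of $w$'s true list while keeping every other man in his true relative position --- exactly an inconspicuous report. Here I would reconstruct Teo et al.'s proposal--rejection analysis of the men-proposing \DA{} run: promoting to the top a man who truly prefers $w$ to his \DA{}-partner creates a blocking pair with $w$'s (women-pessimal) match, destabilizing it and setting off a chain of re-proposals; the target $m^*$ is necessarily such a ``coveting'' man, and promoting $m^*$ itself to the top makes him propose to $w$, who then retains him. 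The order-independence (confluence) of \DA{} is the key tool, letting me reason about the induced chain independently of proposal order, and I would argue that among all single promotions the one maximizing $w$'s outcome already realizes $m^*$, so that no multi-man reordering can do strictly better.

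For the second part, I would prove $\mathcal{S}_{\succ'} \subseteq \mathcal{S}_\succ$ by a blocking-pair comparison. Since $\succ'$ differs from $\succ$ only in $w$'s list, and that difference is a single promotion of $m^*$, the two orders agree on every pairwise comparison not involving $m^*$. Take any $\nu \in \mathcal{S}_{\succ'}$. A pair not involving $w$ that blocks $\nu$ under $\succ$ would block under $\succ'$ as well (identical preferences), contradicting $\nu \in \mathcal{S}_{\succ'}$. A pair $(m,w)$ blocking under $\succ$ but not under $\succ'$ forces a reversal of one of $w$'s comparisons, which can only occur at $m^*$: the case $m=m^*$ is impossible because $m^*$ sits at the top of $\succ'_w$, leaving only the case $\nu(w)=m^*$ together with some truly-superior $m\succ_w m^*$ satisfying $w\succ_m \nu(m)$. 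Ruling this out certifies $\nu\in\mathcal{S}_\succ$, and then women-pessimality of $\DA(\succ')$ applied inside the (sub)lattice $\mathcal{S}_{\succ'}$ pins down the matched partner of $w$.

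The main obstacle is exactly this last case, together with the analogous \emph{uniqueness} needed in Part 1: I must rule out a stable matching $\nu$ of $\succ'$ in which $w$ is left with $m^*$ while a man she truly prefers would rather be with her than with $\nu(m)$. Both reduce to the \emph{optimality} of $m^*$ --- such a $\nu$ would expose a rejection chain reaching a $\succ_w$-superior partner (obtainable by instead promoting that superior man), contradicting that $m^*$ is already $w$'s best attainable partner. Discharging this cleanly is where the precise proposal-sequence characterization of which men $w$ can attract (the engine behind Teo et al.'s strategy) does the real work; granting that characterization, the two blocking-pair case analyses, the appeal to women-pessimality, and the lattice structure of $\mathcal{S}_{\succ'}$ become routine.
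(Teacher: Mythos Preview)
The paper does not prove this proposition; it is quoted as Theorems~4 and~5 of \citet{vaish2017manipulating} and used without proof as a black box in the analysis of the inconspicuous dynamics for the woman self-manipulation game. There is consequently no in-paper argument to compare your sketch against.

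That said, your plan mirrors the architecture of the cited results, and you have correctly located the one genuinely delicate step: in the lattice-inclusion argument, ruling out a $\nu\in\mathcal{S}_{\succ'}$ with $\nu(w)=m^*$ in which some $m'\succ_w m^*$ still prefers $w$ to $\nu(m')$. The heuristic you offer---that such an $m'$ would witness a rejection chain delivering $w$ a partner strictly better than $m^*$---is the right intuition but, as you yourself flag, is not yet a proof: a $\succ$-blocking pair $(m',w)$ in $\nu$ does not by itself manufacture a report under which men-proposing \DA{} hands $m'$ to $w$. Closing this gap requires exactly the characterization you defer to, namely that $m^*$ coincides with $w$'s woman-optimal stable partner under $\succ$; from there one can argue via the woman-optimal matching $\mu_W$ of $\succ$ (which lies in $\mathcal{S}_{\succ'}$ since $\mu_W(w)=m^*$ is top under $\succ'_w$) and its extremality in that lattice. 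The remaining components of your outline---confluence of \DA, the blocking-pair case split reducing everything to pairs involving $w$, and women-pessimality of $\DA(\succ')$ forcing $\nu(w)=m^*$ throughout $\mathcal{S}_{\succ'}$---are sound.
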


As in the case of accomplice manipulations, we construct a best-response dynamics which, when it converges, finds a Nash equilibrium of the woman self-manipulation game.

\paragraph{Inconspicuous dynamics.} For a stable matching instance $\langle M, W, \succ \rangle$ and a set of strategic women $P_w$, an inconspicuous dynamic starting at $\stratprof^{(1)} =\, \succ$ is a sequence of strategy profiles $\overline{\stratprof}= (\stratprof^{(1)}, \ldots)$, where each subsequent profile $\stratprof^{(t+1)}$ {results from an optimal inconspicuous self-manipulation by some strategic woman $w\in P_w$ in the matching instance $\langle M, W, \stratprof^{(t)}\rangle$, if possible.}  Otherwise, $\stratprof^{(t+1)} = \stratprof^{(t)}$.

%Note the difference as compared to the push-up dynamics constructed for accomplice game. The best-response dynamics sequence for the accomplice game consisted of push-up manipulations, and we showed that finding a push-up manipulation at each step $\stratprof^{(t)}$ in the sequence was equivalent to finding a push-up manipulation considering the instance as $\langle M, W, \stratprof^{(t)} \rangle$. As opposed to that, the inconspicuous dynamic constructed here directly assumes that the $\stratprof^{(t+1)}$ is obtained by finding an inconspicuous manipulation in the instance $\langle M, W, \stratprof^{(t)} \rangle$. As before, we first show that this sequence always converges, and then that the converged point is an $\NE$. 

In the next lemma, we show some properties of the inconspicuous dynamics; these help us establish that this sequence always converges and the converged point is an $\NE$.

\begin{lemma} \label{lem:woman_properties}
    Take a set of strategic women $P_w$ in the woman self-manipulation game, and an inconspicuous dynamic $\overline{\stratprof} = (\stratprof^{(1)}, \stratprof^{(2)}, \ldots)$. Then, when $\stratprof^{(t+1)} \neq \stratprof^{(t)}$,  the following set inclusion and set equalities hold for each woman $w$
 (1) $\mathcal{S}_{\stratprof^{(t+1)}} \subseteq \mathcal{S}_{\stratprof^{(t)}} \subseteq \ldots \subseteq \mathcal{S}_{\stratprof^{(1)}} = \mathcal{S}_\succ$, and
        (2) $\mathcal{R}(\stratprof^{(t+1)}_w, \mu_{\stratprof^{(t+1)}})= \mathcal{R}(\stratprof^{(t)}_w, \mu_{\stratprof^{(t+1)}} ) = \ldots = \mathcal{R}(\succ_w, \mu_{\stratprof^{(t+1)}})$.
   
\end{lemma}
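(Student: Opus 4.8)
The plan is to induct on the step index, closely following the proof of \cref{lem:shrinking set} with \cref{prop:vaish} taking the place of \cref{prop:accomplice_properties}. As in that proof, I would carry a strengthened hypothesis: besides the lattice chain in (1), I would maintain, for every profile $\stratprof^{(s)}$ produced so far, every woman $w$, and every man $m'$ that $w$ truly ranks at or above $\mu_{\stratprof^{(s)}}(w)$, the equality $\mathcal{R}(\stratprof^{(s)}_w, m') = \mathcal{R}(\stratprof^{(s-1)}_w, m') = \cdots = \mathcal{R}(\succ_w, m')$. The base case $\stratprof^{(1)} = \succ$ is immediate.

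Part (1) is routine: $\stratprof^{(t+1)}$ arises from an optimal inconspicuous self-manipulation in the instance $\langle M, W, \stratprof^{(t)}\rangle$, so \cref{prop:vaish} applied to that instance gives $\mathcal{S}_{\stratprof^{(t+1)}} \subseteq \mathcal{S}_{\stratprof^{(t)}}$, and chaining with the hypothesis gives the full containment. Part (1) then feeds two observations that drive (2). First, since $\mu_{\stratprof^{(t+1)}} \in \mathcal{S}_{\stratprof^{(t)}}$ and $\mu_{\stratprof^{(t)}} = \DA(\stratprof^{(t)})$ is women-pessimal in $\mathcal{S}_{\stratprof^{(t)}}$ (\cref{prop:GS_lattice}), every woman weakly prefers $\mu_{\stratprof^{(t+1)}}$ to $\mu_{\stratprof^{(t)}}$ under $\stratprof^{(t)}$, which---using the hypothesis that each woman's reported tail below her current partner coincides with her true tail---upgrades to: every woman's $\DA$-partner weakly improves \emph{in her true preferences} along the dynamic. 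Second, because each step is an inconspicuous manipulation, a reported list is $\succ_w$ with a sequence of single-man promotions applied; I would show inductively that each such promoted man currently sits below $w$'s current partner in her reported list, the point being that the manipulating woman promotes a man above her \emph{old} partner while, the manipulation being beneficial, she ends up holding a strictly (truly) better partner over him, and partners only improve afterwards.

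With these in hand, (2) follows. For a woman $w$ not manipulating at step $t$, $\stratprof^{(t+1)}_w = \stratprof^{(t)}_w$ and $\mu_{\stratprof^{(t+1)}}(w)$ is truly at or above $\mu_{\stratprof^{(s)}}(w)$ for every earlier $s$, so it lies in the range of the strengthened hypothesis and $\mathcal{R}(\stratprof^{(s)}_w, \mu_{\stratprof^{(t+1)}}) = \mathcal{R}(\succ_w, \mu_{\stratprof^{(t+1)}})$ for all $s \le t+1$. For the manipulating woman $w^\star$, the single promoted man ends up below $\mu_{\stratprof^{(t+1)}}(w^\star)$ in $\stratprof^{(t+1)}_{w^\star}$ (she holds her new partner over him) and was already below it in $\stratprof^{(t)}_{w^\star}$, so the set of men below $\mu_{\stratprof^{(t+1)}}(w^\star)$ is literally unchanged between the two lists; the hypothesis then carries the equality back to $\succ_{w^\star}$. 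This is precisely where the women's statement gives an equality rather than the inclusion of \cref{lem:shrinking set}(ii): the proposed-to agent only improves, so her promoted man is absorbed below her new partner instead of being lifted above her match, as a man's push-up does.

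The step I expect to be the main obstacle is getting the strengthened inductive hypothesis exactly right in (2): it must be broad enough to contain each woman's strictly-better future partner as an admissible reference man, yet remain true under an inconspicuous promotion---which is what dictates the ``at or above the current partner'' range and rests on the auxiliary claim that promoted men never escape above the current partner. The remaining pieces---the lattice chain, the women-pessimality argument, and the single-man-promotion structure of inconspicuous manipulations---transcribe directly from the accomplice-game analysis of \cref{sec:accomplice}.
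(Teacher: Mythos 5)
Your proposal matches the paper's own argument: part (1) via induction and Proposition \ref{prop:vaish}, and part (2) by observing that the manipulator's promoted man ends up below her new partner in the reported list (else \DA{} would have matched her to him), that non-manipulating women's lists are unchanged, and that women's partners only improve along the dynamic so the equality chains back to $\succ$. Your strengthened inductive hypothesis merely makes explicit the bookkeeping the paper leaves implicit; the approach is essentially the same.
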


With these properties, we show that the inconspicuous dynamics converge.

\womanNE*

\begin{proof}
Notice first that if $\stratprof^{(t+1)}\neq\stratprof^{(t)}$, then there is some woman $w\in P_w$ who can manipulate (optimally, inconspicuously) in the instance $\langle M, W, \stratprof^{(t)}\rangle$. By Lemma \ref{lem:woman_properties}, we have that $\mathcal{S}_{\stratprof^{(t+1)}} \subseteq \mathcal{S}_{\stratprof^{(t)}}$. By Proposition \ref{prop:accomplice_properties}(iii), 
\begin{align*}
    \mu_{\stratprof^{(t)}} \succ_M^{\stratprof^{(t)}} \mu_{\stratprof^{(t+1)}}.
\end{align*} 

Since in the inconspicuous dynamics of self manipulations by women, men do not change their preference profiles, we have that $\succ_M^{\stratprof^{(t)}} =\, \succ_M$ for every $t$ in the dynamic. Thus, $\mu_{\stratprof^{(t)}} \succ_M \mu_{\stratprof^{(t+1)}}$, and, by potential argument, the set of men can be worse-off with respect to $\succ$ only finitely many times. Thus, at some point $\stratprof^{(t+1)}=\stratprof^{(t)}$.

Notice that at every $t$ in which $\stratprof^{(t+1)} \neq \stratprof^{(t)}$, at least one man get strictly worse-off. He can get worse-off only $n$ many times, and there are only $n$ such men. Thus, the sequence must converge by $i \leq n^2$. 

When $\stratprof^{(t+1)}=\stratprof^{(t)}$, then $\stratprof^{(t)}$ is a fixed point and there is no strategic woman $w\in P_w$ who can perform an inconspicuous self manipulation in the instance $\langle M, W, \stratprof^{(t)}\rangle $. Thus, if the converged point $\stratprof^{(t)}$ is not an $\NE$, then some woman $w \in P_w$ can perform a self manipulation at $\stratprof^{(t)}$; let this manipulation be $\succ'_w$.

Consider the matching corresponding to $\succ'$ which is $\mu'=\DA(\succ'_w, {\stratprof^{(t+1)}}_{-w})$. Since it is a beneficial manipulation, $w$'s partner is such that $\mu'(w)\notin \mathcal{R}(\succ, \stratprof^{(t)}, w)$. But from Lemma 17 (ii), we know that $\mathcal{R}(\succ, \stratprof^{(t)}, w) = \mathcal{R}(\stratprof^{(t)}, \stratprof^{(t)}, w)$. Thus, $\mu'(w)\notin \mathcal{R}(\stratprof^{(t)}, \stratprof^{(t)}, w)$. However, this contradicts that $w$ can not perform a manipulation in the instance $\langle M, W, \stratprof^{(t)} \rangle$. 
\end{proof}

\section{Omitted Experimental Results}\label{sec:appendix_sec6}
In this section, we present the missing plots from \cref{sec:experiments}. 

\paragraph{Length of dynamics.} As mentioned, we study the impact of instance size and correlated preferences on the length of the dynamics.

\cref{fig:length_n} presents results on the impact of instance size on the length of the dynamics for both, accomplice and self manipulation games.
\begin{figure}[htp] 
    \centering 
   % \begin{subfigure}{0.5\textwidth} 
    %  \centering
        \includegraphics[width=.38\textwidth]{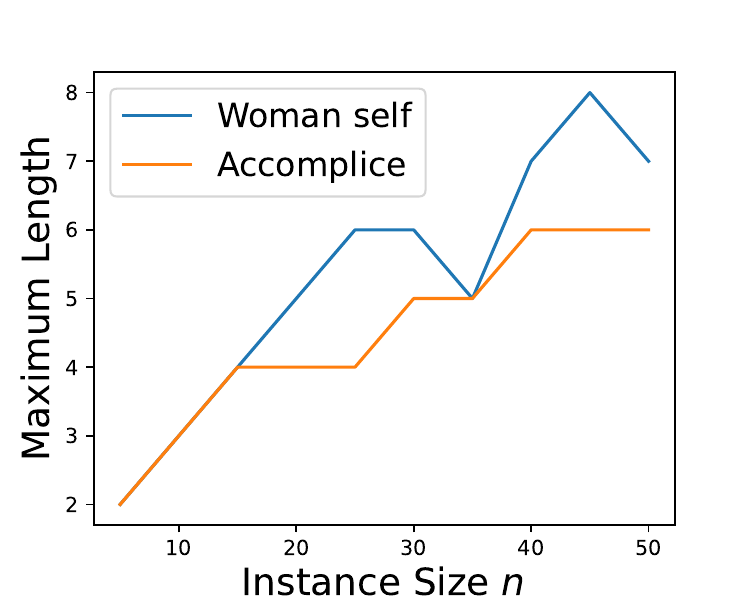}
        ~
         \includegraphics[width=.38\textwidth]{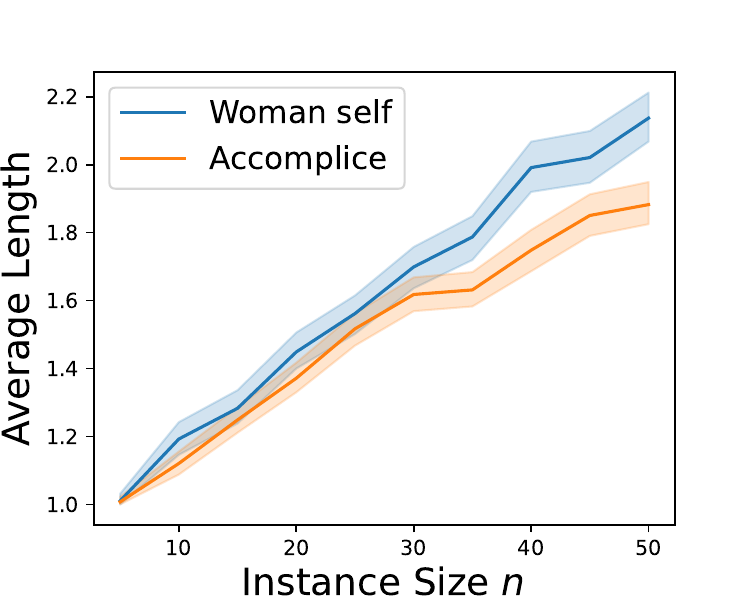} 
        %\caption{Accomplice}
        %\label{fig:accomplice_n}

    \caption{Length of the constructed dynamics as a function of the instance size $n$. The plot on the left gives the maximum length across the samples for every considered instance size.  The plot on the right considers the average length across the samples as a function of $n$. %\HH{All the figures need to be reproduced as high quality PDF. Also, remove the top title on the plots; we can explain those in the Caption.} }
    }
    \label{fig:length_n}
\end{figure}

Next, we present the plots for the effect of correlated preferences, measured through the dispersion parameters in Mallows' model, on the length of the push-up dynamics in the accomplice manipulation game (see \cref{fig:accomplice_phi}). 

\begin{figure}[h!] 
    \centering 
    \begin{subfigure}{0.4\textwidth} 
        \centering
        \includegraphics[width=\textwidth]{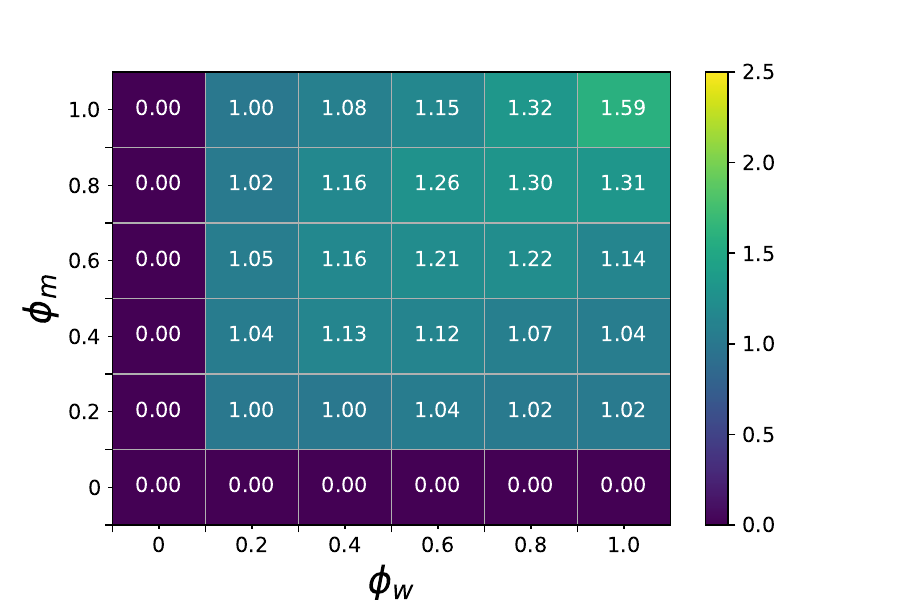} 
%        \caption{Accomplice}
        %\label{fig:accomplice_n_benefit}
    \end{subfigure}
%\hfill
    \begin{subfigure}{0.4\textwidth}
        \centering
        \includegraphics[width=\textwidth]{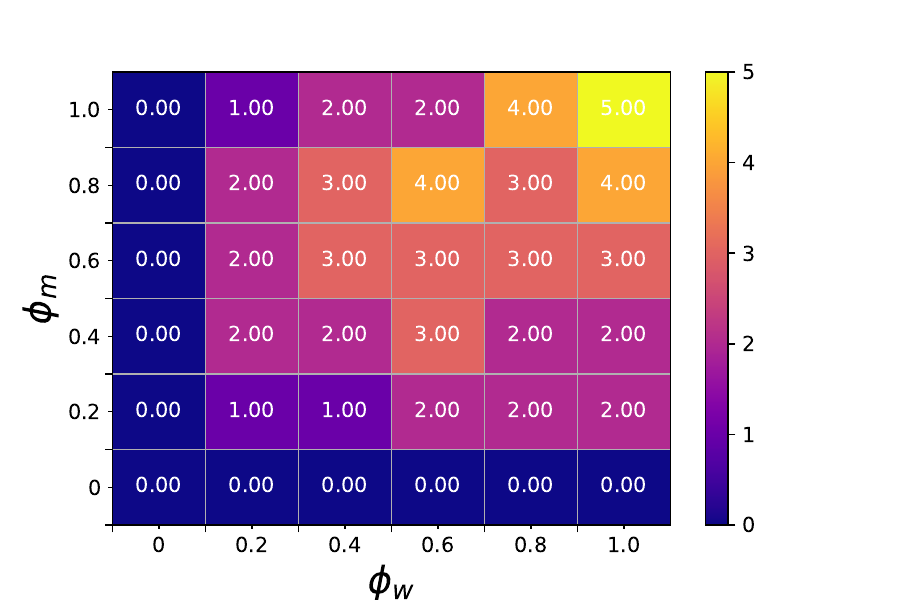} 
     %   \caption{Woman}
        %\label{fig:woman_n_benefit}
    \end{subfigure}
    
    \caption{Length of push-up dynamics in accomplice manipulation games as a function of the dispersion parameters $(\phi_w, \phi_m)$. The left plot is for the average length, whereas the one on the right is for the maximum length across the samples.}
    \label{fig:accomplice_phi}
\end{figure}

The equivalent plots for the woman self-manipulation game and its corresponding inconspicuous dynamics are presented next (see \cref{fig:length_phi}).

\begin{figure}[h!] % 'h!' tries to place the figure here
    \centering % Center the figure
    \includegraphics[width=.4\textwidth]{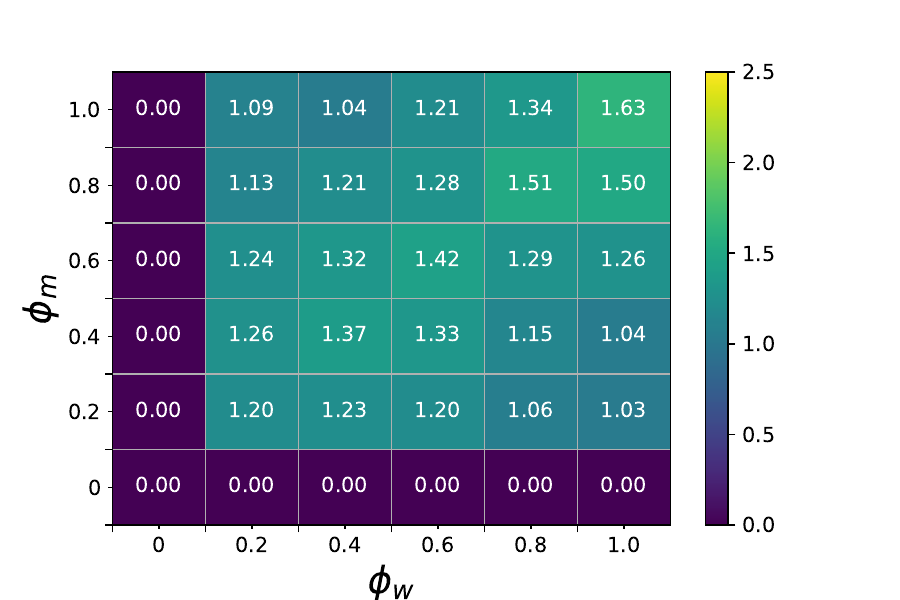} % Replace with your file
            %\caption{Woman - len - max}
          %  \label{fig:woman - len - max}
     ~            \includegraphics[width=.4\textwidth]{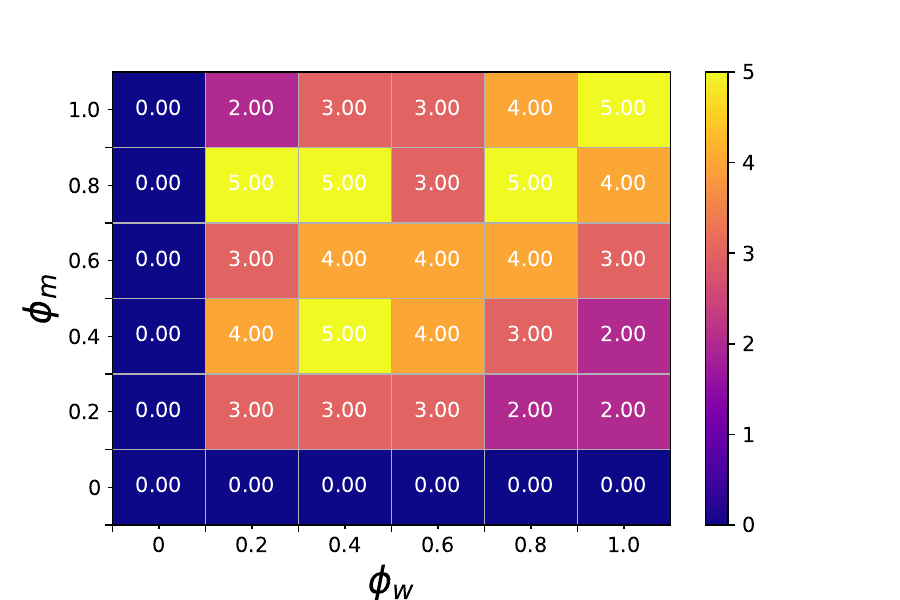} 
           % \caption{Woman - len - avg}

    \caption{Length of inconspicuous dynamic in woman self-manipulation games as a function of dispersion parameters $(\phi_w, \phi_m)$. 
    }
    \label{fig:length_phi}
\end{figure}

\paragraph{Welfare of agents.} 

First, we plot the welfare of the best-off woman and the worst-off man in the accomplice and woman self-manipulation games in \cref{fig:accomplice_welfare} and \cref{fig:women_welfare} respectively.
 \begin{figure}[h!] 
     \centering 
     \begin{subfigure}{0.38\textwidth}
         \centering
         \includegraphics[width=\textwidth]{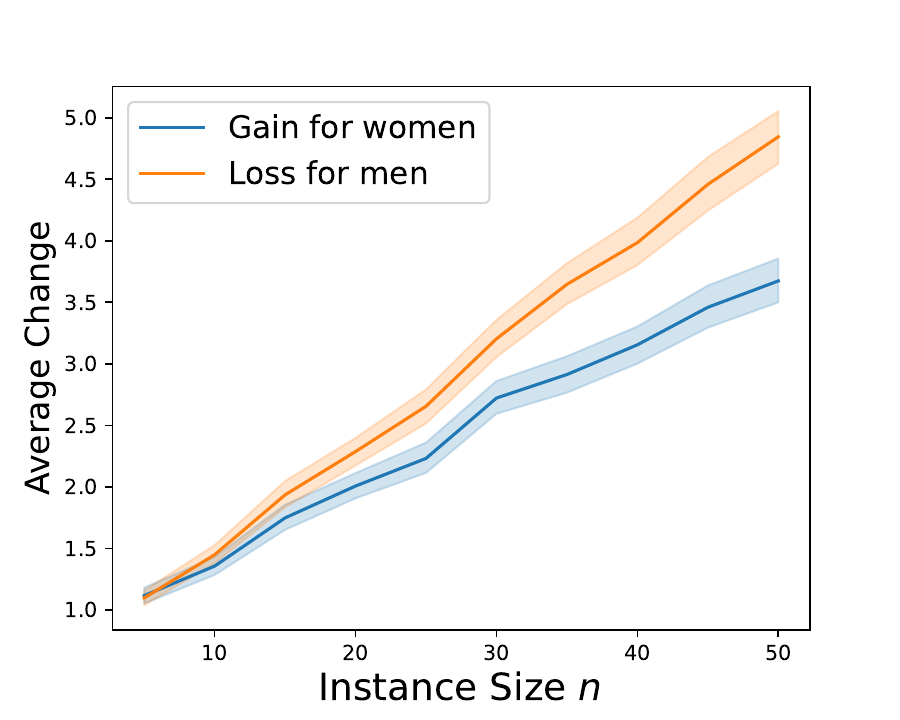} 
        %  \caption{Change for best-off/worst-off agent}
        % \label{fig:woman_n_benefit}
     \end{subfigure}
     %\hfill
     \begin{subfigure}{0.38\textwidth} 
         \centering
         \includegraphics[width=\textwidth]{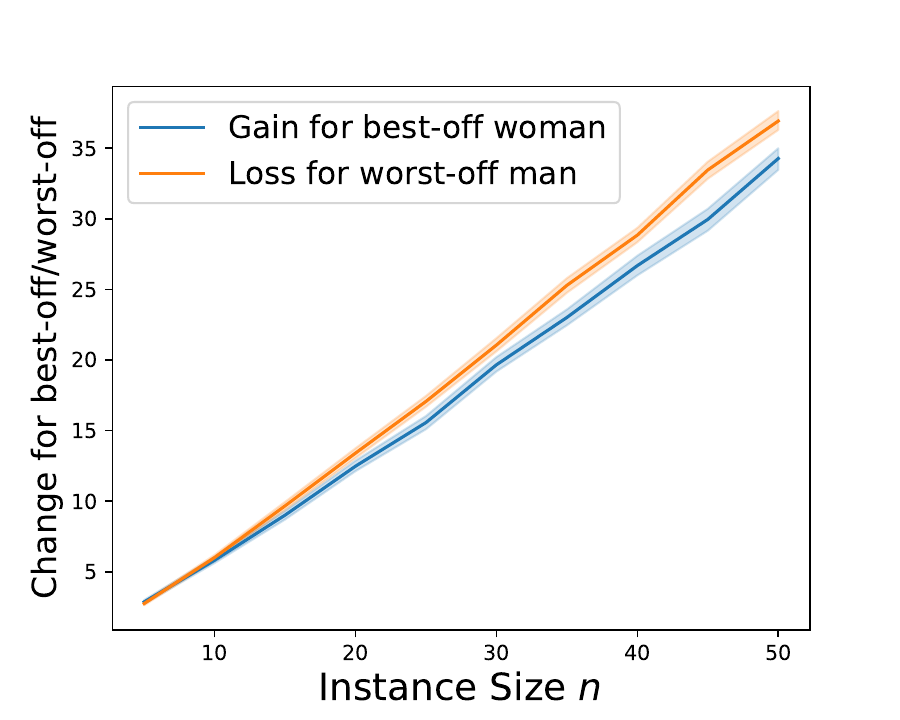} 
        % \caption{Average change}
        %  \label{fig:accomplice_n_benefit}
     \end{subfigure}
    
     \caption{Accomplice manipulation game: Women's gain and men's loss in $\NE$ profile as a function of the instance size. The left graph plots the gain for the average change, whereas the one on the right considers the change for the best-off and worst-off agent. }
     \label{fig:accomplice_welfare}
 \end{figure}

  \begin{figure}[h!] 
      \centering 
      \begin{subfigure}{0.38\textwidth} 
          \centering
          \includegraphics[width=\textwidth]{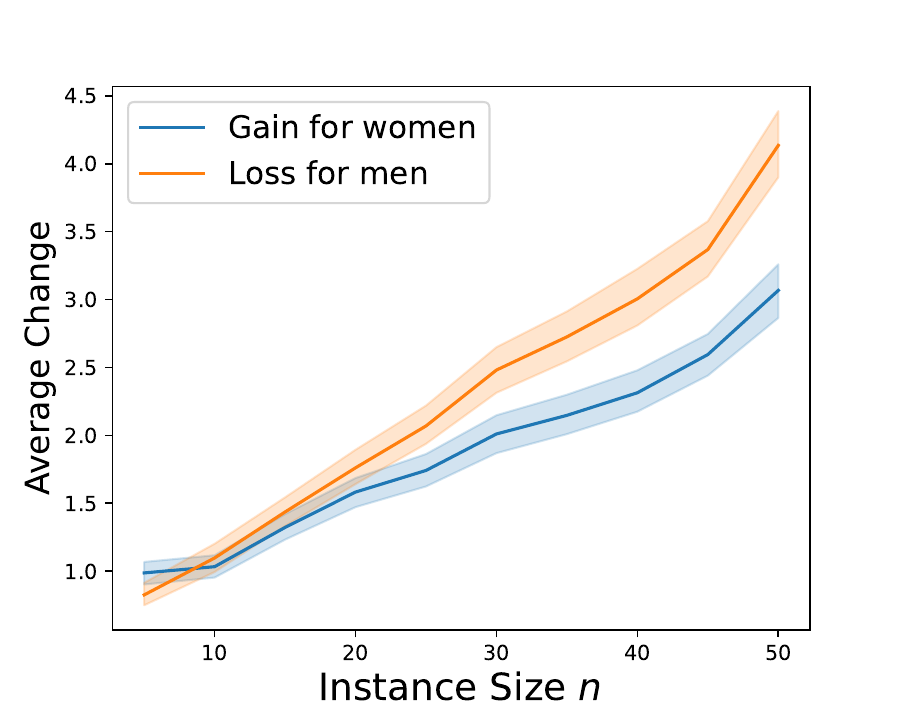} 
         % \caption{Average change}
         %  \label{fig:accomplice_n_benefit}
      \end{subfigure}
      %\hfill
      \begin{subfigure}{0.38\textwidth}
          \centering
          \includegraphics[width=\textwidth]{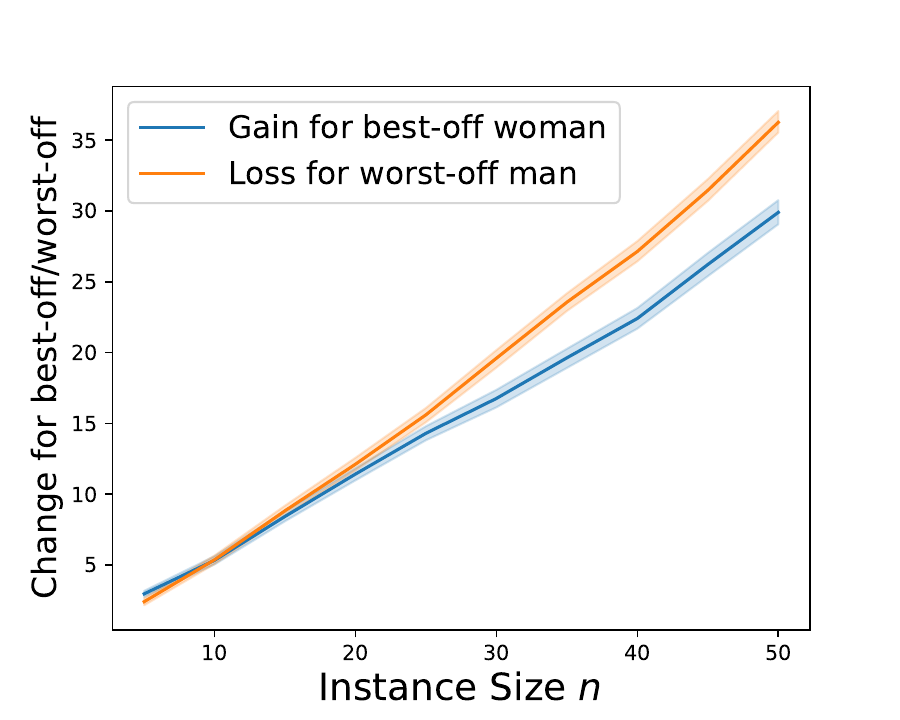} 
         %  \caption{Change for best-off/worst-off agent}
         % \label{fig:woman_n_benefit}
      \end{subfigure}
    
      \caption{Woman self-manipulation game: Women's gain and men's loss in $\NE$ profile as a function of the instance size. The left graph plots the gain for the average change, whereas the one on the right considers the change for the best-off and worst-off agent. }
      \label{fig:women_welfare}
  \end{figure}

% \begin{figure}
%     \centering
%     \includegraphics[width=0.77\linewidth]{net_max_change.pdf}
%     \caption{Difference between the welfare of the best-off and worst-off agents in the $\NE$ for the accomplice and woman self-manipulation games. \shraddha{I think we should remove this graph.}}
%     \label{fig:net_max_welfare}
% \end{figure}

Finally, the plots for the average gain by women in $\NE$ of the accomplice and self-manipulation game, as a function of correlated preferences, are presented in \cref{fig:avg_gain_mallows}.

 \begin{figure}[h!] 
      \centering 
      \begin{subfigure}{0.4\textwidth} 
          \centering
          \includegraphics[width=\textwidth]{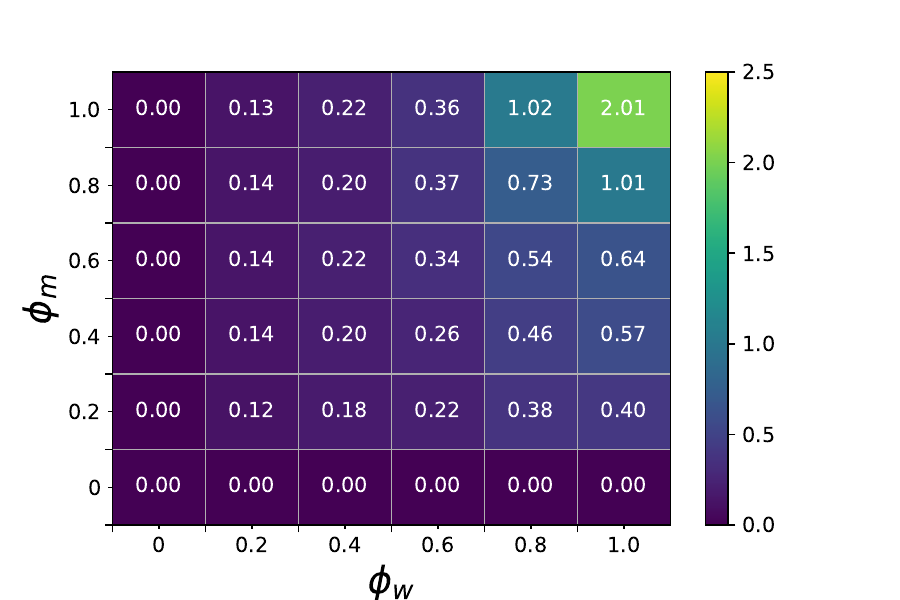} 
         % \caption{Average change}
         %  \label{fig:accomplice_n_benefit}
      \end{subfigure}
      %\hfill
      \begin{subfigure}{0.4\textwidth}
          \centering
          \includegraphics[width=\textwidth]{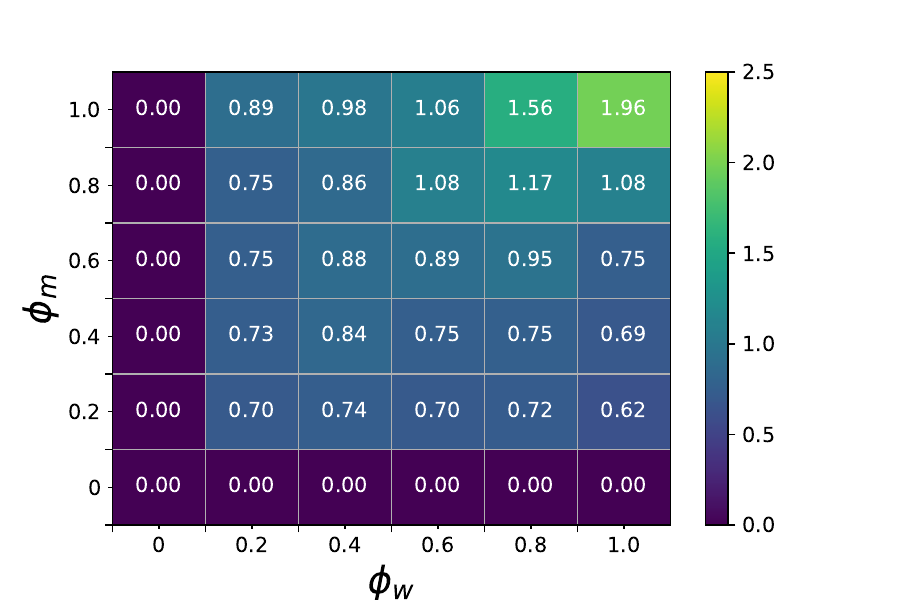} 
         %  \caption{Change for best-off/worst-off agent}
         % \label{fig:woman_n_benefit}
      \end{subfigure}
    
      \caption{Average gain by women in $\NE$ as a function of the dispersion parameters $(\phi_w, \phi_m)$. The plot on the left is for the accomplice manipulation game whereas the plot on the right is for the self-manipulation game.}
      \label{fig:avg_gain_mallows}
 \end{figure}

 % \begin{figure}
 %     \centering
 %     \includegraphics[width=0.77\linewidth]{avg_benefit_mallow -- woman.pdf}
 %     \caption{Average gain by women in the woman self-manipulation game as a function of the dispersion parameters $(\phi_w, \phi_m)$.}
 %     \label{fig:w_mallow_benefit}
 % \end{figure}
\end{document}